\documentclass{article}
\usepackage{geometry}
\geometry{
    a4paper,
    total={170mm,257mm},
    left=20mm,
    top=20mm,
    }
\usepackage{amsmath,amsfonts, amsthm}
\usepackage{algorithm}
\usepackage{hyperref}
\usepackage{algcompatible}
\usepackage{algpseudocode}
\usepackage{array} 
\usepackage{textcomp}
\usepackage{stfloats}
\usepackage{url}
\usepackage{verbatim} 
\usepackage{graphicx} 
\usepackage{MnSymbol,bbding,pifont}
\usepackage{color}
\usepackage{multirow} 
\usepackage{booktabs}
\newcolumntype{P}[1]{>{\centering\arraybackslash}p{#1}}
\usepackage{stmaryrd}
\usepackage[sort]{cite}
\usepackage[normalem]{ulem}
\usepackage{subfigure}
\usepackage{lipsum}
\usepackage{amsfonts}
\usepackage{graphicx}
\usepackage{epstopdf}
\usepackage{color, colortbl}
\usepackage{times}
\definecolor{LightCyan}{rgb}{0.88,1,1}
\definecolor{gray}{rgb}{0.925,0.925,0.925}
\makeatletter
\newcommand{\pushright}[1]{\ifmeasuring@#1\else\omit\hfill$\displaystyle#1$\fi\ignorespaces}
\newcommand{\pushleft}[1]{\ifmeasuring@#1\else\omit$\displaystyle#1$\hfill\fi\ignorespaces}
\makeatother

%% MACROS

\newcommand{\TT}{\mathsf{T}}

\newcommand{\norm}[1]{\left\Vert{#1}\right\Vert}
\newcommand{\bb}[1]{\mathbb{#1}}

\newcommand{\bd}[1]{\mathbf{#1}}
\newcommand{\bld}[1]{\boldsymbol{#1}}
\newcommand{\cl}[1]{\mathcal{#1}}

\newtheorem{theorem}{Theorem}
\newtheorem{proposition}{Proposition}
\newtheorem{corollary}{Corollary}
\newtheorem{lemma}{Lemma}
\newtheorem{definition}{Definition}

\title{Tight-frame-like Analysis-Sparse Recovery Using \\Non-tight Sensing Matrices}
\author{Nareddy  Kartheek Kumar Reddy, Abijith Jagannath Kamath, and  Chandra Sekhar Seelamantula \\ Department of Electrical Engineering, Indian Institute of Science, Bengaluru 560 012, India \\ Email:\{nareddyreddy, abijithj, css\}@iisc.ac.in}
\date{}    

\begin{document}

\maketitle
% ----------------------------------------
% ----------------------------------------
% 	             Abstract
% ----------------------------------------
% ----------------------------------------
\begin{abstract} 
The choice of the sensing matrix is crucial in compressed sensing.
Random Gaussian sensing matrices satisfy the \emph{restricted isometry property}, which is crucial for solving the sparse recovery problem using convex optimization techniques. However, \emph{tight-frame sensing matrices} result in minimum mean-squared-error recovery given \emph{oracle} knowledge of the support of the sparse vector. 
\emph{If the sensing matrix is not tight, could one achieve the recovery performance assured by a tight frame by suitably designing the recovery strategy?} --- This is the key question addressed in this paper.  
{\color{black} We consider the analysis-sparse $\ell_1$-minimization problem with a generalized $\ell_2$-norm-based data-fidelity and show that it effectively corresponds to using a \emph{tight-frame} sensing matrix.}
The new formulation offers improved performance bounds when the number of non-zeros is large.
{\color{black} One could develop a tight-frame variant of a known sparse recovery algorithm using the proposed formalism. We solve the analysis-sparse recovery problem in an unconstrained setting using proximal methods.}
Within the \emph{tight-frame sensing framework}, we rescale the gradients of the data-fidelity loss in the iterative updates to further improve the accuracy of analysis-sparse recovery. Experimental results show that the proposed algorithms offer superior analysis-sparse recovery performance. Proceeding further, we also develop deep-unfolded variants, with a convolutional neural network as the sparsifying operator. On the application front, we consider compressed sensing image recovery. Experimental results on Set11, BSD68, Urban100, and DIV2K datasets show that the proposed techniques outperform the state-of-the-art techniques, with performance measured in terms of peak signal-to-noise ratio and structural similarity index metric.
\end{abstract}
\par{\bf Keywords:}  Sparse signal recovery, tight frames, compressed sensing image recovery, restricted isometry property, deep-unfolded networks, back-projection loss.
% ----------------------------------------
% ----------------------------------------
% 	             Introduction
% ----------------------------------------
% ----------------------------------------
\section{Introduction}
 
Compressed sensing (CS) \cite{donoho2006cs, candes2008cs, baraniuk2007compressive, blu2008sparse, kutyniok2013theory, candes} allows for the recovery of signals using fewer measurements than required in the classical sampling paradigm. This is achieved by exploiting the inherent structure of the signal, such as its sparsity in a suitable bases. An important milestone has been the development of convex optimization techniques for efficiently solving sparse recovery problems \cite{tibshirani96regression,candes2008cs, daubechies2004iterative,beck2009fast, yin2008bregman, zhang2010bregmanized, aujol2015stability, ji2008bayesian, goldstein2009split}. Cand\`es {\it et al.} consider the convex $\ell_1$-norm minimization problem, and provide an upper bound on the $\ell_2$ error between the recovered signal and the ground truth, which guarantees accurate recovery when the signals are sparse in overcomplete dictionaries \cite{candes_analysis}. Compressed sensing has found numerous applications in medical imaging \cite{deep_ADMM_Net}, compressed sensing image recovery (CS-IR) \cite{dong2014compressive}, signal and image processing \cite{elad2010}, optical-coherence tomography \cite{Mukharjee2014,kamilov2016optical}, seismic imaging \cite{mache2021durin,mache2021nuspan,yuan2013spectral, mache2023introducing}, snapshot imaging \cite{snapshot}, magnetic resonance imaging \cite{vaswani2009}, unlimited sampling \cite{rudresh2018wavelet}, single-pixel camera \cite{duarte2008single}, radar imaging \cite{herman2009high, potter2010sparsity, de2019compressed, rossi2014spatial}, etc.\\
% --------------------------------------------------------------------------------------------------------------------------------------------------------------------------------
{\color{black}
\indent In analysis-sparse signal recovery, one considers the recovery of a vector $\bld x$ from compressed linear measurements of the type
\begin{equation}\label{eq:noise_model_vanilla}
\bld y = \bd A \bld x^* + \bld w,
\end{equation}
where $\bd A\in\bb R^{m\times n}, \; m \ll n$ denotes the \emph{sensing matrix}, and $\bld w\in\bb R^m$ denotes the measurement noise. The linear system of equations is underdetermined, which makes the problem ill-posed. Additional information of the structure of the signal is required for recovery, and we consider the signal $\bld x^*$ to be a sparse linear combination of atoms of a dictionary $\bd D$.
In applications such as compressed sensing image recovery, the signal of interest can be \emph{sparsified} using a suitable transformation, which results in the so-called {\it analysis-sparse recovery} problem \cite{candes_analysis}. Stated mathematically, the ground-truth signal $\bld x^*$ is not sparse in the canonical basis, but it admits the representation $\bld x^* = \bd D \bld\alpha^*$, where $\bd D$ is an overcomplete dictionary and $\bld\alpha^*$ is sparse in the canonical basis. This representation considers a linear sparsifying transform $\bd D$, but one could also consider a nonlinear operator, for instance, a neural network. The recovery of $\bld x^*$ is of greater interest than the uniqueness of $\bld\alpha^*$, and subsequently, the recovery guarantees on $\bld x^*$ depend on the properties of the sensing matrix $\bd A$ rather than the combined linear transformation $\bd A\bd D$ \cite{candes_analysis}. If the columns of $\bd D$ constitute a tight frame, i.e., $\bd D\bd D^\TT = \bd I$, then the analysis-sparse recovery problem is posed as the $\ell_1$-norm minimization problem:
\begin{equation}\label{eq:P1-analysis}
   \underset{\boldsymbol{x}\in\bb R^n}{\text{minimize }} \| \bd D^\TT \boldsymbol{x}\|_1, \ \text{subject to }\ \| \mathbf{A}\boldsymbol{x} - \boldsymbol{y} \|_{2} \leq \epsilon,
\end{equation}
where $\epsilon>0$ is an upper bound on the $\ell_2$ data fidelity. In an unconstrained setting, the optimization problem can be written as 
\begin{equation} \label{eq:unconstrained-P1-analysis}
    \underset{\boldsymbol{x}\in\bb R^n}{\text{minimize }} \frac{1}{2}\norm{\bd A\bld x - \bld y}_{2}^2 + \lambda\|\bd D^\TT\bld x\|_1.
\end{equation}
{\color{black} The minimizer of the unconstrained problem for regularization parameter $\lambda>0$ is also the minimizer of the constrained problem in \eqref{eq:P1-analysis} \cite[Proposition~3.2]{foucart2013mathematical}.}\\ 
\indent Sensing matrices with a low coherence guarantee robust sparse recovery \cite{incoherence1,incoherence2,tropp_greed,liu2019alista,elad_opt_proj}. Although random Gaussian sensing matrices are relatively easy to construct and also satisfy the restricted isometry property (RIP), tight-frame sensing matrices offer the minimum mean-squared error (MSE) \cite{chen2013projection,UNTF_sensing}, and are a favorite choice in several applications \cite{tropp_structured_TF,steiner_TF,PISTA_MRI,UNTF_CS}. In this paper, our objective is to perform sparse recovery with an arbitrary sensing matrix, such as the Gaussian sensing matrix, but with \emph{tight-frame-like} recovery performance.\par}
%------------------------------------------------------------------------------------------------------------------------------------------------------------------
{\color{black}Canonical-sparse recovery can be thought of as a special case of analysis-sparse recovery, where the dictionary $\bd D$ is the identity matrix $\bd I$.
Several algorithms have been proposed in the literature for solving the canonical-sparse recovery problem --- greedy methods such as Dantzig selector \cite{candes_dantzing}, orthogonal matching pursuit \cite{mallat1993matching, tropp2007omp}, CoSAMP \cite{needell2009cosamp, baraniuk2010model}; iterative methods such as LASSO algorithm \cite{tibshirani96regression} and basis pursuit denoising \cite{chen2001atomic} for solving \eqref{eq:P1-analysis}; proximal gradient methods such as iterative shrinkage thresholding algorithm (ISTA) \cite{daubechies2004iterative}, its fast variant FISTA \cite{beck2009fast}, approximate message passing (AMP) \cite{donoho2009message, rangan2019vector} for solving \eqref{eq:unconstrained-P1-analysis}. Nonconvex penalties (in place of $\ell_1$) have also been considered to obtain more accurate estimates of the sparse vector \cite{bredies2008iterated, foucart2009sparsest,fan2001variable,zhang2010nearly,chouzenoux2013mm, yin2015minimization, selesnick2017sparse, praveen2019firmnet}. With the recent advent of deep learning, deep-unfolding methods that employ neural networks \cite{gregor2010lista,monga2021algorithm,LAMP,TISTA,eldar1,eldar2,eldar4,eldar5, liu2019alista} have been developed to solve the sparse recovery problem.}
%------------------------------------------------------------------------------------------------------------------------------------------------------------------
\subsection{Notations}
\label{subsec:notations}
Boldface lowercase symbols represent vectors and boldface uppercase symbols denote matrices. $\bld 0$ is the null vector and $\bld 1$ is a vector of all ones. The symbol $x_i$ denotes the $i$\textsuperscript{th} element of the vector $\bld x$, and $A_{ij}$ refers to the $(i,j)$\textsuperscript{th} element of the matrix $\bld A$. The $\ell_1$-norm of the vector $\bld x$ is denoted by $\norm{\bld x}_1$. $(\bld x)_s\in\bb R^n$ is a vector comprising the $s$ largest coefficients (in magnitude) of $\bld x \in \bb R^n$:
\begin{equation} \label{eq:x_s}
	(\bld x)_s\in\arg \min_{\norm{\bld z}_0 \leq s} \norm{\bld x - \bld z}_2.
\end{equation}
A diagonal matrix constructed using a vector $\bld v$ is denoted by $\text{diag}(\bld v)$, and element-wise product (Hadamard product) is denoted by $\odot$. The operators $\text{sgn}, \text{min}, \text{max}$, and $|\cdot|$ refer to the signum, minimum, maximum, and modulus, respectively, and when the arguments are vectors or matrices, the operators must be interpreted in the element-wise sense.
%------------------------------------------------------------------------------------------------------------------------------------------------------------------
\subsection{Recovery Guarantees}
\label{subsec:recovery_gaurantees}
{\color{black}The recovery guarantees of an analysis-sparse vector from undersampled measurements depend on the properties of the sensing matrix $\bd A$. Although it is not possible to perfectly recover all sparse vectors from noisy compressed measurements, one can provide upper bounds on the $\ell_2$ recovery error. These bounds depend on dictionary-restricted isometry constant (D-RIC) of the sensing matrix, sparsity of the signal, and the measurement noise level. We recall the definitions of the RIP \cite{candes} and dictionary-RIP (D-RIP) next \cite{foucart2013mathematical}.
\begin{definition}[Restricted Isometry Property \cite{candes}]
For a sensing matrix $\bd A\in\bb R^{m\times n}$, the s-restricted isometry constant $\delta_s$ of $\bd A$ is the smallest quantity in $(0,1)$ such that
\begin{equation}\label{def:RIP}
	\left( 1 - \delta_s \right) \| \bld x\|_2^2 \leq \| \bd A \bld x\|_2^2 \leq \left( 1 + \delta_s \right) \| \bld x\|_2^2,
\end{equation}
for all $\bld x\in\bb R^n$ such that $\norm{\bld x}_0\leq s$.
\end{definition}
%------------------------------------------------------------------------------------------------------------------------------------------------------------------
\begin{definition}[Dictionary RIP \cite{candes_analysis}]
For a sensing matrix $\bd A\in\bb R^{m\times n}$, and a dictionary $\bd D\in\bb R^{n \times d}$, the s-restricted isometry constant ${\delta}_s$ of $\bd A$ adapted to $\bd D$, is the smallest quantity in $(0,1)$ such that
	\begin{equation}\label{eq:RIP-mod}
		\left(1 - {\delta}_s\right)\| \bd D \bld\alpha\|_2^2 \leq \| \bd A  \bd D \bld\alpha\|_2^2 \leq \left(1+{\delta}_s\right)\|\bd D \bld\alpha\|_2^2
	\end{equation}
for all $\bld \alpha\in\bb R^d$ such that $\norm{\bld \alpha}_0\leq s$.
\end{definition}
The sensing matrix $\bd A$ is said to satisfy D-RIP if $\delta_s$ is sufficiently small for large values of $s$. Random Gaussian matrices satisfy the D-RIP with a high probability provided that $m = \cl O(s \log\frac ds) $ \cite{candes_analysis}. A sparse signal $\bld x$ may be measured using a Gaussian sensing matrix to obtain $\bld y$ and recovered accurately with a high probability when the number of measurements is proportional to the level of sparsity in $\bld x$.
%------------------------------------------------------------------------------------------------------------------------------------------------------------------
Cand\`es {\it et al.} \cite{candes_analysis} provide a bound on the recovery performance when $\bd D$ is a tight frame. We recall the following result from \cite{candes_analysis}.
\begin{theorem}[Performance bounds \cite{candes_analysis}] 
Let $\bd D\in\bb R^{n \times d}$ be a tight frame and $\bd A\in\bb R^{m\times n}$ be a sensing matrix satisfying D-RIP with ${\delta}_{2s} < 0.08$. The minimizer $\hat{\bld x}$ of \eqref{eq:P1-analysis} satisfies 
\begin{equation}
	\| \hat{\bld x} - \bld x^* \|_2 \leq C_0 \epsilon + C_1 \frac{\| \bd D^\TT \bld x^* - \left( \bd D^\TT \bld x^* \right)_s \|_1}{\sqrt{s}},
\end{equation}
where the constants $C_0$ and $C_1$ may only depend on ${\delta}_{2s}$, and where $\left( \bd D^\TT \bld x^* \right)_s$ is the vector comprising the $s$ largest entries of $\bd D^\TT \bld x^*$ in magnitude as in \eqref{eq:x_s}.
\label{thm:mse_candes}
\end{theorem}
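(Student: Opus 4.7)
The plan is to adapt the classical ``tube plus cone'' argument of Candès--Romberg--Tao to the analysis setting, exploiting the tight-frame identity $\bd D \bd D^\TT = \bd I$ throughout. Set $\bld h = \hat{\bld x} - \bld x^*$. Since the noise satisfies $\norm{\bld w}_2 \leq \epsilon$, the ground truth $\bld x^*$ is feasible for \eqref{eq:P1-analysis}, so the triangle inequality applied to $\norm{\bd A \hat{\bld x} - \bld y}_2$ and $\norm{\bd A \bld x^* - \bld y}_2$ yields the tube bound $\norm{\bd A \bld h}_2 \leq 2\epsilon$.

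Next, I would let $T_0$ index the $s$ largest-magnitude entries of $\bd D^\TT \bld x^*$ and write $e_0 = \norm{\bd D^\TT \bld x^* - (\bd D^\TT \bld x^*)_s}_1$, which is the numerator of the second term in the theorem. The $\ell_1$-optimality inequality $\norm{\bd D^\TT \hat{\bld x}}_1 \leq \norm{\bd D^\TT \bld x^*}_1$, split coordinate-wise on $T_0$ and $T_0^c$, gives the cone inequality
\begin{equation*}
\norm{(\bd D^\TT \bld h)_{T_0^c}}_1 \leq \norm{(\bd D^\TT \bld h)_{T_0}}_1 + 2 e_0.
\end{equation*}
I would then partition $T_0^c$ into disjoint blocks $T_1, T_2, \ldots$ of size $s$ in decreasing order of $|\bd D^\TT \bld h|$; the standard sorting estimate gives $\sum_{j \geq 2}\norm{(\bd D^\TT \bld h)_{T_j}}_2 \leq s^{-1/2}\norm{(\bd D^\TT \bld h)_{T_0^c}}_1$, which combined with the cone inequality and $\norm{\cdot}_1 \leq \sqrt{s}\,\norm{\cdot}_2$ on blocks of size $s$ bounds the tail by $\norm{(\bd D^\TT \bld h)_{T_0 \cup T_1}}_2 + 2 e_0/\sqrt{s}$.

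The crux is then to convert these analysis-domain estimates into a bound on $\norm{\bld h}_2$, which I would do via the tight-frame identity: $\bld h = \bd D \bd D^\TT \bld h$, so $\bd A \bld h = \bd A \bd D (\bd D^\TT \bld h)$. Applying D-RIP to the $2s$-sparse coefficient vector $(\bd D^\TT \bld h)_{T_0 \cup T_1}$ gives a lower bound for $\norm{\bd A \bd D (\bd D^\TT \bld h)_{T_0\cup T_1}}_2$, while the polarization form of D-RIP bounds each cross term $|\langle \bd A \bd D (\bd D^\TT \bld h)_{T_0 \cup T_1},\, \bd A \bd D (\bd D^\TT \bld h)_{T_j} \rangle|$ by $\delta_{2s}\, \norm{\bd D (\bd D^\TT \bld h)_{T_0 \cup T_1}}_2\, \norm{\bd D (\bd D^\TT \bld h)_{T_j}}_2$ since the supports are disjoint and each of size at most $s$. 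Writing $\bd A \bd D (\bd D^\TT \bld h)_{T_0\cup T_1} = \bd A \bld h - \sum_{j \geq 2} \bd A \bd D (\bd D^\TT \bld h)_{T_j}$, expanding $\norm{\bd A \bd D (\bd D^\TT \bld h)_{T_0\cup T_1}}_2^2$, and using the tube bound $\norm{\bd A \bld h}_2 \leq 2\epsilon$ yields a self-bounding inequality for $\norm{\bd D (\bd D^\TT \bld h)_{T_0\cup T_1}}_2$ of the form $\alpha \cdot \norm{\bd D (\bd D^\TT \bld h)_{T_0\cup T_1}}_2 \leq 2\epsilon + \beta\, e_0/\sqrt{s}$ with $\alpha, \beta$ depending only on $\delta_{2s}$; the threshold $\delta_{2s} < 0.08$ is calibrated so that $\alpha > 0$.

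The main obstacle, and the reason the tight-frame hypothesis is essential, is that $\bd D^\TT \bld h$ need not itself be sparse even when $\bd D^\TT \bld x^*$ is compressible; D-RIP only furnishes an isometry along the images under $\bd D$ of $s$-sparse coefficient vectors, never on $\bld h$ directly. One must therefore execute the entire geometry on the sparse blocks $(\bd D^\TT \bld h)_{T_j}$ and reassemble using $\bd D \bd D^\TT = \bd I$; tracking the ``leakage'' between the frame-projected blocks (the cross terms), so that they can be absorbed into the $(1 - \delta_{2s})$ factor without blowing up the constant, is the delicate part of the calculation. Once this self-bounding inequality is solved, the theorem follows from the triangle inequality $\norm{\bld h}_2 = \norm{\bd D \bd D^\TT \bld h}_2 \leq \norm{\bd D (\bd D^\TT \bld h)_{T_0 \cup T_1}}_2 + \sum_{j\geq 2}\norm{\bd D(\bd D^\TT \bld h)_{T_j}}_2$, with $C_0, C_1$ depending only on $\delta_{2s}$.
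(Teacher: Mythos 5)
Your outline reproduces the correct skeleton (tube constraint, cone constraint, block decomposition of $T_0^c$, reassembly via $\bd D\bd D^\TT=\bd I$), and those pieces match the paper's Lemmas~\ref{lem:cone}, \ref{lem:bound_tail}, and \ref{lem:tube}. The gap is exactly in the step you flag as delicate: the cross-term bound $\bigl|\bigl\langle \bd A\bd D(\bd D^\TT\bld h)_{T_0\cup T_1},\,\bd A\bd D(\bd D^\TT\bld h)_{T_j}\bigr\rangle\bigr|\leq\delta_{2s}\,\norm{\bd D(\bd D^\TT\bld h)_{T_0\cup T_1}}_2\norm{\bd D(\bd D^\TT\bld h)_{T_j}}_2$ does not follow from D-RIP. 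Polarization only controls the \emph{deviation} $\bigl|\langle\bd A\bd D\bld\alpha,\bd A\bd D\bld\beta\rangle-\langle\bd D\bld\alpha,\bd D\bld\beta\rangle\bigr|$ for disjointly supported sparse $\bld\alpha,\bld\beta$. In the canonical case one has $\langle\bld\alpha,\bld\beta\rangle=0$, so the cross term is $O(\delta)$; but for an overcomplete $\bd D$ the vectors $\bd D\bld\alpha$ and $\bd D\bld\beta$ are generally far from orthogonal, so $\langle\bd D\bld\alpha,\bd D\bld\beta\rangle$ is an $O(1)$ quantity that cannot be absorbed into the $(1-\delta_{2s})$ factor; your self-bounding inequality never closes. (A secondary issue: the combined support of $(\bd D^\TT\bld h)_{T_0\cup T_1}$ and $(\bd D^\TT\bld h)_{T_j}$ has size $3s$, so even a valid restricted-orthogonality estimate would involve $\delta_{3s}$, not $\delta_{2s}$.)

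The argument the paper adapts in Appendix~\ref{appendix:mse_bounds} (mirroring Cand\`es {\it et al.}) avoids cross terms entirely. It lower-bounds $\norm{\bd A\bd D_{T_{01}}\bd D_{T_{01}}^\TT\bld h}$ by the triangle inequality $\norm{\bd A\bd D_{T_{01}}\bd D_{T_{01}}^\TT\bld h}\geq\norm{\bd A\bld h}-\sum_{j\geq2}\norm{\bd A\bd D_{T_j}\bd D_{T_j}^\TT\bld h}$, applying only the two-sided D-RIP to each block separately (Lemma~\ref{lem:consequence}). It then needs a second ingredient absent from your outline: the identity $\norm{\bld h}_2^2=\langle\bd D\bd D^\TT\bld h,\bld h\rangle=\langle\bd D_{T_{01}}\bd D_{T_{01}}^\TT\bld h,\bld h\rangle+\norm{\bd D_{T_{01}^c}^\TT\bld h}_2^2$, which converts the control of $\norm{\bd D_{T_{01}}\bd D_{T_{01}}^\TT\bld h}_2$ into control of $\norm{\bld h}_2$ (Lemma~\ref{lem:bound_error}); your triangle-inequality reassembly of $\bld h=\bd D\bd D^\TT\bld h$ is not a substitute, since it still rests on the unproved cross-term estimate. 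Finally, the blocks must be taken of size $M=6s$ (not $s$), with $\delta_{7s}\leq 7\delta_{2s}$ invoked to make the constants positive under $\delta_{2s}<0.08$; with blocks of size $s$ the constants would not close under that hypothesis.
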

The exact values of the constants $C_0$ and $C_1$ can be calculated using the formulas given in the proof of Theorem~\ref{thm:mse_candes} \cite{candes_analysis}. Solving the optimization problem in \eqref{eq:P1-analysis} results in exact recovery in the absence of observation noise provided that the ground-truth signal $\bld x^*$ is exactly $s$-sparse in $\bd D$, i.e.,  $\bd D^\TT \bld x^* = \left( \bd D^\TT \bld x^* \right)_s$.
%------------------------------------------------------------------------------------------------------------------------------------------------------------------
\subsection{Contributions of This Paper}
\label{subsec:contributions}
{\color{black}The key observation that forms the basis for this paper is that solving the system of equations $\bld y = \bd A\bld x$ is equivalent to solving $\bd B\bld y = \bd B\bd A\bld x$, if $\bd B$ is an invertible matrix, with the effective sensing matrix now becoming $\bd{BA}$ and the effective measurement becoming $\bd B\bld y$. The new system of equations can have advantages over the original one if $\bd B$ is chosen appropriately --- this is the crux of our contribution. In the backdrop of this argument, we consider the generalized $\ell_2$-norm-based formulation:
\begin{equation} \tag{P1} \label{eq:P1-B}
\underset{\boldsymbol{x}\in\bb R^n}{\text{minimize }} \| \bd D^\TT \boldsymbol{x}\|_1, \ \text{subject to }\ \| \mathbf{A}\boldsymbol{x} - \boldsymbol{y} \|_{\bd B} \leq \epsilon,
\end{equation}
where $\norm{\cdot}_{\bd B} = \norm{\bd B (\cdot) }_2$ denotes the $\bd B$-norm \cite{horn2012matrix}, $\bd B\in\bb R^{m\times m}$ being a positive-definite matrix. We show that this generalized formulation effectively corresponds to using a tight-frame sensing matrix $(\bd B \bd A)$ by a careful choice of $\bd B$, even if $\bd A$ is not tight. A consequence is that the $\ell_2$-error in the recovery can be lowered without having to design or hand-craft a tight-frame sensing matrix. We then develop and derive corresponding performance bounds (Section~\ref{subsec:performance_bounds}) on the minimizer of the proposed formulation. \eqref{eq:P1-B} can be transformed into an unconstrained problem, which can be solved iteratively using proximal gradient methods. We also propose rescaled gradient versions of these algorithms, which show further performance improvements. Simulation results demonstrate superior recovery accuracy compared to the benchmark techniques (Section~\ref{sec:experiemental_validation}). Proceeding further, we consider the scenario where the sparsifying transform is a deep neural network and present an application to compressed sensing image recovery (CS-IR). The reconstructed image quality obtained using the proposed methods surpasses that of the benchmark CS-IR methods in terms of the standard performance metrics, viz. peak signal-to-noise ratio (PSNR) and structural similarity (SSIM) (Section~\ref{sec:image_recovery}).}
% ------------------------------------------------------------------------------------------------------------------------------------------------------------
% ------------------------------------------------------------------------------------------------------------------------------------------------------------
% 	      									Tight Frames
% ------------------------------------------------------------------------------------------------------------------------------------------------------------
% ------------------------------------------------------------------------------------------------------------------------------------------------------------
\section{Tight Frames}
A set of $n$ vectors $\{\bld v_i \in \bb R^m\}_{i=1}^{n}, \; n\geq m$, is said to constitute a \emph{frame} for $\bb R^m$ if the following partial energy equivalence holds \cite{kovacevic2007lifePart1,kovacevic2007lifePart2,vetterli2014fsp}:
\begin{equation}\label{eq:frame_condition}
    \alpha\norm{\bld x}_2^2 \leq \sum_{i=1}^n |\langle \bld x,\bld v_i\rangle|^2\leq \beta \norm{\bld x}_2^2, \forall \bld x\in\bb R^m,
\end{equation}
where $0<\alpha\leq\beta<\infty$. The largest such $\alpha$ and the smallest such $\beta$ are called the \emph{frame bounds}. For $n > m$, the set of frame vectors $\{\bld v_i \in \bb R^m\}_{i=1}^{n}$ is redundant/overcomplete. An overcomplete system provides more degrees of freedom than required for representing the signal. It allows for redundancy and robustness to perturbations, which is beneficial in various signal processing tasks such as compression, coding, denoising, and feature extraction \cite{aggelos2014UNTF, Tsiligianni2017approximate, cai2014applied, li2014compressed, strohmer2003grassmannian, casazza2006harmonic}.
The set of vectors $\{\bld v_i\}_{i=1}^n$ is said to constitute a \emph{tight frame} if $\alpha=\beta$, and a \emph{Parseval tight frame} if $\alpha=\beta=1$ \cite{kovacevic2007lifePart1,kovacevic2007lifePart2, casazza1999theart}. 
{\color{black} In terms of matrices, the columns of a matrix $\bd V \in \bb R^{m \times n}$ constitute a Parseval tight frame if $\bd V\bd V^\TT = \bd I\in\bb R^{m\times m}$. Effectively, $\bd V$ is the left-inverse of $\bd V^\TT$. This property is reminiscent of orthonormal matrices. However, it is important to note that the converse is not true, i.e., $\bd V^\TT \bd V \neq \bd I\in\bb R^{n\times n}$.}
% ------------------------------------------------------------------------------------------------------------------------------------------------------------
{\color{black} 
The choice of the sensing matrix plays an important role in the recovery of sparse signals in the compressed sensing setting. Incoherence between the columns of the sensing matrix and a smaller singular value spread are important from the perspective of signal recovery. The coherence of the matrix $\bd A = [\bld a_1\;\bld a_2\;\cdots\;\bld a_n] \in \bb R^{m\times n}$ is defined as
\begin{equation} \label{eq:welch_bound}
	\mu \stackrel{\text {def.}}{=} \max _{\substack{1 \leq i, j \leq n \\ i \neq j}} \frac{\left|\langle \bld a_i, \bld a_j\rangle\right|}{\left\|\bld a_i\right\|_2\left\|\bld a_j\right\|_2} \geq \sqrt{\frac{n-m}{m(n-1)}} = \mu_{\text{min}},
\end{equation}
where the inequality follows from the Welch bound \cite{foucart2013mathematical}.
Equiangular tight frames (ETFs) achieve the Welch bound and have a condition number of unity \cite{foucart2013mathematical}, making them ideal for compressed sensing \cite{elad2010}. The gram matrix of an ETF will have 1 along the diagonal, and the off-diagonal entries will all be equal to the Welch bound.
}
Bandeira {\it et al.} \cite{bandeira2013raod} showed that RIP constants of ETFs satisfy $\delta_s \leq (s-1)\mu_{\text{min}}$, where $\mu_\text{min}$ is the Welch bound given in \eqref{eq:welch_bound}.
However, the main challenge lies in the construction of ETFs. Further, ETFs may not even exist for arbitrary choices of $m$ and $n$ \cite{ conway1996packing, lemmens1973equi}. This hurdle is one reason why tight frames are not so prevalent in compressed sensing. Instead, random Gaussian matrices are preferred because they are relatively easy to construct and also satisfy the RIP with a very high probability \cite{foucart2013mathematical}. }
% ------------------------------------------------------------------------------------------------------------------------------------------------------------
% ------------------------------------------------------------------------------------------------------------------------------------------------------------
% 	      								PROBLEM FORMULATION
% ------------------------------------------------------------------------------------------------------------------------------------------------------------
% ------------------------------------------------------------------------------------------------------------------------------------------------------------
 \section{Problem Formulation}
\label{sec:problem_formulation}
Consider the analysis-sparse recovery problem:
\begin{equation}\label{eq:P1-TF}
   \underset{\boldsymbol{x}\in\bb R^n}{\text{minimize }} \| \bd D^\TT \boldsymbol{x}\|_1, \ \text{subject to }\ \| \mathbf{A}\boldsymbol{x} - \boldsymbol{y} \|_{\bd B} \leq \epsilon,
\end{equation}
where $\bd B = \left( \bd A \bd A^\TT \right)^{-\frac{1}{2}}$, $\bd D$ is tight, and $\epsilon > 0$ is small. The sensing matrix $\bd A\in \bb R^{m\times n}$ is assumed to have full row-rank and unit-$\ell_2$-norm columns. For the particular choice of $\bd B = \left( \bd A \bd A^\TT \right)^{-\frac{1}{2}}$, we have 
\begin{align*}
   f_{\bd B}\left(\bld x \right) \overset{\text{def.}}{=}\frac12\norm{\bld y-\bd A\bld x}_{\bd B}^2 = \frac12\norm{\bd A^\dagger(\bld y-\bd A\bld x)}_2^2,
\end{align*}
where $\bd A^\dagger = \bd A^\TT \left( \bd A \bd A^\TT\right)^{-1}$, which is the back-projection data-fidelity term introduced in \cite{tirer2020back, tirer2021convergence}. The constraint is convex, as it is the composition of an affine transformation with an $\ell_2$-norm ball, and the objective is a convex function. Therefore, \eqref{eq:P1-TF} is a convex optimization problem.
% ------------------------------------------------------------------------------------------------------------------------------------------------------------
\subsection{Effective Sensing Matrix}
\label{subsec:eff_sensing_matrix}
The proposed formulation is equivalent to \eqref{eq:P1-analysis} with the \emph{effective sensing matrix} $\bd B\bd A = \left( \bd A\bd A^\TT\right)^{-\frac{1}{2}} \bd A $ $\in\bb R^{m\times n}$, and measurements $\bd B\bld y$. It is important to note that $\bd B \bd A$ constitutes a Parseval tight frame because it satisfies the property: $\left( \bd B\bd A\right)\left( \bd B\bd A\right)^\TT = \bd I$. Chen {{\it et al.}} \cite{chen2013projection} showed that the recovery error in the standard CS setting is minimum when the sensing matrix is a Parseval tight frame.
The formulation in \eqref{eq:P1-TF} is as though a tight sensing matrix has been used, by virtue of using the $\bd B$-norm for the measurement constraint. This gives the best of both worlds --- the ease of constructing a sensing matrix by sampling from a Gaussian distribution $\cl N(\bd 0, m^{-1}\bd I)$, and the performance advantage that accrues from a tight-frame sensing matrix.
Some studies have aimed at obtaining a preconditioned linear systems with an effectively tight sensing matrix.
Tsiligianni {\it et al.} proposed an iterative technique to obtain a preconditioner that results in an approximate incoherent unit-norm tight-frame (UNTF) sensing matrix \cite{aggellos2015precondition}. Alemazkoor and Meidani proposed a similar approach, but with the added constraint of the preconditioner being close to the identity in Frobenius norm, with the objective of obtaining an effective sensing matrix that is close to an ETF \cite{meidani2018precondition}. Such methods for finding the preconditioner are iterative, and each iteration involves solving an optimization problem using conjugate gradient descent. On the other hand, the formulation considered in \eqref{eq:P1-TF} effectively results in a Parseval tight-frame sensing matrix, without requiring iterative optimization.\\
% ------------------------------------------------------------------------------------------------------------------------------------------------------------
{\color{black}\indent For a random Gaussian sensing matrix, the concentration of singular values is given by the following theorem:
\begin{theorem}[Concentration of singular values \cite{foucart2013mathematical}]\label{thm:foucart_singular_value_1}
Let $\bd A^\TT\in\bb R^{n\times m}$ be a Gaussian matrix with $m < n$, and let $\sigma_{\min}$ and $\sigma_{\max}$ denote the smallest and the largest singular values of the renormalized matrix $\frac{1}{\sqrt{n}} \bd A^\TT$. Then, for $t > 0 $
\begin{equation} \label{eq:concentration_inequality}
	\mathbb{P}\left(\sigma_{\max} \geq 1 + \sqrt{\frac{m}{n}}+ t \right) \leq e^{-nt^2/2} \quad \text{and} \quad \mathbb{P}\left(\sigma_{\min} \leq 1 - \sqrt{\frac{m}{n}}- t \right) \leq e^{-nt^2/2}.
\end{equation}
\end{theorem}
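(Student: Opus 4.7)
My plan is the classical Davidson--Szarek route, which combines Gaussian concentration of measure with Gordon's comparison inequality to handle the mean of the extreme singular values.

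First I would observe that the extreme singular values are 1-Lipschitz functions of the matrix entries with respect to the Frobenius norm, i.e., $|\sigma_{\max}(\bd M) - \sigma_{\max}(\bd N)| \leq \|\bd M - \bd N\|_F$ (this is just Weyl's inequality applied to singular values), and likewise for $\sigma_{\min}$. After renormalization by $1/\sqrt{n}$, the maps $\bd A^\TT \mapsto \sigma_{\max}(\bd A^\TT/\sqrt{n})$ and $\bd A^\TT \mapsto \sigma_{\min}(\bd A^\TT/\sqrt{n})$, viewed as functions on $\bb R^{nm}$, become $(1/\sqrt{n})$-Lipschitz. Since the entries of $\bd A^\TT$ are i.i.d. standard Gaussians, the Borell--Sudakov--Tsirelson Gaussian concentration inequality yields, for every $u > 0$,
\begin{equation*}
	\mathbb{P}\bigl(\sigma_{\max} \geq \mathbb{E}[\sigma_{\max}] + u\bigr) \leq e^{-nu^2/2}, \qquad \mathbb{P}\bigl(\sigma_{\min} \leq \mathbb{E}[\sigma_{\min}] - u\bigr) \leq e^{-nu^2/2}.
\end{equation*}

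Next I would control the means using Gordon's min--max theorem for Gaussian processes. Writing $\sigma_{\max}(\bd A^\TT/\sqrt{n}) = \frac{1}{\sqrt{n}}\sup_{\bld u \in S^{n-1}, \bld v \in S^{m-1}} \bld u^\TT \bd A^\TT \bld v$ and $\sigma_{\min}(\bd A^\TT/\sqrt{n}) = \frac{1}{\sqrt{n}}\inf_{\bld v \in S^{m-1}} \sup_{\bld u \in S^{n-1}} \bld u^\TT \bd A^\TT \bld v$ (using $m < n$ so that the smallest singular value is a genuine min--max), Gordon's inequality compares the Gaussian process $(\bld u, \bld v) \mapsto \bld u^\TT \bd A^\TT \bld v$ with the simpler process $(\bld u, \bld v) \mapsto \bld g^\TT \bld u + \bld h^\TT \bld v$, where $\bld g \in \bb R^n$ and $\bld h \in \bb R^m$ are independent standard Gaussian vectors. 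Taking expectations over spheres and using $\mathbb{E}\|\bld g\|_2 \leq \sqrt{n}$ and $\mathbb{E}\|\bld h\|_2 \leq \sqrt{m}$ gives the well-known bounds
\begin{equation*}
	\mathbb{E}[\sigma_{\max}] \leq 1 + \sqrt{m/n}, \qquad \mathbb{E}[\sigma_{\min}] \geq 1 - \sqrt{m/n}.
\end{equation*}

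Finally I would combine the two ingredients: from the Lipschitz concentration with $u = t$ and the mean bounds, the event $\{\sigma_{\max} \geq 1 + \sqrt{m/n} + t\}$ is contained in $\{\sigma_{\max} \geq \mathbb{E}[\sigma_{\max}] + t\}$, whose probability is at most $e^{-nt^2/2}$, and symmetrically for $\sigma_{\min}$, giving the claimed pair of inequalities. The main obstacle is the Gordon step: Gaussian concentration is straightforward once Lipschitzness is noted, but producing the sharp constants $1 \pm \sqrt{m/n}$ for the means requires the comparison inequality rather than a naive $\epsilon$-net argument (which would give looser constants). Since this theorem is quoted verbatim from Foucart and Rauhut, in the paper itself I would simply cite their proof rather than reproduce the Gordon argument in full.
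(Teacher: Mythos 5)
Your outline is correct: the Lipschitz-plus-Gaussian-concentration step and the Gordon comparison bound on $\mathbb{E}[\sigma_{\max}]$ and $\mathbb{E}[\sigma_{\min}]$ together give exactly the stated tail bounds, and this is the standard Davidson--Szarek argument used in the cited source. The paper itself offers no proof of this theorem --- it is quoted directly from Foucart and Rauhut --- so your proposal simply reconstructs the reference's own proof, and your closing remark that one would cite rather than reproduce it is precisely what the authors do.
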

In the above theorem, $\bd A^\TT$ is a tall matrix, which matches with the proposed formulation where $\bd A$ is considered to be rectangular. The non-zero singular values of $\bd A$ and $\bd A^\TT$ are the same. We define the {\it restricted condition number} of a matrix as the ratio of the largest and smallest non-zero singular values. In our case, $\bd B$ is chosen such that $\bd B\bd A$ becomes a tight-frame with restricted condition number being unity. On the other hand, the restricted condition number of the random Gaussian sensing matrix in general is greater than one, following the above theorem.}
% ------------------------------------------------------------------------------------------------------------------------------------------------------------
\subsection{Performance Bounds}
\label{subsec:performance_bounds}
In order to establish performance bounds within the proposed tight frame formulation, we introduce the definition of RIP using the  $\bd B$-norm.
% ------------------------------------------------------------------------------------------------------------------------------------------------------------
\begin{definition}{($\bd B$-norm Restricted Isometry)}
The $\bd B$-norm $s$-restricted isometry constant $\hat{\delta}_s$ of a sensing matrix $\bd A\in\bb R^{m\times n}$, where $\bd B = \left(\bd A \bd A^\TT \right)^{{-\frac{1}{2}}}$, is the smallest quantity in $(0,1)$ such that
\begin{equation}\label{def:B-norm-RIP}
	\left( 1 - \hat{\delta}_s \right) \| \bld x\|_2^2 \leq \| \bd A \bld x\|_{\bd B}^2 \leq \| \bld x\|_2^2
\end{equation} 
for all $\bld x\in\bb R^n$ such that $\norm{\bld x}_0\leq s$.
\end{definition}
% ------------------------------------------------------------------------------------------------------------------------------------------------------------
We say that $\bd A$ satisfies the $\bd B$-norm RIP if $\hat{\delta}_s$ is sufficiently small for large values of $s$. The $\bd B$-norm restricted isometry constant (RIC) can be interpreted as the RIC associated with the effective sensing matrix $\bd B \bd A$. Note that, by virtue of the choice of $\bd B = \left(\bd A \bd A^\TT \right)^{{-\frac{1}{2}}}$ and the Parseval tight-frame property of $\bd B\bd A$, the upper bound in the $\bd B$-norm RIP condition in \eqref{def:B-norm-RIP} is different from that in the standard RIP condition in \eqref{def:RIP}. 
The isometry constants, $\delta_s$ in \eqref{def:RIP} and $\hat{\delta}_s$ in \eqref{def:B-norm-RIP}, can be related using the spectral norm of $\bd A \bd A^\TT$. For all $s$-sparse vectors $\bld x \in \bb R^n$, the following holds:
\begin{align}
\| \bd A \bld x\|_{\bd B}^2 = \left\| \left(\bd A \bd A^\TT \right)^{\frac{-1}{2}} \bd A \bld x\right\|_2^2 
\geq \lambda_{\min} \left\{\left(\bd A \bd A^\TT \right)^{-1} \right\}  \| \bd A \bld x\|_2^2
=\frac{\| \bd A \bld x\|_2^2} {\lambda_{\max} \left\{\bd A \bd A^\TT \right\} } = \frac{\| \bd A \bld x\|_2^2} {\| \bd A\|_2^2} \geq \frac{\left( 1- \delta_s \right)} {\| \bd A\|_2^2} \|\bld x\|_2^2, \nonumber 
\end{align}
where $\lambda_{\text{min}}\{\cdot\}$ and $\lambda_{\text{max}}\{\cdot\}$ denote the minimum and maximum eigenvalues, respectively, of the argument. The above calculation results in the following inequality for all $s$-sparse $\bld x$:
\begin{equation}
	 \frac{\left( 1- \delta_s \right)} {\| \bd A\|_2^2} \|\bld x\|_2^2 \leq \| \bd A \bld x\|_{\bd B}^2.
\end{equation}
Invoking the $\bd B$-norm RIP from \eqref{def:B-norm-RIP}, we  obtain 
\begin{equation}\label{eq:delta_relation}
\hat{\delta}_s = 1 - \left( 1 - \delta_s \right) / \| \bd A\|_2^2.
\end{equation} 
\indent The standard RIP condition allows an isometry gap of $2{\delta}_s \|\bld x \|_2^2$. Under the $\bd B$-norm RIP condition, the isometry gap is $ \hat{\delta}_s \|\bld x \|_2^2$. We compare the two isometry gaps to identify the smaller one by considering the difference:
\begin{equation}
\begin{split}
 \hat{\delta}_s - 2{\delta}_s ~=~ 1 - \frac{1-{\delta}_s}{\| \bd A \|_2^2} - 2 {\delta}_s 
 ~=~ \left( 1 - \frac{1}{ \| \bd A \|_2^2 } \right) - \delta_s \left( 2 - \frac{1}{\| \bd A \|_2^2} \right). \nonumber
\end{split}
\end{equation}
When $ \delta_s > \displaystyle \frac{\| \bd A \|_2^2  - 1}{2\| \bd A \|_2^2  - 1}$, we have $\hat{\delta}_s < 2{\delta}_s$, i.e., the isometry gap in the $\bd B$-norm RICs is smaller than that of the vanilla RICs. In this scenario, using the effective sensing matrix $\bd B \bd A$ is better than using $\bd A$. The converse happens when $ \delta_s < \displaystyle \frac{\| \bd A \|_2^2  - 1}{2\| \bd A \|_2^2  - 1}$, in which case $\bd A$ is a better sensing matrix than $\bd B \bd A$. 
% ------------------------------------------------------------------------------------------------------------------------------------------------------------
{\color{black} The signals get denser as $s$ increases. The RICs $\delta_s$ and $\hat{\delta}_s$ increase with $s$. The use of sensing matrices with smaller RICs has been shown to result in superior reconstruction guarantees \cite[Theorem 1.2]{candes}. Therefore, there is clearly a certain regime in which the $\bd B$-norm formulation offers enhanced recovery as the value of $s$ increases, i.e., as the signals get denser.} \\
\indent {\color{black} Tirer and Giryes \cite{tirer2021convergence} examine the restricted minimum eigenvalue of $\mathbf{A}\mathbf{A}^\TT$ using $\ell_2$-loss and back-projection loss. The restricted smallest eigenvalue does not have any dependency on the sparse signal under consideration, whereas the restricted isometry constant is associated with the sparsity of the signal. The isometry gap, $2\delta_s$ or $\hat{\delta}_s$, discussed in this paper is specifically relevant to the compressed sensing scenario. The relation between the restricted minimum eigenvalue ($\lambda_{\text{min}}^r$) and $\bd B$-norm RIC ($\hat{\delta}_s$) can be given as:
\begin{equation}
    \lambda_{\text{min}}^r \norm{\bld x}_2^2 \leq (1 - \hat{\delta}_s) \norm{\bld x}_2^2 \leq  \| \bd A \bld x\|_{\bd B}^2 \leq \| \bld x\|_2^2,
\end{equation}
for all $s$-sparse vectors $\bld x \in \bb R^n$ and $\hat{\delta}_s \in (0,1)$. We observe that $\bd B$-norm RIC offers tighter bounds for sparse recovery compared to restricted minimum eigenvalue of Tirer and Giryes.\\
}
We now define the dictionary counterpart of $\bd B$-norm-based RIP within the tight-frame formulation to handle the analysis-sparse model.
% ------------------------------------------------------------------------------------------------------------------------------------------------------------
\begin{definition}[$\bd B$-norm D-RIP]
	For a sensing matrix $\bd A\in\bb R^{m\times n}$, and dictionary $\bd D\in\bb R^{n \times d}$, the $\bd B$-norm s-restricted isometry constant $\hat{\delta}_s$ of $\bd A$, where $\bd B = \left(\bd A \bd A^\TT \right)^{{-\frac{1}{2}}}$, is the smallest quantity in $(0,1)$ such that
	\begin{equation}\label{def:RIP-mod}
		\left( 1 - \hat{\delta}_s \right) \| \bd D \bld v\|_2^2 \leq \| \bd A  \bd D \bld v\|_{\bd B}^2 \leq \|  \bd D \bld v\|_2^2
	\end{equation}
	holds for all $s$-sparse vectors $\bld v \in \bb R^d$.
\end{definition}
% ------------------------------------------------------------------------------------------------------------------------------------------------------------
A similar relation between $\bd B$-norm D-RIP constant $\hat{\delta}_s$ and D-RIP constant $\delta_s$ holds as with RIP constants in \eqref{eq:delta_relation}, i.e.,  $\hat{\delta}_s = 1 - \left( 1 - \delta_s \right) / \| \bd A\|_2^2$.
For all $s$-sparse vectors $\bld v \in \bb R^n$, the following holds:
\begin{align}
\| \bd A \bd D \bld v\|_{\bd B}^2 &= \left\| \left(\bd A \bd A^\TT \right)^{-1/2} \bd A \bd D \bld v\right\|_2^2 
\geq \lambda_{\min} \left\{\left(\bd A \bd A^\TT \right)^{-1} \right\}  \| \bd A \bd D \bld v\|_2^2, \nonumber \\
&=\frac{\| \bd A \bd D \bld v\|_2^2} {\lambda_{\max} \left\{\bd A \bd A^\TT \right\} } = \frac{\| \bd A \bd D \bld v\|_2^2} {\| \bd A\|_2^2} \geq \frac{\left( 1- \delta_s \right)} {\| \bd A\|_2^2} \|\bd D \bld v\|_2^2.\nonumber 
\end{align}
{\color{black}The D-RIP for Gaussian sensing matrices is shown to hold with high probability when $m = \cl O(s\log(d/s))$ \cite[Lemma 2.1]{rauhut2008CS}, \cite{candes_analysis}. }
The counterpart of \eqref{eq:P1-analysis} relying on the $\bd B$-norm of the error takes the following form:
\begin{equation}
   \underset{\boldsymbol{x}\in\bb R^n}{\text{minimize }} \| \bd D^\TT \boldsymbol{x}\|_1, \ \text{subject to }\ \| \mathbf{A}\boldsymbol{x} - \boldsymbol{y} \|_{\bd B} \leq \epsilon.
\end{equation}
The corresponding performance bound, following that of Cand\`es {\it et al.} \cite{candes_analysis}, is provided next.
% ------------------------------------------------------------------------------------------------------------------------------------------------------------
\begin{theorem}\label{thm:MSE_D}
Let $\bd D\in\bb R^{n\times d}$ be an arbitrary tight frame and $\bd A\in\bb R^{m\times n}$ be a sensing matrix satisfying the $\bd B$-norm D-RIP with $\hat{\delta}_{2s} < 0.08$. Then, the solution $\hat{\bld x}$ to the optimization problem in \eqref{eq:P1-TF} satisfies
\begin{equation} 
    \norm{\hat{\bld x} - \bld x^*}_2 \leq C_{0}^{\textsc{TF}} \epsilon + C_{1}^{\textsc{TF}} \frac{\norm{\bd D^\TT \bld x^* - \left( \bd D^\TT \bld x^* \right)_s}_1}{\sqrt{s}},
\end{equation}
where $\bld x^*$ is the ground-truth signal, $C_{0}^{\textsc{TF}}$ and $C_{1}^{\textsc{TF}}$ are positive numerical constants that depend on $\hat{\delta}_{2s}$, and $\left( \bd D^\TT \bld x^* \right)_s$ denotes the vector comprising the $s$ largest entries (magnitude-wise) of $ \bd D^\TT \bld x^*$ as defined in \eqref{eq:x_s}.
\end{theorem}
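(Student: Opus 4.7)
The plan is to reduce Theorem~\ref{thm:MSE_D} to Theorem~\ref{thm:mse_candes} by rewriting \eqref{eq:P1-TF} as an instance of \eqref{eq:P1-analysis} with a rescaled sensing matrix and rescaled measurements. Introducing the \emph{effective sensing matrix} $\tilde{\bd A} \overset{\text{def.}}{=} \bd B \bd A$ and effective measurement vector $\tilde{\bld y} \overset{\text{def.}}{=} \bd B \bld y$, the identity $\| \bd A \bld x - \bld y\|_{\bd B} = \| \bd B(\bd A \bld x - \bld y)\|_2 = \| \tilde{\bd A}\bld x - \tilde{\bld y}\|_2$ shows that \eqref{eq:P1-TF} is precisely the standard analysis-sparse $\ell_1$-minimization problem with sensing matrix $\tilde{\bd A}$, dictionary $\bd D$, and data $\tilde{\bld y}$. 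The corresponding effective measurement noise is $\bd B \bld w = \bd B(\bld y - \bd A \bld x^*)$, whose $\ell_2$-norm is at most $\epsilon$ by hypothesis, and the two problems share the same feasible set and hence the same minimizer $\hat{\bld x}$.

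Next I would verify that $\tilde{\bd A}$ inherits the standard D-RIP of Cand\`es \textit{et al.} from the $\bd B$-norm D-RIP assumed on $\bd A$. Because $\|\bd A \bd D \bld v\|_{\bd B}^2 = \|\tilde{\bd A} \bd D \bld v\|_2^2$, the inequality \eqref{def:RIP-mod} rewrites as
\begin{equation*}
(1-\hat{\delta}_s)\|\bd D \bld v\|_2^2 \leq \|\tilde{\bd A} \bd D \bld v\|_2^2 \leq \|\bd D \bld v\|_2^2 \leq (1+\hat{\delta}_s)\|\bd D \bld v\|_2^2,
\end{equation*}
for every $s$-sparse $\bld v$. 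Thus $\tilde{\bd A}$ satisfies the standard dictionary-RIP with a constant no larger than $\hat{\delta}_s$, and in particular the assumption $\hat{\delta}_{2s} < 0.08$ ensures that Theorem~\ref{thm:mse_candes} is directly applicable to the triple $(\tilde{\bd A}, \bd D, \tilde{\bld y})$.

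Applying Theorem~\ref{thm:mse_candes} to the reformulated problem then yields
\begin{equation*}
\|\hat{\bld x} - \bld x^*\|_2 \leq C_{0}^{\textsc{TF}} \epsilon + C_{1}^{\textsc{TF}} \frac{\|\bd D^\TT \bld x^* - \left(\bd D^\TT \bld x^*\right)_s\|_1}{\sqrt{s}},
\end{equation*}
where $C_{0}^{\textsc{TF}}$ and $C_{1}^{\textsc{TF}}$ are the constants produced by the Cand\`es \textit{et al.} proof evaluated at $\delta_{2s} = \hat{\delta}_{2s}$. No substantial new obstacle arises: the bulk of the work is the observation that the $\bd B$-norm constraint linearly conjugates the original analysis problem into one for the Parseval tight frame $\bd B \bd A$, after which the original machinery (tube, cone, and shifting inequalities tailored to tight-frame dictionaries) applies verbatim. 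The only subtlety worth flagging is that the $\bd B$-norm D-RIP actually delivers the sharper upper bound $\|\bd D \bld v\|_2^2$ in place of $(1+\hat{\delta}_s)\|\bd D \bld v\|_2^2$; retracing the Cand\`es \textit{et al.} argument while tracking this one-sided improvement could in principle shrink $C_{0}^{\textsc{TF}}$ and $C_{1}^{\textsc{TF}}$ further, but is not required for the stated result.
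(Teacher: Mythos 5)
Your proposal is correct, but it takes a genuinely different route from the paper. You reduce \eqref{eq:P1-TF} to \eqref{eq:P1-analysis} by conjugation: the effective sensing matrix $\tilde{\bd A}=\bd B\bd A$ and data $\tilde{\bld y}=\bd B\bld y$ give an identical feasible set and objective, and since $\tilde{\bd A}$ is a Parseval tight frame its standard D-RIP constant coincides with the $\bd B$-norm D-RIP constant $\hat{\delta}_s$ (the upper bound $\|\tilde{\bd A}\bd D\bld v\|_2\leq\|\bd D\bld v\|_2$ holds automatically because $\tilde{\bd A}^\TT\tilde{\bd A}$ is an orthogonal projection), so Theorem~\ref{thm:mse_candes} applies as a black box. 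The paper instead re-derives the entire Cand\`es-type argument in the $\bd B$-norm: it restates the cone and tail lemmas, proves $\bd B$-norm versions of the tube constraint and the D-RIP consequence, and carries the AM--GM bookkeeping through to the explicit constants $K_1,K_2$ in \eqref{eq:final_expression_for_perf_bounds}, using Lemma~\ref{lem:delta_ps_2s} to convert $\hat{\delta}_{2s}<0.08$ into $\hat{\delta}_{7s}<0.56$ for the choice $M=6s$. What the longer route buys is precisely the subtlety you flag at the end: the one-sided Parseval upper bound is exploited in Lemma~\ref{lem:consequence} (the tail sum $\sum_{j\geq 2}\|\bd A\bd D_{\bd T_j}\bd D_{\bd T_j}^\TT\bld h\|_{\bd B}$ is bounded by $\sum_{j\geq 2}\|\bd D_{\bd T_j}\bd D_{\bd T_j}^\TT\bld h\|_2$ with no $\sqrt{1+\hat{\delta}}$ factor), yielding slightly sharper explicit constants ($\|\bld h\|_2\leq 16.97\epsilon+3.00\eta$) that are needed for the quantitative comparison of $C_0^{\textsc{TF}},C_1^{\textsc{TF}}$ against $C_0,C_1$ in Figure~\ref{fig:phase_plots}. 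Your argument suffices for the theorem as stated, which only asserts the existence of constants depending on $\hat{\delta}_{2s}$.
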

% ------------------------------------------------------------------------------------------------------------------------------------------------------------
\begin{figure}[t]
    \begin{center}
		\begin{tabular}[t]{P{.22\linewidth}P{.22\linewidth}P{.22\linewidth}P{.22\linewidth}}
			$\quad \  1\%$ & $\quad  2\%$ & $\quad  5\%$ & $\quad  10\%$ \\
			\cellcolor{white}{\includegraphics[width=1.0\linewidth]{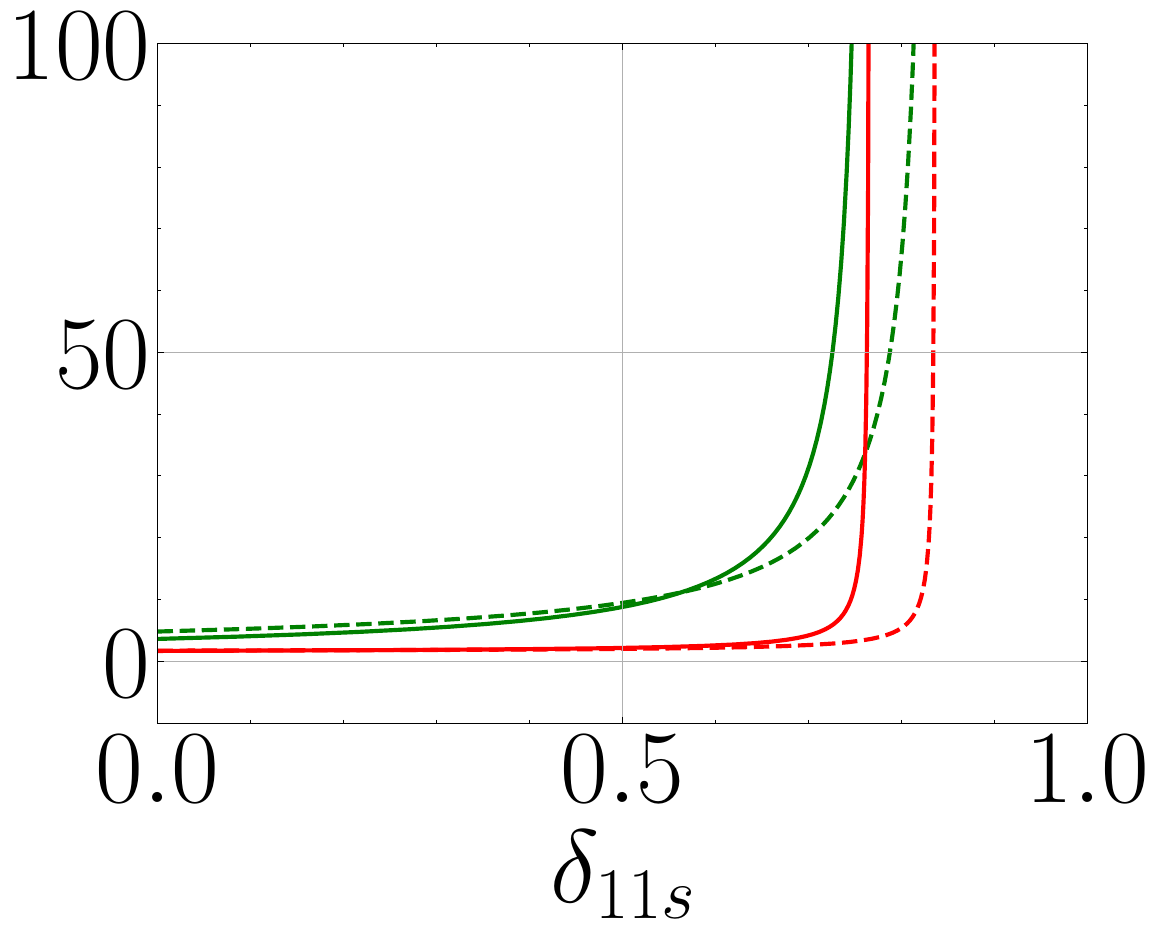}}&
			\cellcolor{white}{\includegraphics[width=1.0\linewidth]{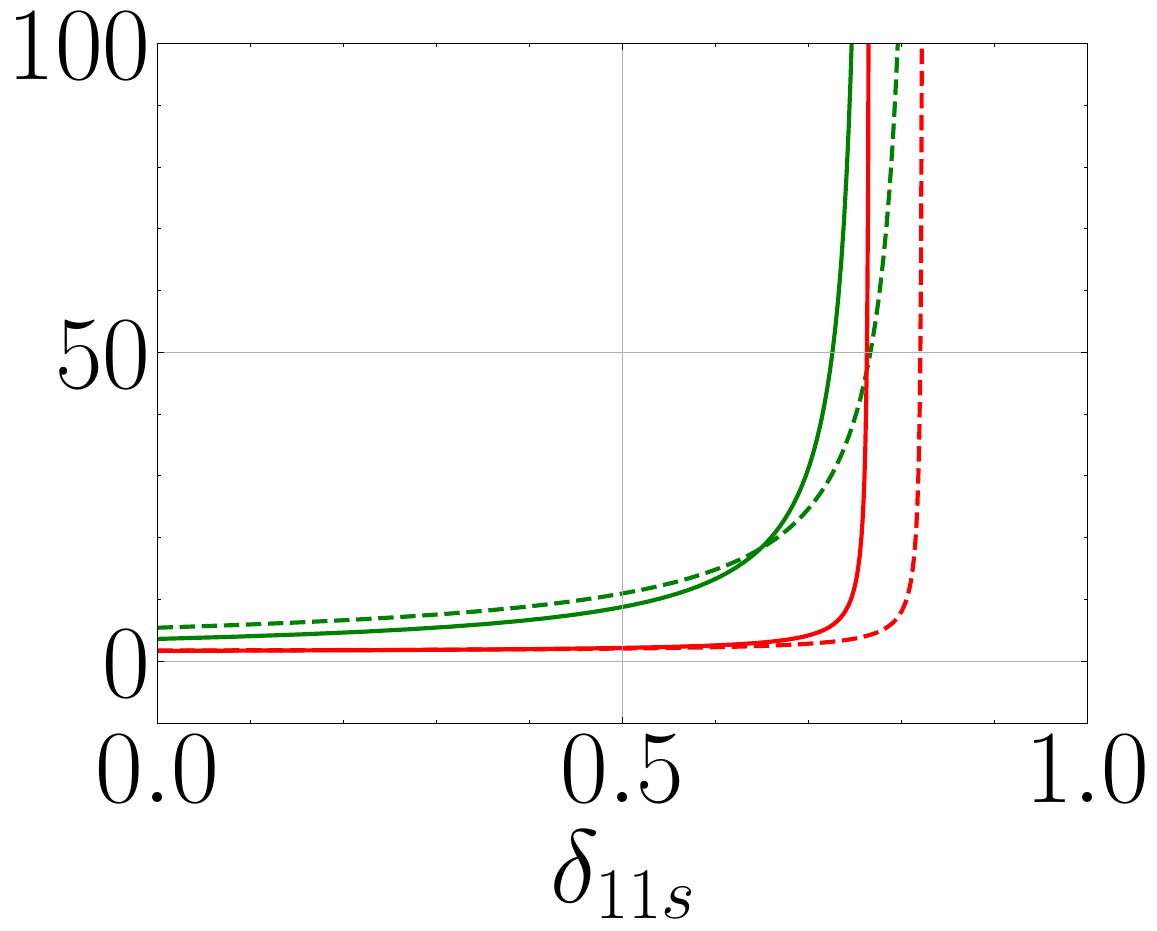}}&
			\cellcolor{white}{\includegraphics[width=1.0\linewidth]{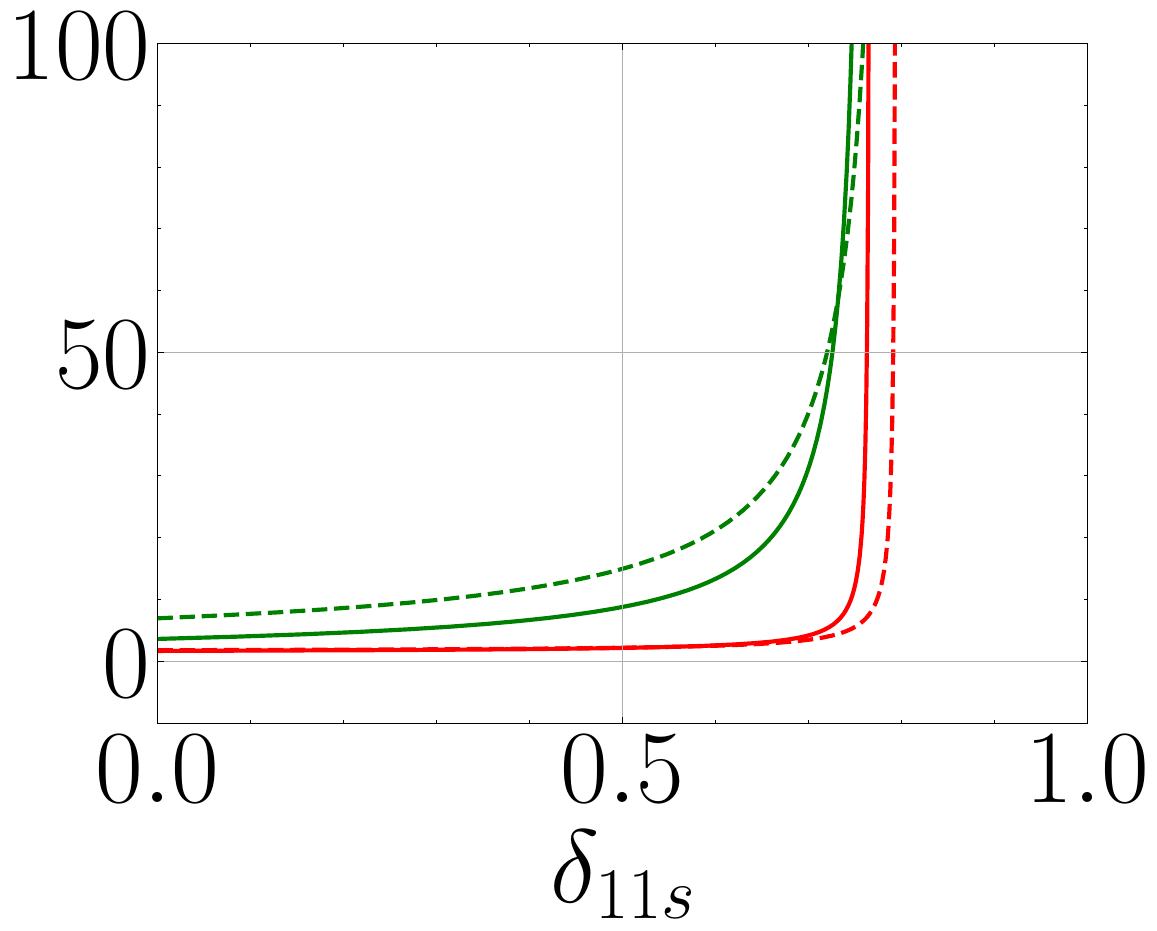}}&
            \cellcolor{white}{\includegraphics[width=1.0\linewidth]{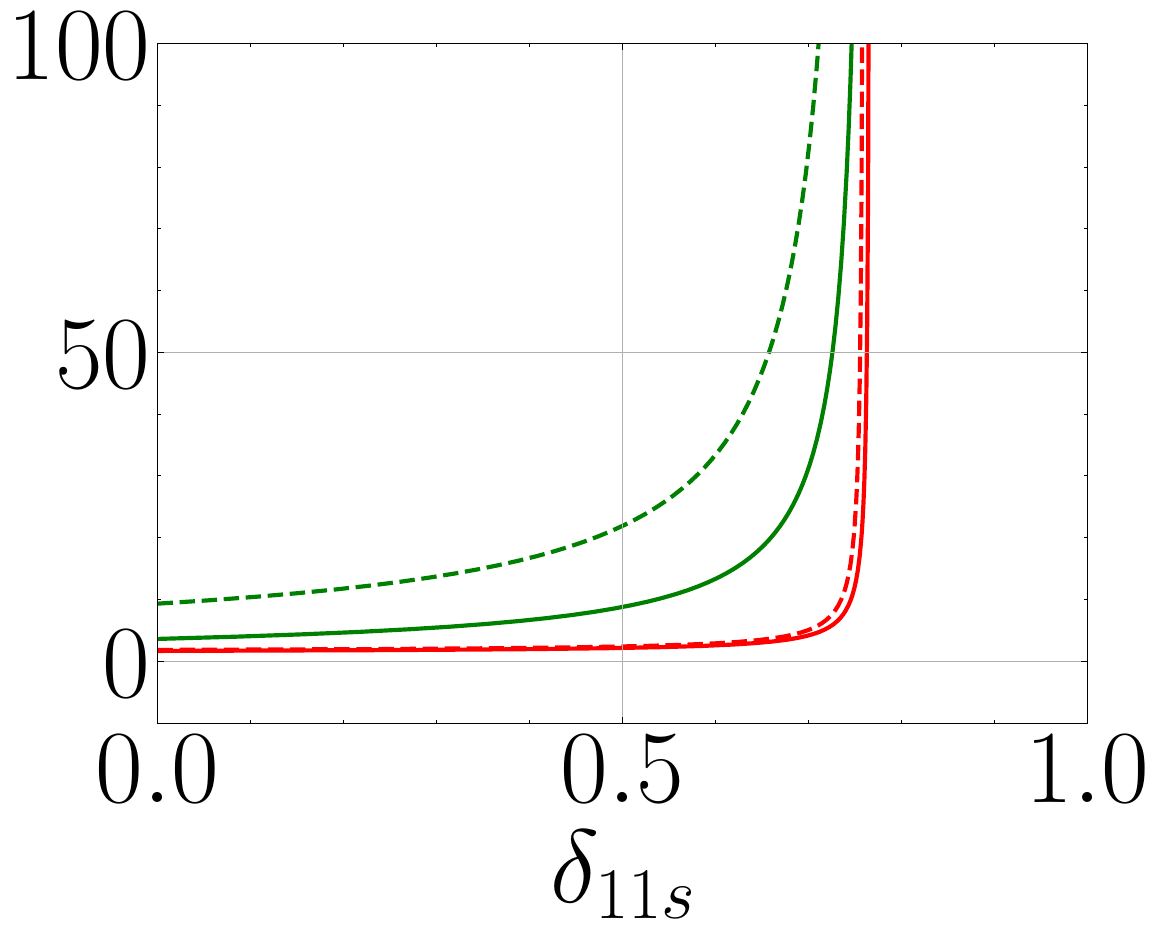}}\\
            \cellcolor{white}{\includegraphics[width=1.0\linewidth]{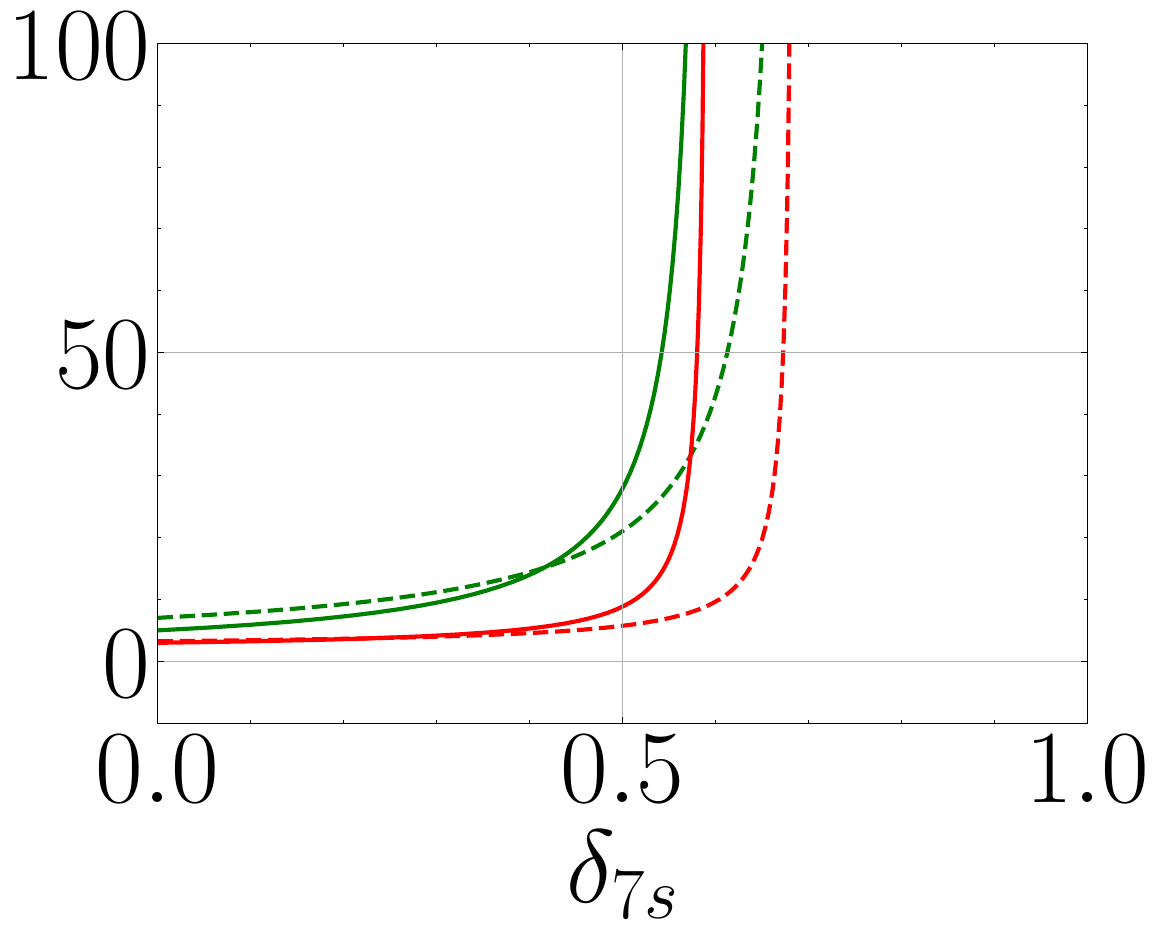}}&
			\cellcolor{white}{\includegraphics[width=1.0\linewidth]{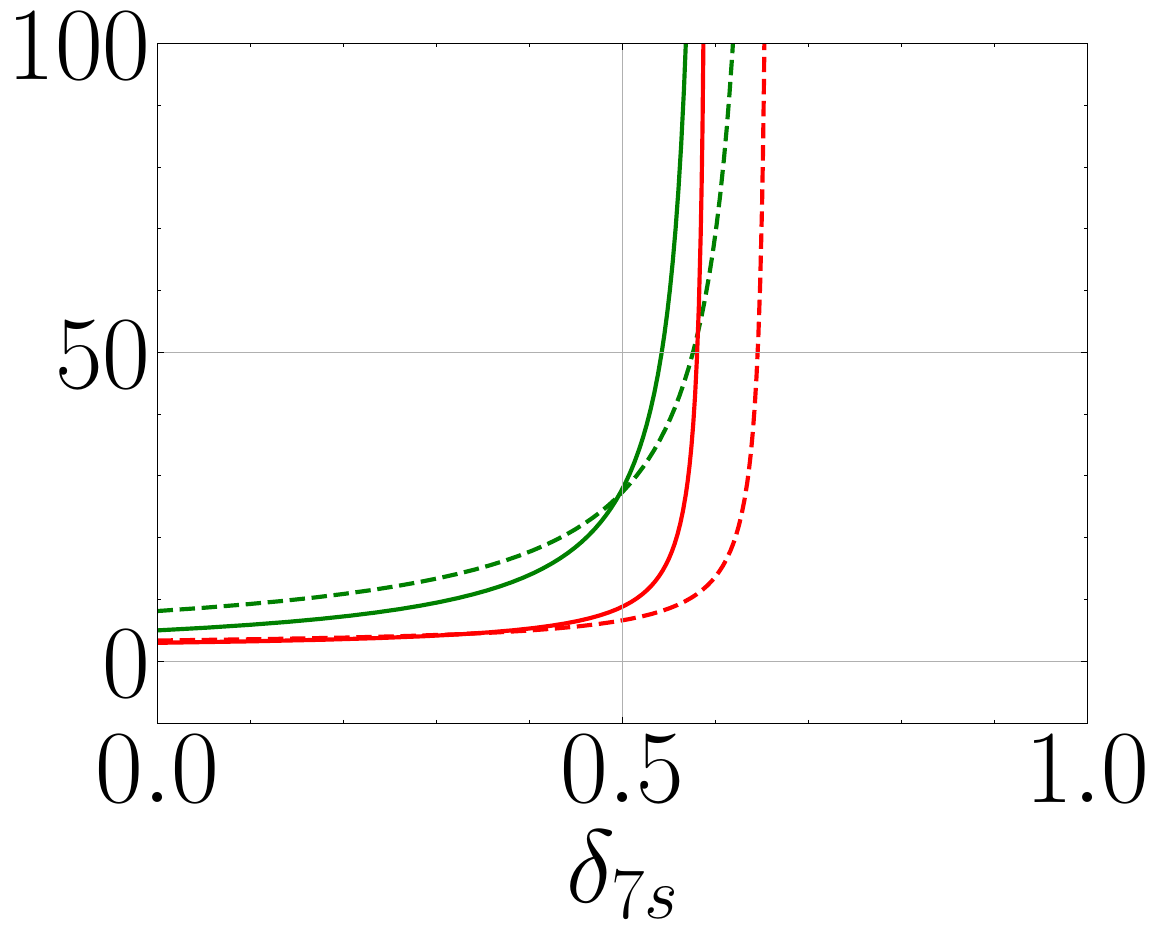}}&
			\cellcolor{white}{\includegraphics[width=1.0\linewidth]{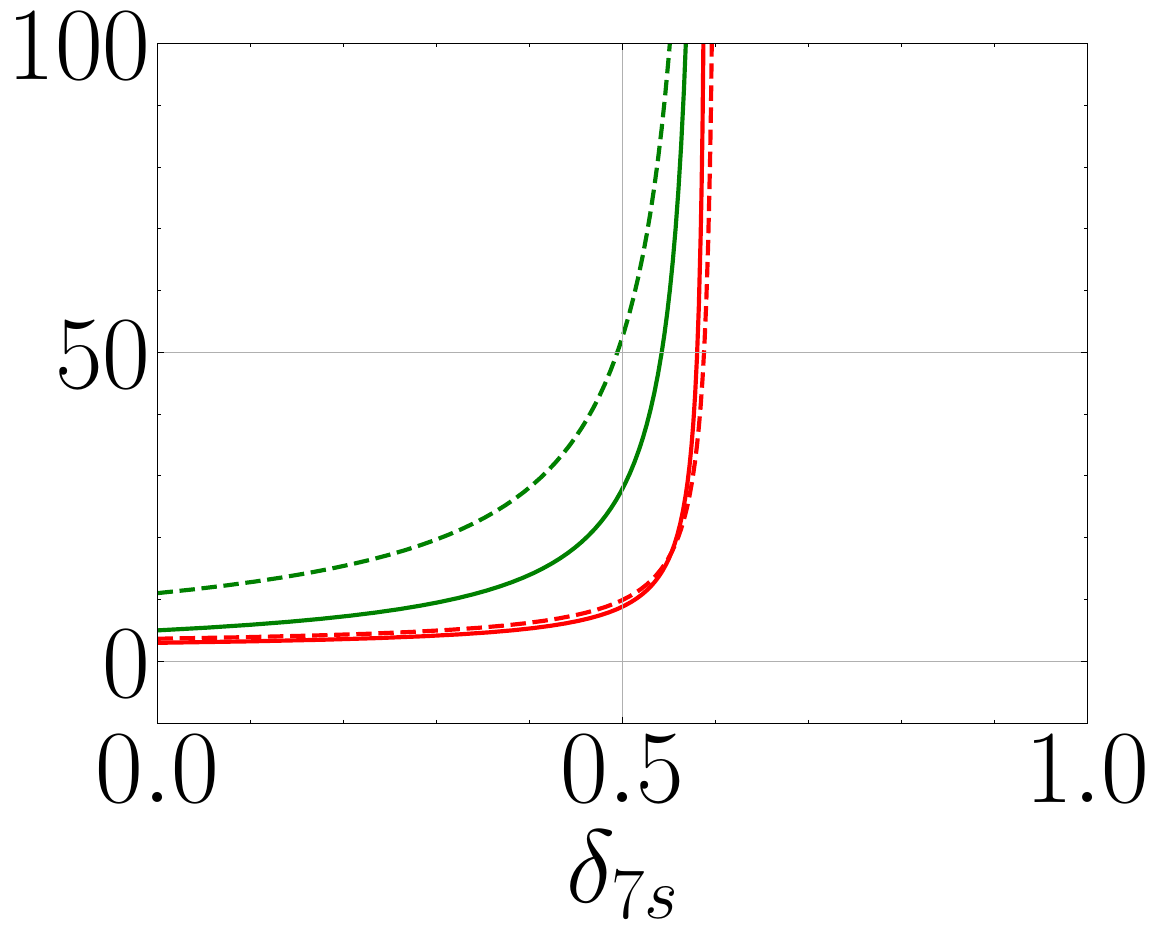}}&
            \cellcolor{white}{\includegraphics[width=1.0\linewidth]{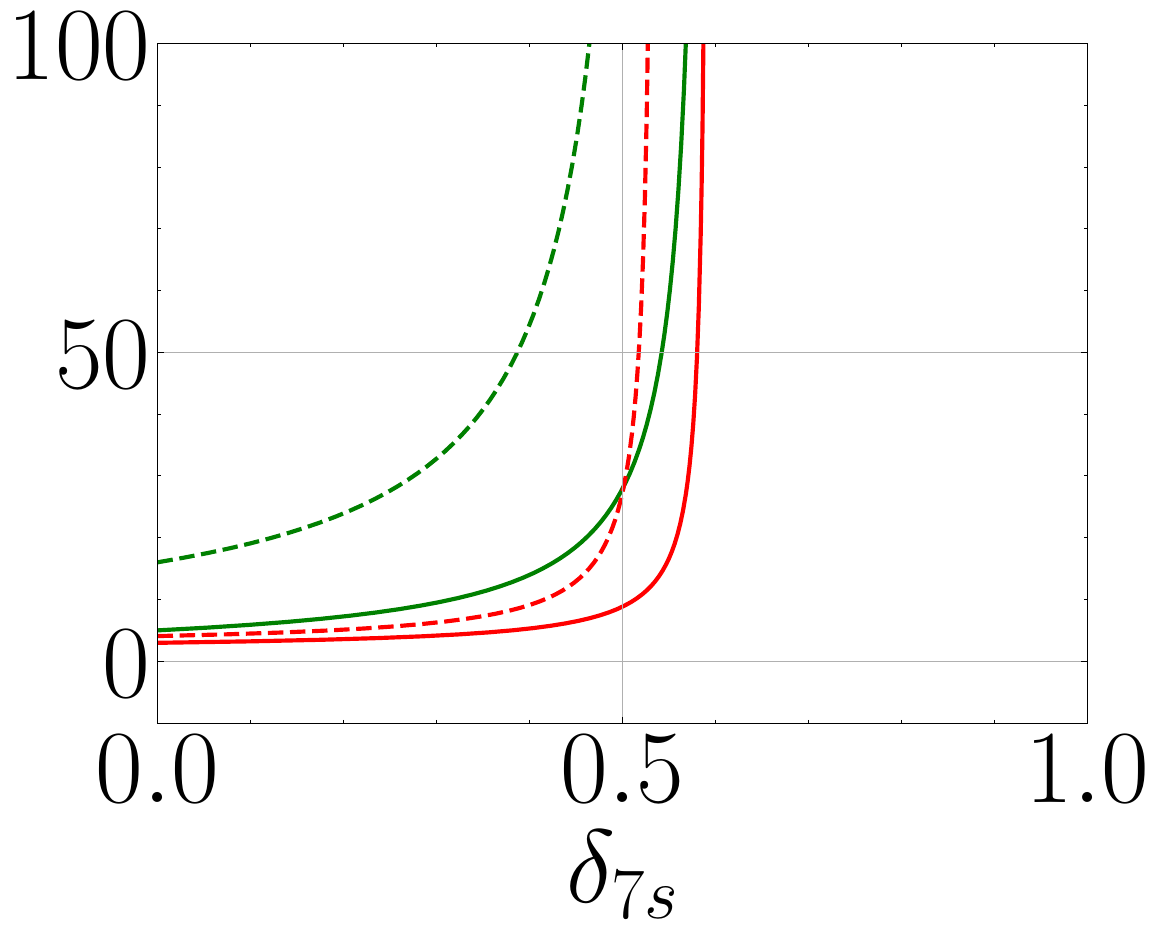}}\\
            &\multicolumn{2} {c} {\includegraphics[width=0.45\linewidth]{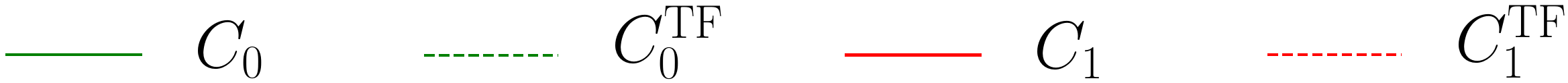}} &\\
		\end{tabular} 
 		\caption[]{Variation of the constants $C_0$, $C_1$, $C_0^\textsc{TF}$ and $C_1^\textsc{TF}$ versus the RIC for various compression factors: $C_0$ and $C_0^\textsc{TF}$ correspond to the error due to the noise in the observation, while $C_1$ and $C_1^\textsc{TF}$ correspond to the sparsity model mismatch. Smaller values of the constants correspond to a superior reconstruction. We observe that the value of $C_0^{\textsc{TF}}$ is smaller than $C_0$ for higher compression ratios. The value of $C_1^{\textsc{TF}}$ is smaller than $C_1$ for several scenarios.}
		\label{fig:phase_plots}
	\end{center}
\end{figure}
%% ------------------------------------------------------------------------------------------------------------------------------------------------------------
\textcolor{black}{The proof of Theorem~\ref{thm:MSE_D} is provided in Appendix~\ref{appendix:mse_bounds} and follows more or less along the same lines as \cite{candes_analysis}, but with minor differences. 
Our proof provides adapted versions of the lemmas pertaining to restricted isometry constants, tube constraint, and the D-RIP. The revised lemmas are coherently integrated in the main proof. A comparison between the constants derived from the proposed optimization problem \eqref{eq:P1-TF} and that from \eqref{eq:P1-analysis} in \cite{candes_analysis} is performed by invoking inequalities pertaining to the concentration of singular values of the sensing matrix \cite{foucart2013mathematical}}.\\
\indent {It is known that $m \times n$ Gaussian matrices with $m = \cl O \left( s \log{ \left( \frac{n}{s}\right)} \right)$ satisfy the RIP \cite{foucart2011hard}.} This result is similar to Theorem~\ref{thm:mse_candes}, where the constants $C_{0}^{\text{TF}}$, $C_{1}^{\text{TF}}$ depend on the $\bd B$-norm RIP constant $\hat{\delta}_s$, while $C_0$ and $C_1$ depend on the compression ratio $\frac{m}{n}$ and $\hat{\delta}_s$. We analyze the constants as functions of ${\delta}_s$ for various compression ratios. Figure~\ref{fig:phase_plots} shows the upper bounds for the constants $C_0$, $C_1$, $C_{0}^{\text{TF}}$, $C_{1}^{\text{TF}}$ plotted as functions of RICs for various compression ratios. \\
\indent {\color{black} The exact expressions for $C_0^{\text{TF}}$ and $C_1^{\text{TF}}$ are given in Appendix \ref{appendix:mse_bounds} (\eqref{eq:final_expression_for_perf_bounds}). The expressions for $C_0$ and $C_1$ can be found in \cite{candes_analysis}. 
A comparison of the constants arising from both formulations \eqref{eq:P1-analysis} and \eqref{eq:P1-TF} is given in Fig.~\ref{fig:phase_plots}. 
The constants depend on the sparsity factor through the corresponding RICs. The reason for plotting the constants vs. $\delta_{7s}$ and $\delta_{11s}$ will become clear from Appendix \ref{appendix:mse_bounds} (the text following \eqref{eq:final_expression_for_perf_bounds}).\\
\indent We observe that when the compression ratio is high, the constants $C_{0}^{\text{TF}}$ and $C_{1}^{\text{TF}}$ are either comparable or smaller than $C_0$ and $C_1$, respectively, which makes the upper bound on the $\ell_2$ error in the reconstruction smaller. Further, in the high compression ratio regime, the constant $C_1^{\text{TF}}$ is consistently smaller than $C_1$, indicating superior recovery guarantees  for signals that are not exactly compressible through the proposed TF formulation. On the other hand, when the compression ratio is small, the constants $C_{0}^{\text{TF}}$ and $C_{1}^{\text{TF}}$ are larger than $C_0$ and $C_1$, respectively. Hence, the reconstruction quality of \eqref{eq:P1-analysis} is superior to that of proposed formulation \eqref{eq:P1-TF}. Effectively, when the measurements are few, the proposed formulation is certainly a better choice than the standard formulation \eqref{eq:P1-analysis} from the point of view of reconstruction accuracy.\\ 
\indent {\color{black} Tirer and Giryes \cite{tirer2021convergence} examined the relationship between compression ratio and recovery performance considering both $\ell_2$-loss and back-projection (BP) loss, focusing mainly on the $\ell_2$-regularizer, whereas our approach considers the $\ell_1$-prior, specifically with the objective of sparse signal recovery. The BP loss is shown to have an advantage over the least-squares (LS) loss when the noise level is low and when the $\ell_2$ prior is employed \cite{tirer2018image}. Our analysis from Figure~\ref{fig:phase_plots} supports this relation in the case of the analysis-sparse $\ell_1$ prior. The iterate convergence results pertaining to a lower semi-continuous prior with back-projection loss are given in \cite[Theorem 3.1]{tirer2021convergence}, where the convergence of the sequence resulting from \cite[Equation 2.5]{tirer2021convergence} is shown by \cite[Theorem 3.1]{tirer2021convergence}. 
\cite[Theorem 3.1]{tirer2021convergence} is specific to the proximal gradient-descent algorithm and the sequence generated by the approach described in \cite[(2.5)]{tirer2021convergence}. On the contrary, Theorem \ref{thm:MSE_D} depends on the optimization problem and is not specific to an algorithm employed to solve the problem.}}

% %------------------------------------------------------------
\section{Experimental Validation}
\label{sec:experiemental_validation}
Consider signals of length $n = 1024$ and number of measurements $m = 500$ using sensing matrix $\mathbf{A}\in \bb R^{m\times n}$. We consider an undersampled orthogonal discrete cosine transform (DCT) as the dictionary $\mathbf D \in \bb R^{n \times d}$, where $d = 4n$. The ground-truth signals $\bld x^* \in \bb R^n$ are generated as follows: $\bld x^* = \bd D \bld \alpha$, where $\bld \alpha \in \bb R^d$ is a canonical sparse vector. We generate the canonical sparse signal $\boldsymbol{\alpha}^*$ with support determined by a Bernoulli distribution with parameter $p$ and output symbols $\{0, 1\}$, and the amplitudes of the nonzero entries drawn from the standard normal distribution. We consider a Gaussian sensing matrix with columns rescaled to have unit $\ell_2$-norm. The noisy measurements are modeled as $\bld y = \bd A\bld x^* + \bld w$, where $\bld w$ has i.i.d. Gaussian entries with zero mean and variance $\sigma^2$. The measurement signal-to-noise ratio (SNR) is defined as $\text{SNR} = 20 \log_{10} \left( \frac{  \| {\bd A\bld x^*} \|_2}{ \| \bld w \|_2} \right) \text{ dB}$. 
% %------------------------------------------------------------
\paragraph{Algorithms and Methods}
\textcolor{black}{The benchmark methods used for the analysis-sparse recovery are ISTA \cite{daubechies2004iterative, chaari2009solving, elad2007analysis, selesnick2009signal,weiss2009efficient}, Loris {\it et al.} \cite{Loris2011generalization}, NESTA \cite{becker2011NESTA}, and SFISTA \cite{tan2014smoothing}. ISTA can be used for solving analysis-sparse recovery problems only if the dictionary $\bd D$ is tight \cite{chaari2009solving, elad2007analysis, selesnick2009signal,weiss2009efficient}. Loris {\it et al.}'s method (henceforth, referred to as Loris) is applicable for non-separable penalties, unlike ISTA. Each iteration of Loris involves four matrix-vector products (whereas ISTA requires two) and a thresholding step. NESTA integrates Nesterov's smoothing technique with Nesterov's accelerated first-order algorithm. NESTA has the added advantage over the other methods in accounting for the book-keeping of the past gradients. SFISTA employs smoothing of the monotone version of FISTA (MFISTA), since MFISTA does not have a closed-form solution for the analysis-sparse model. The smoothing relaxes the original sparse recovery problem and implements MFISTA on the relaxed formulation. \\
We incorporated the TF methodology into two of the benchmark methods, namely ISTA and Loris, resulting in four variants: TF-ISTA, RTF-ISTA, TF-Loris, and RTF-Loris. We give the details of these methods next. 
\paragraph{ISTA variants} The iterative shrinkage-thresholding algorithm (ISTA) solves the \eqref{eq:unconstrained-P1-analysis} optimization problem using proximal gradient method \cite{chaari2009solving, elad2007analysis, selesnick2009signal,weiss2009efficient}. Starting with the initialization, $\bld x^{(0)} = \bd A^\TT \bld y$, step-size parameter $\eta = 1/\norm{\bd A}_2^2$, and threshold parameter $\lambda$, the reconstruction is updated as follows:
\begin{equation} \label{eq:ista_update}
	\bld{x}^{(k+1)} = \mathcal{T}_{\eta\lambda} \left( \boldsymbol{x}^{(k)} - \eta \mathbf{A}^\TT \left(\mathbf{A}\boldsymbol{x}^{(k)} - \boldsymbol{y} \right) \right),
\end{equation}
where $\mathcal{T}_{\eta\lambda}$ is the shrinkage function given by
\begin{equation} \label{eq:soft_threshold}
	\mathcal{T}_{\eta\lambda}(\bld x) = \text{sgn}(\bld x) \cdot \text{max}(|\bld x| - \eta\lambda \bld 1, \bld 0),
\end{equation}
with the operations considered element-wise.\\
\indent ISTA can be used for solving the analysis-sparse recovery problem in \eqref{eq:unconstrained-P1-analysis} if the dictionary $\bd D$ is tight. The update takes the form
\begin{equation} \label{eq:ISTA-update}
	\displaystyle \boldsymbol{x}^{(k+1)} = \bd D \mathcal{T}_{\eta\lambda} \left( \bd D^\TT \left( \boldsymbol{x}^{(k)} - \eta \mathbf{A}^\TT (\mathbf{A}\boldsymbol{x}^{(k)} - \boldsymbol{y}) \right) \right),	
\end{equation} 
where $0 < \eta < \frac 1L $ is the step-size parameter ($L = \norm{\bd A}_2$). The tight-frame ISTA (TF-ISTA) solves the unconstrained formulation of \eqref{eq:P1-TF}:
\begin{equation} \label{eq:P1-TF-unconstrained}
    \underset{\boldsymbol{x}\in\bb R^n}{\text{minimize }} \norm{\bd A \bld x - \bld y}_{\bd B}^2 + \lambda \norm{\bd D^\TT \bld x}_1,
\end{equation}
where $\lambda$ is the regularization parameter. TF-ISTA update is given by  
\begin{equation} \label{eq:TF-ISTA-update}
	\displaystyle \boldsymbol{x}^{(k+1)} = \bd D \mathcal{T}_{\eta\lambda} \left( \bd D^\TT \left( \boldsymbol{x}^{(k)} - \eta \mathbf{A}^\TT \left(\mathbf{A}\mathbf{A}^\TT\right)^{-1}  (\mathbf{A}\boldsymbol{x}^{(k)} - \boldsymbol{y}) \right) \right),	
\end{equation}
where $0 < \eta < 1 $ is the step-size parameter. The Lipschitz constant of  $\nabla f_{\bd B}(\bld x)$ is unity \cite{tirer2018image, tirer2020back}. We observe that the TF-ISTA updates from \eqref{eq:TF-ISTA-update} can not be obtained from the ISTA updates in \eqref{eq:ISTA-update} through pre-conditioning \cite{tirer2020back}. 
{\color{black} The gradient updates of TF-ISTA are similar to the gradient updates from \cite{liao2014generalized}, where the generalized alternating projection (GAP) method is combined with weighted $\ell_{2,1}$ prior for image and video compressed sensing. 
The orthogonal AMP (OAMP) method in \cite{ma2017orthogonal} is based on de-correlated linear estimation and divergence-free non-linear estimation for state evolution, whose linear updates coincide with those in \eqref{eq:TF-ISTA-update} for $\bd D = \bd I$.}\\
\indent The rescaled tight-frame ISTA (RTF-ISTA) update is given by
\begin{equation} \nonumber
	\displaystyle \boldsymbol{x}^{(k+1)} = \bd D \mathcal{T}_{\eta\lambda} \left( \bd D^\TT \left( \boldsymbol{x}^{(k)} - \eta \bd C^{-1} \mathbf{A}^\TT \left(\mathbf{A}\mathbf{A}^\TT\right)^{-1}  (\mathbf{A}\boldsymbol{x}^{(k)} - \boldsymbol{y}) \right) \right),
\end{equation}
where $\bd C = \left(\bd A^\dagger\bd A \right) \odot \bd I$, where $\odot$ denotes the Hadamard/element-wise product.
%------------------------- Loris variants ----------------------
\paragraph{Loris variants} The algorithm uses fixed-point solutions to the variational equations corresponding to the optimization problem. A proximal ascent step follows a gradient descent in the algorithm. In the end, another gradient-descent step completes the algorithm. The updates of Loris  can be found in \cite{Loris2011generalization}. The updates of the TF variant of Loris (TF-Loris) are given below:
\begin{equation} \label{eq:TF_Loris}
    \begin{split}
        \Bar{\bld x}^{(k+1)} ~&=~ {\bld x}^{(k)} - \eta \mathbf{A}^\dagger  (\mathbf{A}\boldsymbol{x}^{(k)} - \boldsymbol{y}) - \bd D^\TT {\bld w}^{(k)}, \\
        {\bld w}^{(k+1)} ~&=~ {\bld w}^{(k)} + \bd D \Bar{\bld x}^{(k+1)} - \cl T_{\lambda}\left({\bld w}^{(k)} + \bd D \Bar{\bld x}^{(k+1)} \right),\\
        {\bld x}^{(k+1)} ~&=~ {\bld x}^{(k)} - \eta \mathbf{A}^\dagger  (\mathbf{A}\boldsymbol{x}^{(k)} - \boldsymbol{y}) - \bd D^\TT {\bld w}^{(k+1)}.
    \end{split}
\end{equation}
Similarly, the RTF-Loris updates are given by
\begin{equation} \label{eq:RTF_Loris}
    \begin{split}
        \Bar{\bld x}^{(k+1)} ~&=~ {\bld x}^{(k)} - \eta \bd C^{-1}\mathbf{A}^\dagger  (\mathbf{A}\boldsymbol{x}^{(k)} - \boldsymbol{y}) - \bd D^\TT {\bld w}^{(k)}, \\
        {\bld w}^{(k+1)} ~&=~ {\bld w}^{(k)} + \bd D \Bar{\bld x}^{(k+1)} - \cl T_{\lambda}\left({\bld w}^{(k)} + \bd D \Bar{\bld x}^{(k+1)} \right),\\
        {\bld x}^{(k+1)} ~&=~ {\bld x}^{(k)} - \eta \bd C^{-1} \mathbf{A}^\dagger  (\mathbf{A}\boldsymbol{x}^{(k)} - \boldsymbol{y}) - \bd D^\TT {\bld w}^{(k+1)}.
    \end{split}
\end{equation}
}
%------------------------- NESTA variants ----------------------
{\color{black}\paragraph{NESTA variants} The algorithm uses Nesterov's smoothing technique and Nesterov's accelerated first-order algorithm. 
Let $g(\bld x) = \norm{\bd D^\TT \bld x}_1$, and $g_{\mu}(\bld x)$ be a smooth approximation of $g(\bld x)$ as considered in \cite{becker2011NESTA}. The updates of NESTA can be found in \cite{becker2011NESTA}. 
The TF variant of NESTA (TF-NESTA) updates are given below ($\beta_k = \frac{1}{2(k+1)}$ and ${\tau}_{k} = \frac{2}{k+3}$):
\begin{equation} \label{eq:TF_NESTA}
    \begin{split}
        {\bld u}^{(k+1)} ~&=~ \left ( \bd I - \frac{\lambda}{\lambda + 1} \bd A^\dagger \bd A \right ) \left( \lambda \bd A^\dagger \bld y + {\bld x}^{(k)} - \nabla g_{\mu} (\bld x^{(k)}) \right),\\
        % {\bld y}^{(k+1)} ~&=~ \arg \min_{\bld x} \frac{L}{2}\norm{x - \bld x^{(k)}}_2^2 + \left \langle \nabla f_{\bd B} (\bld x^{(k)}) , \bld x - \bld x^{(k)} \right \rangle \\
        {\bld z}^{(k+1)} ~&=~ \arg \min_{\bld x : \norm{\bd A \bld x - \bld y}_{\bd B} \leq \epsilon} \norm{\bd D^\TT \bld x}_1 + \sum_{i=1}^k \beta_i  \langle \nabla g_{\mu} (\bld x^{(i)}) , \bld x - \bld x^{(i)}  \rangle, \\
        {\bld x}^{(k+1)} ~&=~  {\tau}_{k+1} {\bld z}^{(k+1)} + (1- {\tau}_{k+1}) \bld u^{(k+1)}.
    \end{split}
\end{equation}
Similarly, the RTF-NESTA updates ($L = \norm{\bd C^{-1} \bd A^\dagger \bd A}_2$) are given by
\begin{equation} \label{eq:RTF_NESTA}
    \begin{split}
        {\bld u}^{(k+1)} ~&=~ \left ( \bd I - \frac{\lambda}{\lambda + L} \bd C^{-1}\bd A^\dagger \bd A \right ) \left( \frac{\lambda}{L}\bd C^{-1} \bd A^\dagger \bld y + {\bld x}^{(k)} - \frac 1L \nabla g_{\mu} (\bld x^{(k)}) \right), \\
        {\bld z}^{(k+1)} ~&=~ \arg \min_{\bld x : \norm{\bd A \bld x - \bld y}_{\bd B} \leq \epsilon} \norm{\bd D^\TT \bld x}_1 + \sum_{i=1}^k \beta_i  \langle \nabla g_{\mu} (\bld x^{(i)}) , \bld x - \bld x^{(i)}  \rangle, \\
        {\bld x}^{(k+1)} ~&=~  {\tau}_{k+1} {\bld z}^{(k+1)} + (1- {\tau}_{k+1}) \bld u^{(k+1)}.
    \end{split}
\end{equation}
%------------------------- SFISTA variants ---------------------
\paragraph{SFISTA variants} SFISTA solves the monotone FISTA problem using a smoothing procedure on non-smooth $\ell_1$ penalty. The updates of SFISTA can be found in \cite{tan2014smoothing}. The updates of the TF variant of SFISTA (TF-SFISTA) are given below: 
\begin{equation} \label{eq:TF_SFISTA}
    \begin{split}
        {\bld z}^{(k+1)} ~&=~ {\bld u}^{(k)} - \eta \left( \bd A^\dagger \left( \bd A \bld {\bld u}^{(k)} - \bld y \right) + \frac{1}{\mu} \bd D \left ( \bd D^\TT \bld x^{(k)} - \cl T_{\mu \lambda}\left(\bd D^\TT \bld x^{(k)} \right) \right ) \right ), \\
        {t}^{(k+1)} ~&=~ \frac{1+\sqrt{1+(2{t}^{(k)})^2}}{2},\\
        {\bld x}^{(k+1)} ~&=~ \arg \min_{ \bld x \in \{\bld x^{(k)}, \bld z^{(k+1)}\}} f_{\bd B}\left( \bld x\right) + g_{\mu}\left(\bd D^\TT \bld x\right),\\
        {\bld u}^{(k+1)} ~&=~ {\bld x}^{(k+1)} + \frac{{t}^{(k)}}{{t}^{(k+1)}} \left( {\bld z}^{(k+1)} - {\bld x}^{(k+1)}\right) + \frac{{t}^{(k)} - 1}{{t}^{(k+1)}} \left( {\bld x}^{(k+1)} - {\bld x}^{(k)}\right).
    \end{split}
\end{equation}
Similarly, the RTF-SFISTA updates are given by
\begin{equation} \label{eq:RTF_SFISTA}
    \begin{split}
        {\bld z}^{(k+1)} ~&=~ {\bld u}^{(k)} - \eta \left( \bd C^{-1} \bd A^\dagger \left( \bd A \bld {\bld u}^{(k)} - \bld y \right) + \frac{1}{\mu} \bd D \left ( \bd D^\TT \bld x^{(k)} - \cl T_{\mu \lambda}\left(\bd D^\TT \bld x^{(k)} \right) \right ) \right ), \\
        {t}^{(k+1)} ~&=~ \frac{1+\sqrt{1+(2{t}^{(k)})^2}}{2},\\
        {\bld x}^{(k+1)} ~&=~ \arg \min_{ \bld x \in \{\bld x^{(k)}, \bld z^{(k+1)}\}} f_{\bd B}\left( \bld x\right) + g_{\mu}\left(\bd D^\TT \bld x\right),\\
        {\bld u}^{(k+1)} ~&=~ {\bld x}^{(k+1)} + \frac{{t}^{(k)}}{{t}^{(k+1)}} \left( {\bld z}^{(k+1)} - {\bld x}^{(k+1)}\right) + \frac{{t}^{(k)} - 1}{{t}^{(k+1)}} \left( {\bld x}^{(k+1)} - {\bld x}^{(k)}\right).
    \end{split}
\end{equation}
}
% %----------------------------------------------- Table SNR 30 --------------------------------------------------------
\begin{table*}[t]
	\centering
	\caption{RSNR (in dB) $\pm$ standard deviation for analysis-sparse signal recovery using various competing methods for SNR = 30 dB. TF and RTF variants show performance improvements over benchmark methods for several sparsity scenarios. The best performance is shown in \textbf{boldface} and the second best is shown \underline{underlined}.}
	\label{table:SNR_30}
	\vskip -0.1in
	\begin{center}
		\begin{small}
			\begin{sc}
			\resizebox{0.85\linewidth}{!}{	\begin{tabular}{c||ccccc}
			\hline
					\cellcolor{white} & & & \textbf{Sparsity} & &   \\ 
					\hline
					\textbf{Method} & 1$\%$ & 3$\%$ & 5$\%$ & 7$\%$ & 10$\%$  \\
					\hline
                    ISTA  & $\underline{29.74} \pm \underline{1.25}$ & $23.60 \pm 1.27$ & $14.31 \pm 2.13$ & $9.10 \pm 1.45$ & $5.52 \pm 0.55$  \\
					NESTA  & ${15.70} \pm {1.15}$ & ${9.82} \pm {0.81}$ & ${6.50} \pm {0.71}$ & $4.93 \pm 0.58$ & $3.61 \pm 0.33$   \\
					SFISTA  & $22.66 \pm 1.16$ & $ 20.96 \pm 0.90 $ & $ 15.54 \pm 1.82 $ & $ 10.37 \pm 1.74$ & $ \textbf{5.88} \pm \textbf{0.68} $  \\
					Loris & ${\textbf{30.20}} \pm {\textbf{1.28}}$ & $24.44 \pm 1.18$ & $14.89 \pm 2.17$ & $9.37 \pm 1.52$ & $5.58 \pm 0.57$ \\
					\hline
					TF-ISTA & $27.52 \pm 1.27$ & $23.96 \pm 0.93$ & $17.19 \pm 2.22$ & $10.48 \pm 1.98$ & $5.79 \pm 0.63$  \\
					RTF-ISTA & $29.01 \pm 1.28$ & $\underline{24.50} \pm \underline{0.95}$ & ${17.34} \pm {2.42}$ & $10.45 \pm 1.96$ & ${5.81} \pm {0.65}$  \\
                     TF-NESTA & $ 28.73 \pm  1.27$ & $ 23.86 \pm 0.97$ & $ 17.06 \pm 2.14$ & $ \underline{10.58} \pm \underline{1.94}$ & $  \underline{5.84}\pm \underline{0.68}$   \\
                     RTF-NESTA & $ 28.85 \pm 1.29 $ & $ 24.35 \pm 0.97 $ & $ 17.32 \pm 2.26$ & $ 10.51 \pm 1.95$ & $ 5.80 \pm 0.67 $   \\
                    TF-SFISTA & $27.18 \pm 1.28$ & $24.27 \pm 0.85$ & $\textbf{17.75} \pm \textbf{2.24}$ & ${10.64} \pm {2.07}$ & $5.71 \pm 0.68$   \\
					RTF-SFISTA & $26.71 \pm 1.29$ & $24.16 \pm 0.84$ & $\underline{17.69} \pm \underline{2.16}$ & ${10.67} \pm {2.06}$ & $5.71 \pm 0.68$  \\
                    TF-Loris & $27.61 \pm 1.29$ & $24.25 \pm 0.88$ & $17.31 \pm 2.20$ & $\textbf{10.59} \pm \textbf{2.01}$ & $5.79 \pm 0.63$  \\
                    RTF-Loris & $29.07 \pm 1.28$ & $\textbf{24.60} \pm \textbf{0.94}$ & ${17.67} \pm {2.38}$ & $\underline{10.58} \pm \underline{1.99}$ & ${5.81} \pm {0.65}$  \\
					\hline
				\end{tabular}}
			\end{sc}
		\end{small}
	\end{center}
% 	\vskip -0.1in
\end{table*}

% % --------------------------------------------------------------------------

\begin{figure}[t]
    \begin{center}
    % \hskip -0.4in
        \begin{tabular}[htb]{P{.4\linewidth}P{.4\linewidth}}
			\cellcolor{white} \includegraphics[width=\linewidth]{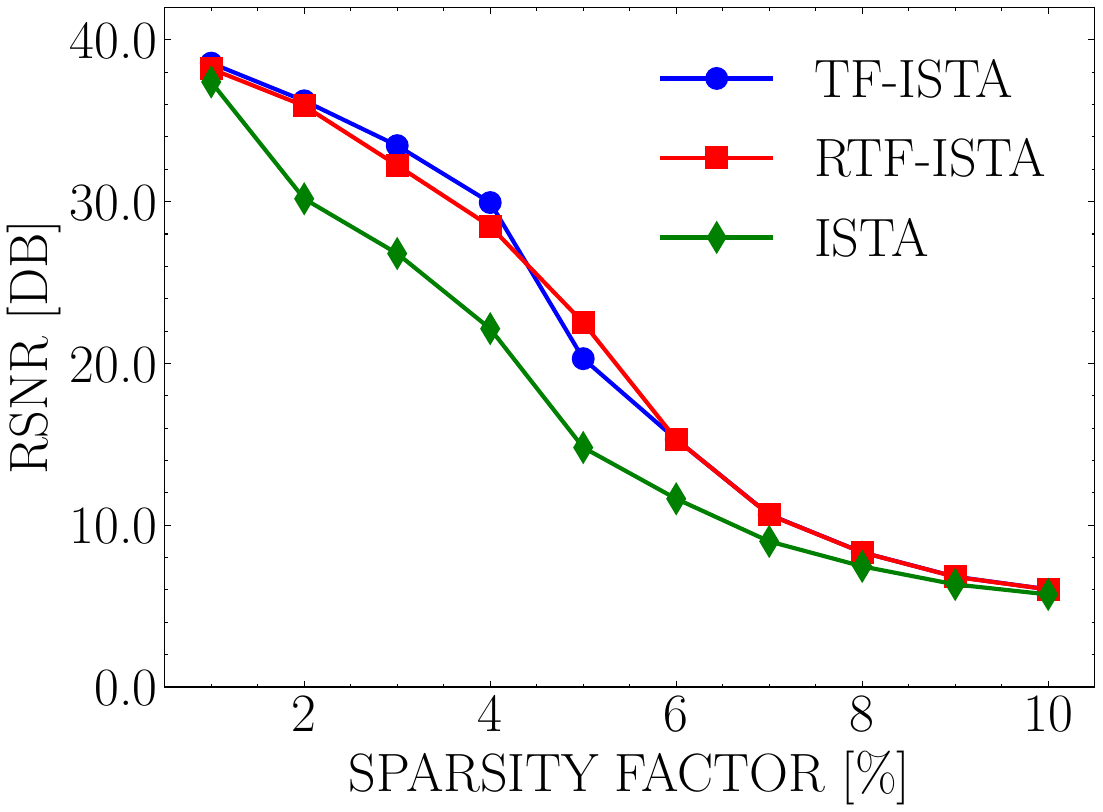}  &
            \cellcolor{white} \includegraphics[width=\linewidth]{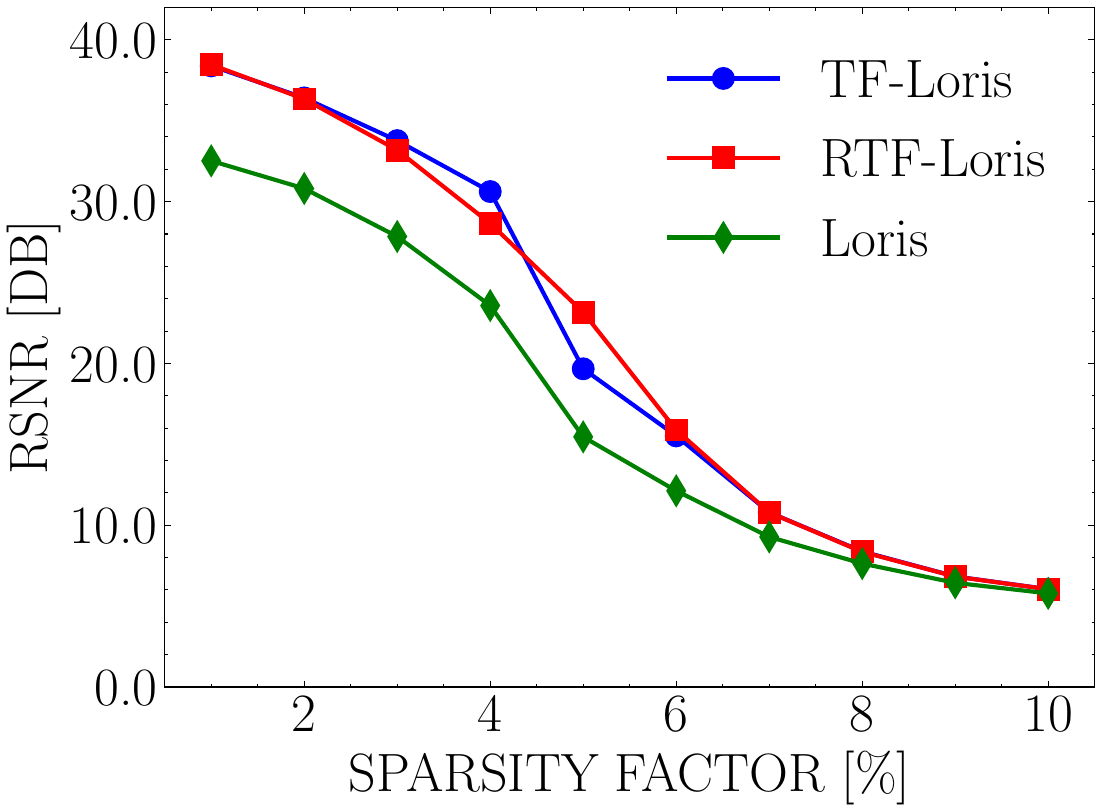}  \\
            \cellcolor{white} \includegraphics[width=\linewidth]{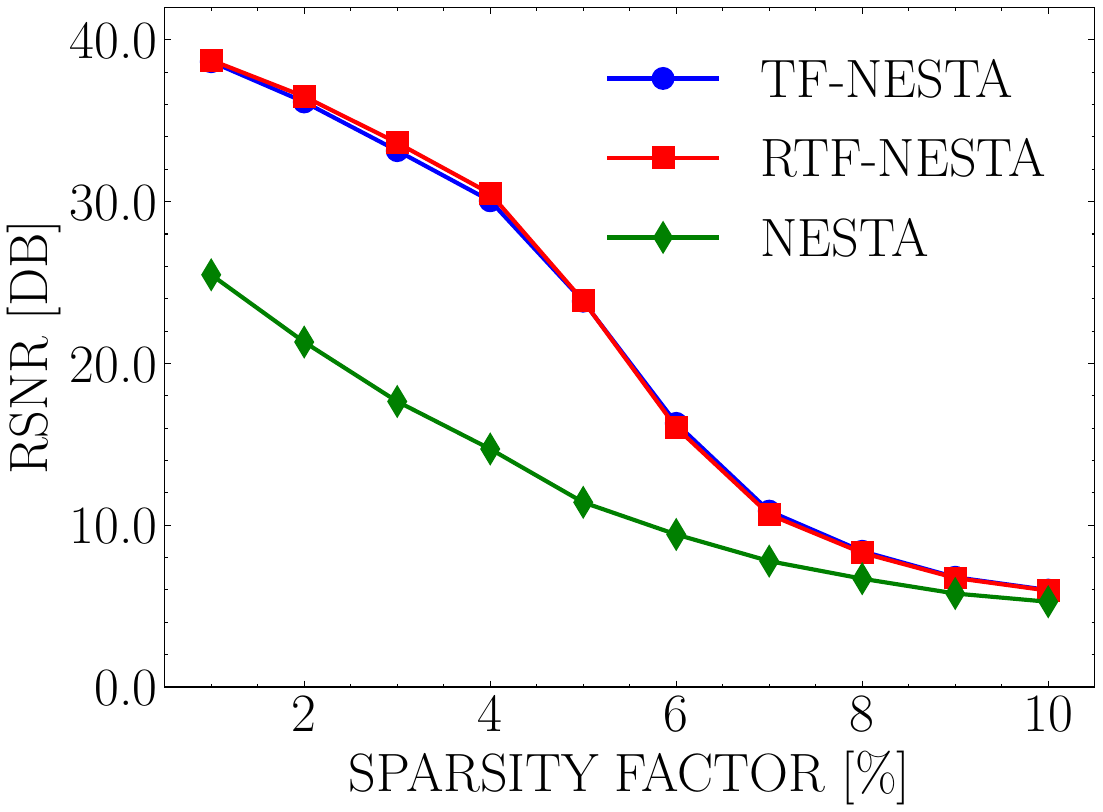}  &
            \cellcolor{white} \includegraphics[width=\linewidth]{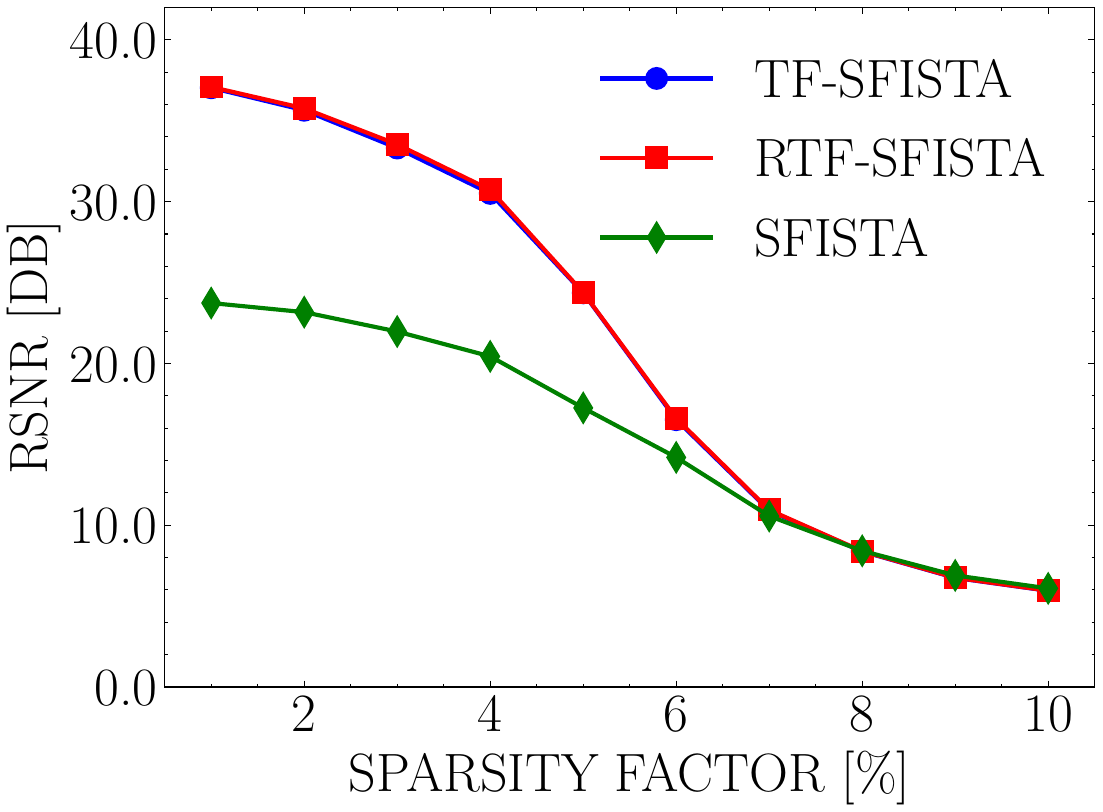}  \\
		\end{tabular} 
	\caption[]{Baseline methods ISTA and Loris in the top row, NESTA and SFISTA in the bottom row, are compared against the TF and RTF variants for SNR = 40 dB for various sparsity level. The sparsity factor (expressed in \%) measures the number of non-zeros in the sparse signal. Lower the sparsity factor, sparser is the signal ($\bld \alpha$). The reconstruction of the analysis-sparse signal $\bld x = \bd D \bld \alpha$ becomes less effective for larger sparsity factors.}
	\label{fig:ISTA_Loris_variants_SNR}
	\end{center}
\end{figure}

% %----------------------------------------------- Table SNR 50 --------------------------------------------------------
\begin{table*}[t]
	\centering
	\caption{RSNR (in dB)  $\pm$ standard deviation for analysis-sparse signal recovery using various competing methods for SNR = 50 dB.  The best performance is shown in \textbf{boldface} and the second best is shown \underline{underlined}.}
	\label{table:SNR_50}
	\vskip 0.05in
	\begin{center}
		\begin{small}
			\begin{sc}
			\resizebox{0.85\linewidth}{!}{	\begin{tabular}{c||ccccc}
			\hline
					\cellcolor{white} & & & \textbf{Sparsity} & &   \\ 
					\hline
					\textbf{Method} & 1$\%$ & 3$\%$ & 5$\%$ & 7$\%$ & 10$\%$  \\
					\hline
                    ISTA  & $41.86 \pm 1.22$ & $27.56 \pm 1.39$ & $15.31 \pm 2.19$ & $9.11 \pm 1.53$ & $5.68 \pm 0.71$\\
					NESTA  & ${35.24} \pm {1.17}$ & ${26.80} \pm {1.16}$ & ${17.25} \pm {2.39}$ & $9.87 \pm 1.74$ & $5.93 \pm 0.79$  \\
                    SFISTA  & $ 23.83 \pm 1.13$ & $22.09 \pm 0.89$ & $17.78 \pm 1.89 $ & $ 10.77 \pm 2.13 $ & $ \textbf{6.16} \pm \textbf{0.87}$  \\
					Loris & $32.61 \pm 1.14$ & $28.62 \pm 1.26$ & $15.84 \pm 3.25$ & $9.41 \pm 1.63$ & $5.83 \pm 0.75$\\
					\hline
					TF-ISTA & $46.20 \pm 1.10$ & $41.69 \pm 2.85$ & $21.50 \pm 7.16$ & $10.89 \pm 2.41$ & $6.06 \pm 0.80$   \\
					RTF-ISTA &${48.62} \pm {1.19}$ & $41.66 \pm 3.36$ & ${24.60} \pm {4.61}$ & $10.89 \pm 2.42$ & $6.09 \pm 0.84$  \\
					TF-NESTA & $ 48.90 \pm 1.16 $ & $ \underline{43.21} \pm \underline{1.01}$ & $\textbf{30.72}  \pm \textbf{7.09}$ & $ 11.02 \pm 2.69$ & $ 6.00 \pm 0.85$   \\
                    RTF-NESTA & $ \underline{49.08} \pm \underline{1.17} $ & $ \textbf{43.67} \pm \textbf{1.02} $ & $ \underline{30.23} \pm \underline{7.44} $ & $ 10.81 \pm 2.63 $ & $ 6.00 \pm 0.84 $   \\
                    TF-SFISTA & $39.72 \pm 1.12$ & $37.64 \pm 0.96$ & $28.31 \pm 5.42$ & $\underline{11.21} \pm \underline{2.78}$ & $6.00 \pm 0.87$  \\
					RTF-SFISTA & $41.23 \pm 1.11$ & $39.07 \pm 0.97$ & $29.05 \pm 5.76$ & ${\textbf{11.30}} \pm {\textbf{2.81}}$ & $6.00 \pm 0.87$  \\
                    TF-Loris & $46.58 \pm 1.13$ & ${42.49} \pm {2.81}$ & $22.75 \pm 7.71$ & ${11.02} \pm {2.47}$ & $6.06 \pm 0.81$  \\
                    RTF-Loris  & ${\textbf{{49.41}}} \pm {{\textbf{1.22}}}$ & ${42.58} \pm {3.23}$ & ${25.19} \pm {4.44}$ & ${11.04} \pm {2.48}$ & $\underline{6.10}$ $ \pm$ $\underline{0.84}$  \\
					\hline
				\end{tabular}}
			\end{sc}
		\end{small}
	\end{center}
% 	\vskip -0.1in
\end{table*}
% %------------------------------------- Different sensing matrices -----------------------------------------------
\begin{table*}[t]
	\centering
	\caption{RSNR (in dB)  $\pm$ standard deviation for analysis-sparse signal recovery using various competing methods and sensing matrices for SNR = 50 dB and sparsity factor = 1\%.  The best performance is shown in \textbf{boldface} and the second best is shown \underline{underlined}.}
	\label{table:sensing}
	\vskip 0.05in
	\begin{center}
		\begin{small}
			\begin{sc}
			\resizebox{0.8\linewidth}{!}{	\begin{tabular}{c||cccc}
			\hline
					\cellcolor{white} & & \multicolumn{2}{c}{\textbf{Sensing Matrix}}  &   \\ 
					\hline
					\textbf{Method} & Gaussian & Bernoulli & uniform & Laplacian  \\
					\hline
                    ISTA  & $41.86 \pm 1.22$ & $ 43.46\pm1.00 $ & $45.06 \pm1.11 $ & $40.97 \pm 2.56$ \\
					NESTA  &  ${35.24} \pm {1.17}$& $35.07 \pm1.08 $ & $ 35.29\pm 1.15$ & $35.27 \pm 1.11$ \\
                    SFISTA  & $ 23.83 \pm 1.13$ & $ 23.81\pm 1.04$ & $ 23.95\pm1.12 $ & $ 23.94\pm 1.10$ \\
					Loris & $32.61 \pm 1.14$ & $ 32.91\pm 1.08$ & $ 33.31\pm 1.18$ & $32.61 \pm1.06 $ \\
					\hline
					TF-ISTA& $46.20 \pm 1.10$ & $46.23 \pm1.05 $ & $46.41\pm1.14 $ & $46.33 \pm1.11 $ \\
					RTF-ISTA & ${48.62} \pm {1.19}$ & $\underline {49.40} \pm \underline {1.19} $ & $48.85 \pm1.21 $ & $ {49.08} \pm  {1.20} $ \\
					TF-NESTA& $ 48.90 \pm 1.16 $ & $48.92 \pm 1.15$ & $ 49.00\pm 1.24$ & $ 49.01\pm1.21 $ \\
                    RTF-NESTA & $ \underline{49.08} \pm \underline{1.17} $ & $49.07 \pm1.16 $ & $\underline {49.15} \pm \underline {1.27} $ & $ \underline {49.17} \pm  \underline {1.22}$ \\
                    TF-SFISTA & $39.72 \pm 1.12$ & $ 39.75\pm 1.06$ & $39.88 \pm 1.11$ & $39.90 \pm1.12 $ \\
					RTF-SFISTA & $41.23 \pm 1.11$ & $42.30 \pm1.07 $ & $42.54\pm1.19 $ & $42.40 \pm 1.07$ \\
                    TF-Loris & $46.58 \pm 1.13$& $46.60 \pm 1.09$ & $46.79 \pm 1.17$ & $46.73 \pm 1.14$ \\
                    RTF-Loris  & ${\textbf{{49.41}}} \pm {{\textbf{1.22}}}$ & $ {\bf50.11}\pm {\bf1.25}$ & ${\bf49.63} \pm {\bf1.31} $ & $ {\bf49.85} \pm {\bf1.27} $ \\
					\hline
				\end{tabular}}
			\end{sc}
		\end{small}
	\end{center}
% 	\vskip -0.1in
\end{table*}
% %------------------------------------------------------------
% %------------------------------------------------------------
\paragraph{Analysis-sparse recovery experiments}
\textcolor{black}{One hundred realizations $\{ \bld x_i^* = \bd D \bld \alpha_i^*\}_{i=1}^{100}$ are generated and the corresponding reconstructions $\{ \hat{\bld x}_i^*\}_{i=1}^{1000}$ are obtained. {\color{black}The performance measure used is reconstruction signal-to-noise ratio (RSNR), which is defined as
\begin{equation}
	\text{RSNR}(\hat{\boldsymbol{x}}, \boldsymbol{x}^*) = 20 \cdot \log_{10} \left( \frac{\| \boldsymbol{x}^* \|_2}{\| \hat{\boldsymbol{x}} - \boldsymbol{x}^* \|_2} \right) \text{ dB},
\end{equation}
where $\boldsymbol{x}^*$ is the ground-truth vector and $\hat{\boldsymbol{x}}$ is its estimate. For a fair comparison, we use the same stopping criterion for all the algorithms: either $\norm{\bld{x}^{(k)}-\bld{x}^{(k-1)}} < 10^{-4}$, where $k$ denotes the iteration index, or when a preset number of iterations is reached, whichever happens earlier.\\
\indent Fig.~\ref{fig:ISTA_Loris_variants_SNR} shows the variation of RSNR with varying levels of sparsity from $1\%$ to $10\%$ for SNR = 40 dB, for TF and RTF variants of benchmark methods. The performance of ISTA decays rapidly as the signals become less sparse. TF-ISTA and RTF-ISTA perform better than ISTA for different levels of sparsity. A similar trend can be observed in the case of Loris, NESTA, and SFISTA and their TF/RTF variants. \\
\indent Tables \ref{table:SNR_30} and \ref{table:SNR_50} show the results of various algorithms for SNR = 30 dB and SNR = 50 dB, respectively. In the noisy case with SNR = 30 dB, Loris performs the best when the sparsity is 1\%. RTF-Loris, TF-SFISTA, and TF-Loris outperform the competing methods for other sparsity factor levels, while SFISTA leads by a small margin of 0.07 dB RSNR for 10\% sparsity. In the less noisy scenario of SNR = 50 dB, TF and RTF variants have a clear advantage over the benchmark methods across all sparsity levels, except when the sparsity factor is 10\%, where SFISTA has a marginally superior performance over RTF-Loris.}\\
\indent Table~\ref{table:sensing} portrays the difference in performance when sensing matrices from different probability distributions are used. The entries of the sensing matrices were drawn from Gaussian, Bernoulli, uniform, and Laplacian distributions. The TF and RTF variants continue to outperform the benchmark methods for various choices of the distribution of the sensing matrix.
}
% ----------------------------------------
% ----------------------------------------
% 	      COMPRESSED IMAGE RECOVERY
% ----------------------------------------
% ----------------------------------------
\section{Compressed Sensing Image Recovery (CS-IR)}
\label{sec:image_recovery}
We now apply the techniques developed thus far for solving the important problem of compressed sensing image recovery (CS-IR). In this imaging modality, an image $\bld x\in\bb R^n$ (vectorized representation) is observed using a sensing matrix $\bd A\in\bb R^{m\times n}$ to obtain compressed measurements $\bld y = \bd A\bld x$. The image is assumed to be sparse under a suitable \emph{sparsifying transform} $\cl F$, such as the DCT \cite{ravi2012}, wavelet transform \cite{wavelet-CSIR}, or a learnt dictionary \cite{KSVD}. Recent learning-based CS-IR methods employ deep neural networks as the analysis operator, for example, ISTA-Net \cite{ISTA-Net}, ADMM-CSNet \cite{yang2018admm},  SCSNet \cite{shi2019scalable}, NL-CSNet \cite{cui2021image}, and NLR-CSNet \cite{sun2020learning}. On the other hand, models that do not explicitly enforce sparsity such as ReconNet \cite{reconnet}, and models that rely on generative image priors such as compressed sensing generative model (CSGM) \cite{CSGM}, deep compressed sensing (DCS) \cite{DCS} have also been successfully developed for solving the CS-IR problem. Learning-based methods have been shown to result in a superior reconstruction performance than their optimization counterparts. We unfold the iterations of TF-ISTA to obtain TF-ISTA-Net, much along the lines of ISTA-Net as shown in Figure~\ref{fig:schematic_model}. In this framework, the analysis operator is a learnable deep sparsifying transform $\cl F$ and the corresponding analysis-sparse recovery problem is posed as:
\begin{equation}
   \underset{\bld x\in \bb R^n}{\text{minimize }} f_{\bd B}(\bld x) + \lambda \norm{\cl F (\bld x)}_1,
\label{eq:form-ISTA-Net}
\end{equation}
where $\lambda >0$ is the regularization parameter. As in the case of ISTA-Net, we obtain the following update steps for solving \eqref{eq:form-ISTA-Net}, where the sparsifying transform $\cl F$ is allowed to vary from one layer to the next.
% --------------------------------------------------------------------------
\begin{figure}[t]
	\centering
	\includegraphics[width=6in]{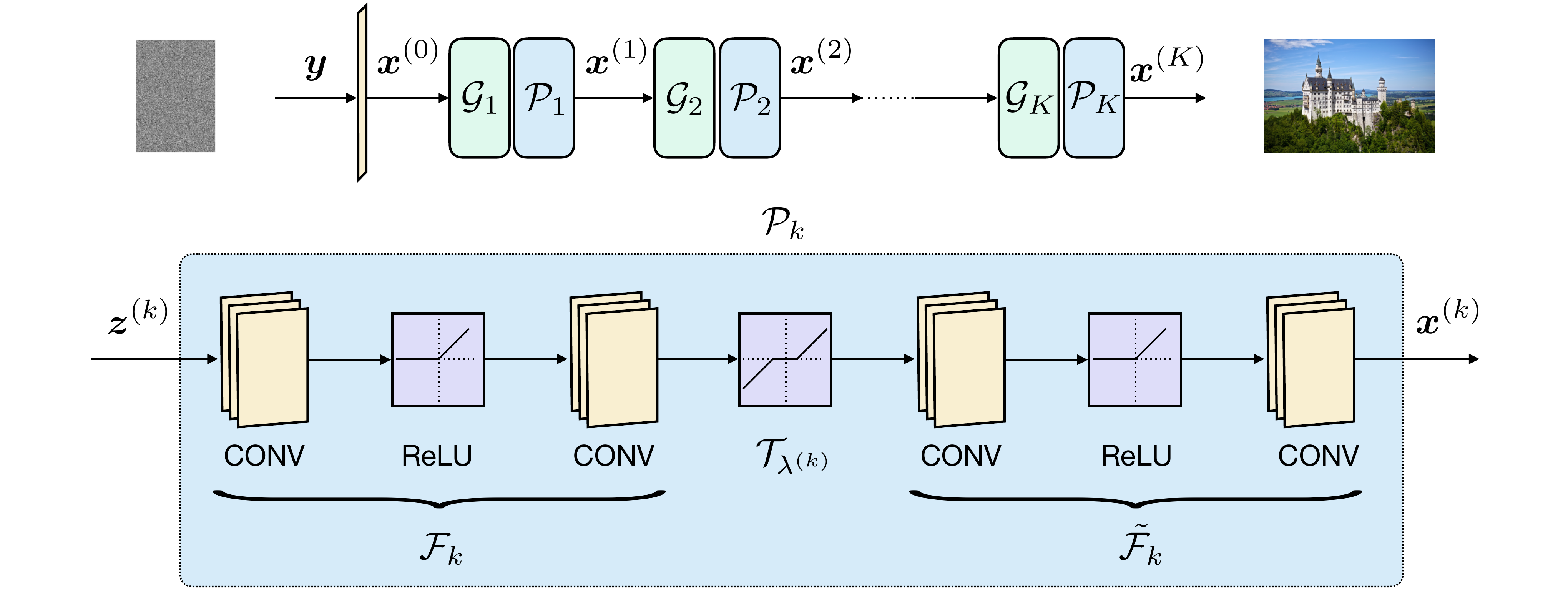}
	\caption{Architecture of the deep neural network (DNN) for compressed sensing image recovery obtained by unrolling TF-ISTA with a learnable sparsifying transform $\{\cl F_k\}$, where $k$ denotes the layer index. The DNN can be trained in an end-to-end fashion.}
	\label{fig:schematic_model}
\end{figure}
% --------------------------------------------------------------------------
Consequently, we have $\cl F_k$ instead of $\cl F$, where $k$ denotes the layer index, in the following updates/layer-wise transformations: 
\begin{align}
	\textbf{Gradient step ($\cl G$): }& \bld z^{(k)} = \cl G\left(\bld x^{(k-1)}\right) \label{eq:istanet_gradient} 
	= \bld x^{(k-1)} - \eta^{(k)}\nabla f_{\bd B}(\bld x^{(k-1)}),  \\ 
	\textbf{Proximal step ($\cl P_k$): }& \bld x^{(k)} = \cl P_{k}\left(\bld z^{(k)}\right) \label{eq:istanet_iterations}
	=  \tilde{\cl {F}}_{k}\left(\cl T_{\lambda^{(k)}}\left(\mathcal{F}_{k}\left({\bld z}^{(k)}\right)\right)\right), 
\end{align}
{\color{black} where $\eta^{(k)}>0$ is the learnt step-size parameter in the $k$\textsuperscript{th} layer}. Effectively, $\tilde{\cl F}_{k}$ is trained to invert $\cl F_{k}$. The learnable quantities are $\left\{\eta^{(k)}, \lambda^{(k)}, \cl F_k, \tilde{\cl F}_k\right\}_{k=1}^K$, where $K$ is the total number of layers. The network is trained end-to-end on image patches with the loss function:
\begin{equation} \begin{split}
	\cl L\left(\bld x^{(K)}, \bld x^*\right) = \norm{\bld x^{(K)} - \bld x^{*}}_2^2 + 
	\sum_{k=1}^K \norm{\bld x^{*} - \tilde{\cl F}_k\left(\cl F_k \left(\bld x^{*} \right)\right)}_2^2,  \end{split}
\label{eq:network_function}
\end{equation}
where the first term in the loss is aimed at reducing the $\ell_2$ loss between the estimate obtained at the output of the $K$\textsuperscript{th} layer and the ground-truth, thereby bringing the estimate closer to the ground-truth, while the second term enforces $\tilde{\cl F}_{k}$ to act as the inverse of the operator $\cl F_k$.\\
\indent In a similar vein, one could also unfold the TF and RTF variants of ISTA and Fast-ISTA (FISTA), namely, TF-FISTA, RTF-ISTA, RTF-FISTA to obtain the corresponding ``Net'' counterparts. 
Overall, we have four network counterparts of the algorithms developed in this paper.\\
\indent {\color{black} The back-projection loss was used for solving the inverse problem in computerized tomography (CT) and the resulting algorithm was unfolded to obtain the corresponding network counterpart \cite{genzel2022near}. The authors use a post-processing U-Net \cite{ronneberger2015convolutional} for iterative reconstruction. On the contrary, we use a sparsifying transform consisting of two linear convolutional blocks separated by ReLU activations inspired by ISTA-Net \cite{ISTA-Net}. Genzel {\it et al.} use a pre-trained U-Net and a computational backbone, but in our case, we train the sparsifying transform from scratch. The applications considered are also different -- while we consider CS-IR with random sensing matrices, Genzel {\it et al.} consider the reconstruction problem in CT imaging.}
% -------------------------------------------------------------------
\paragraph{Experiments on Natural Images}
We perform CS-IR on natural images taken from Set11 \cite{reconnet}, DIV2K \cite{Agustsson_2017_CVPR_Workshops}, Urban100 \cite{huang2015single} and BSD68 \cite{BSD68} datasets. The model shown in Figure~\ref{fig:schematic_model}, with $K=8$ unfoldings, is trained using $88,912$ image patches of size $33 \times 33$ extracted from $91$ images, similar to that reported in \cite{reconnet}. The sensing matrices are in $\bb R^{m \times 1089}$ and their entries are drawn from a Gaussian distribution with zero mean and variance $m^{-1}$ and the columns are normalized to have unit $\ell_2$-norm. The compressed sensing (CS) ratio is $m/{(33 \times 33)} = m/1089.$ The training was performed for $100$ epochs using Adam \cite{kingma2014adam} optimizer with a constant learning rate. The computing platform deployed is a Dell 5820 workstation equipped with Intel Xeon processor and a single Nvidia RTX 2080 Ti GPU. The training took approximately three hours.
\begin{figure}[]
	\begin{center}
	\begin{tabular}[htb]{P{.23\linewidth}P{.22\linewidth}P{.22\linewidth}P{.22\linewidth}}
		\cellcolor{white} {\includegraphics[width=1.0\linewidth]{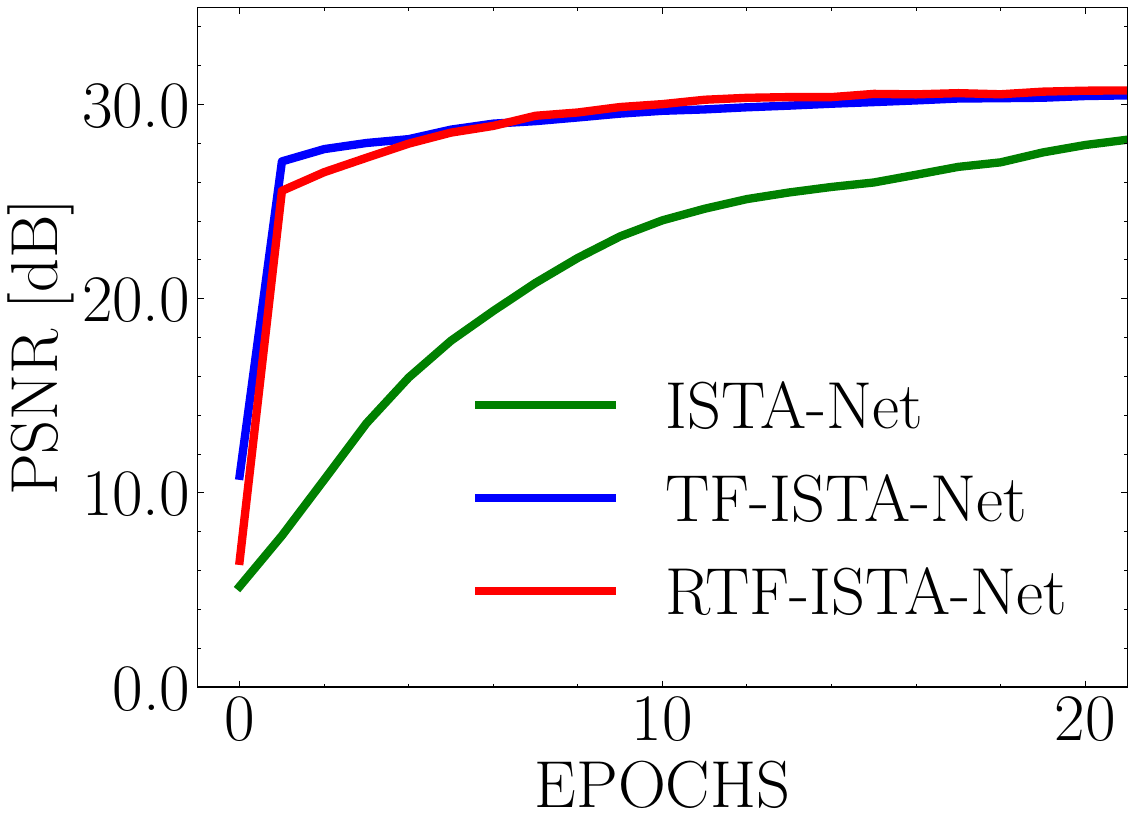}} &
		\cellcolor{white} {\includegraphics[width=1.0\linewidth]{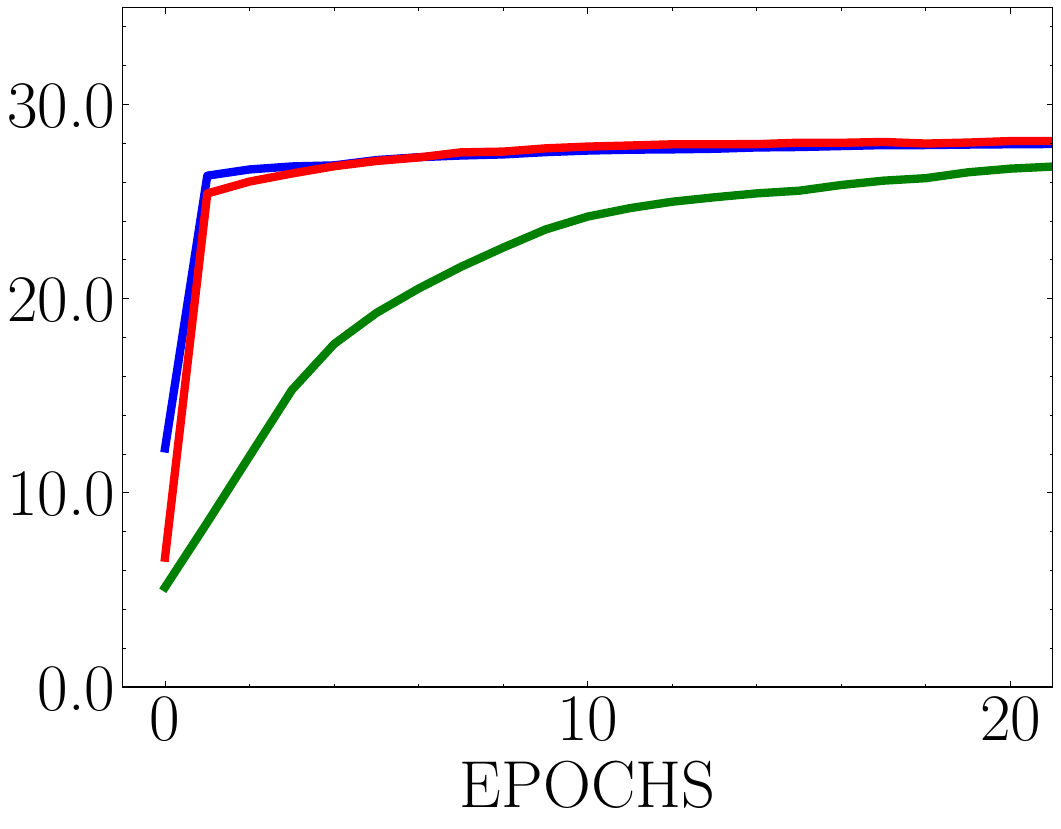}} &
		\cellcolor{white} {\includegraphics[width=1.0\linewidth]{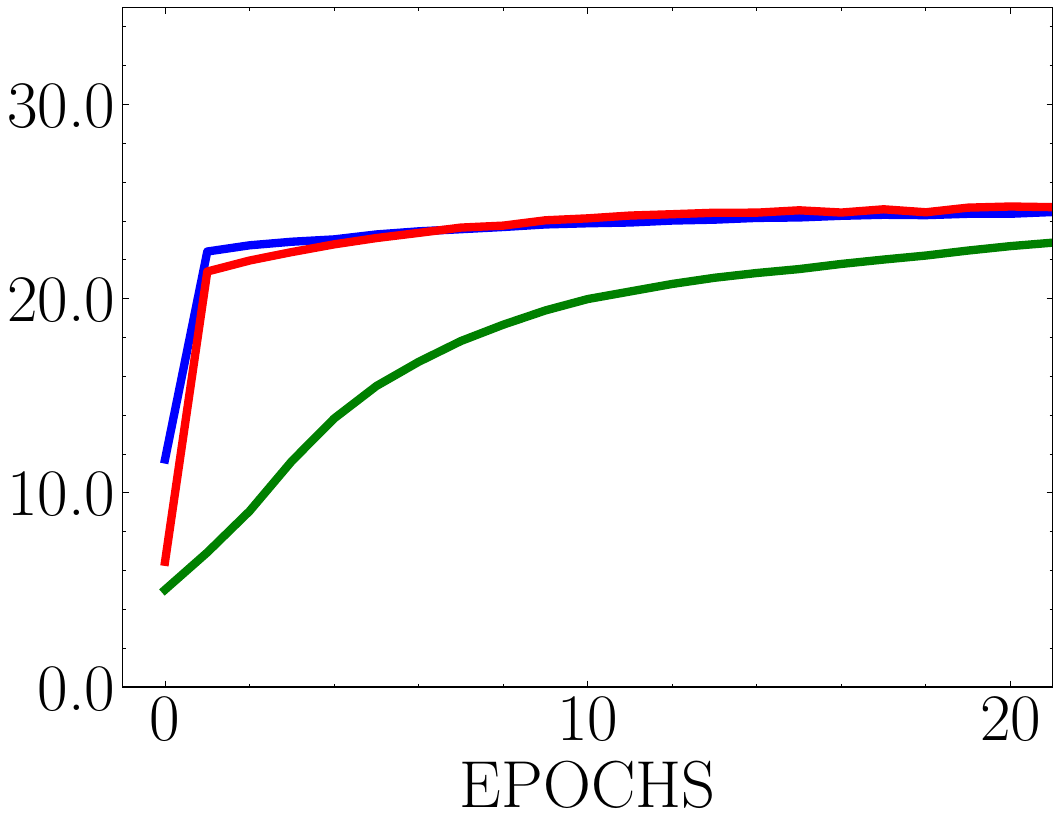}} &
		\cellcolor{white} {\includegraphics[width=1.0\linewidth]{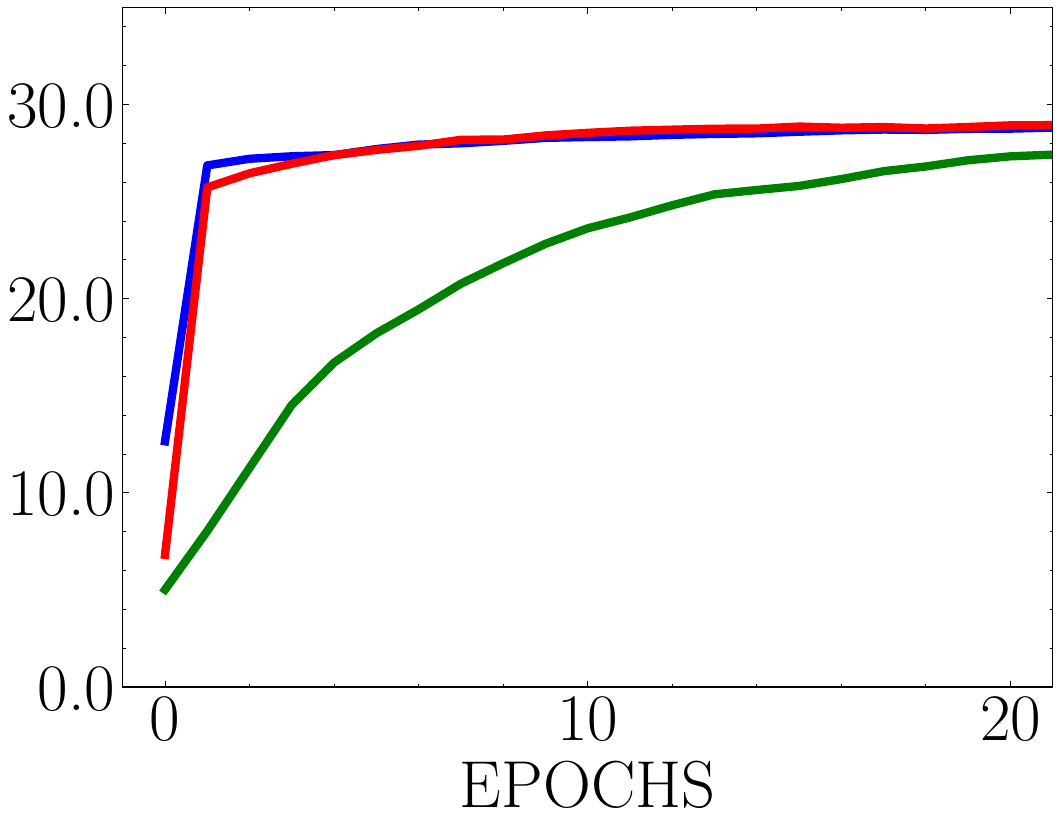}} \\
		\cellcolor{white} {\includegraphics[width=1.0\linewidth]{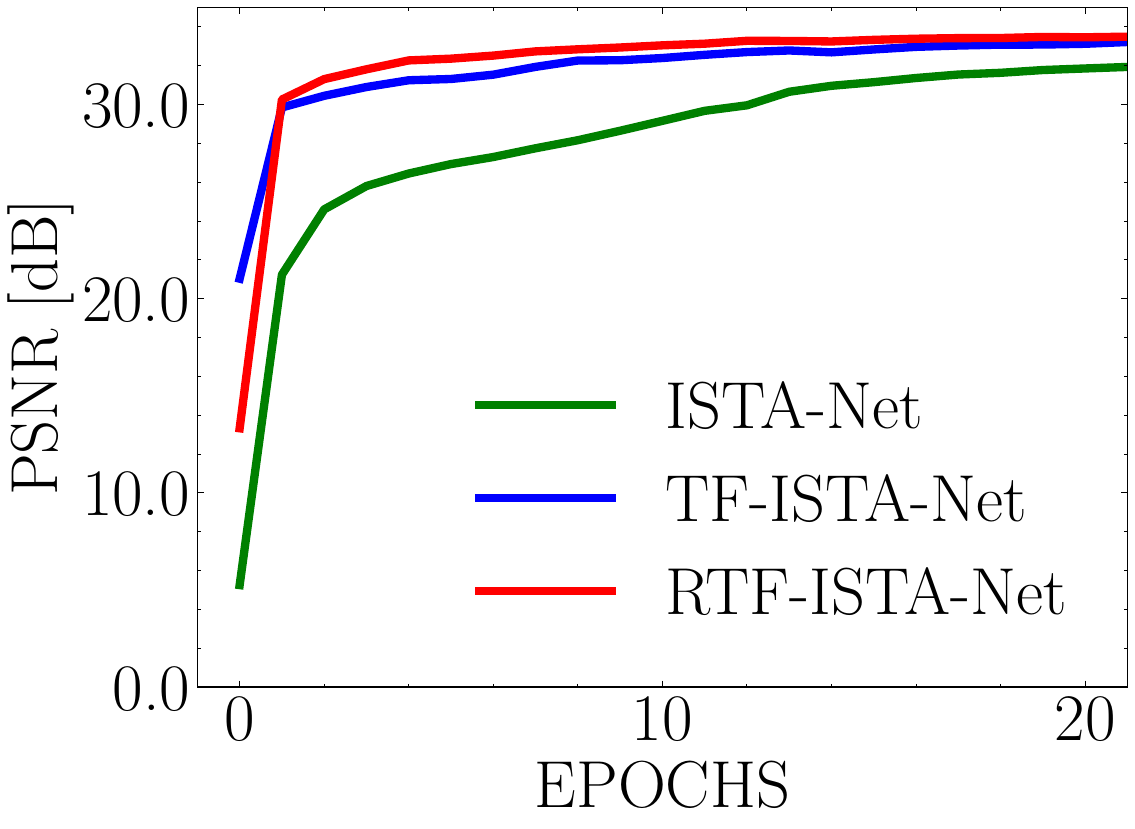}} &
		\cellcolor{white} {\includegraphics[width=1.0\linewidth]{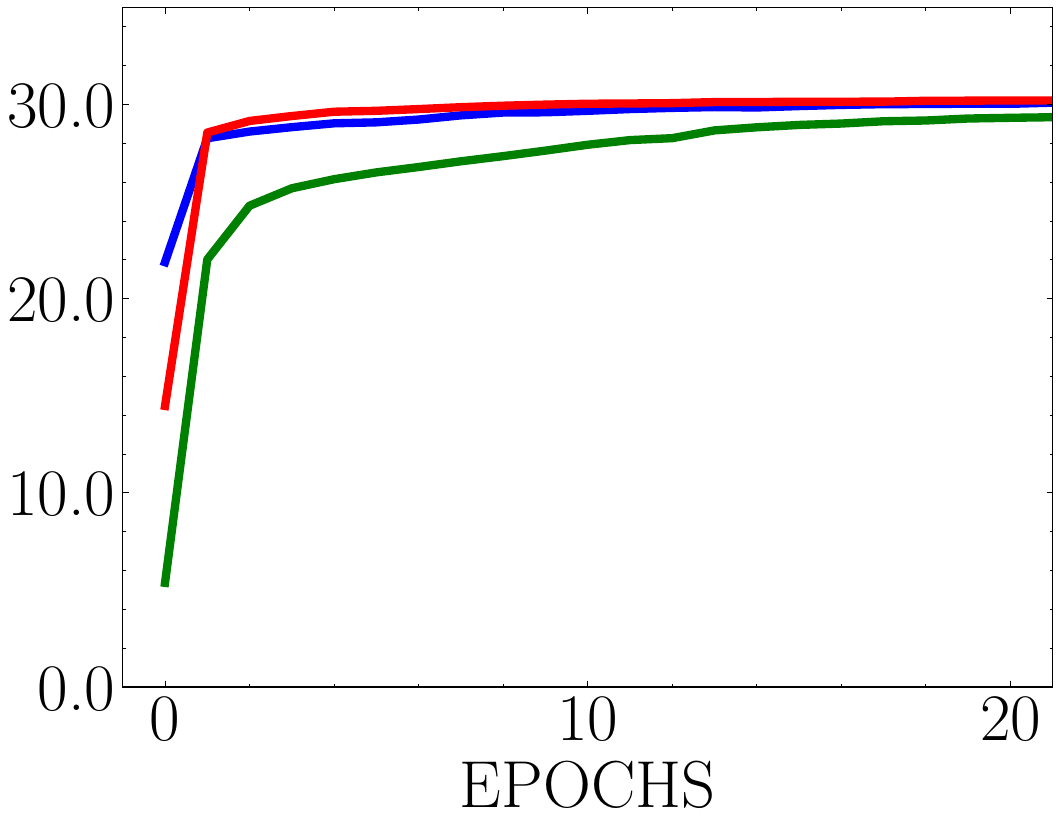}} &
		\cellcolor{white} {\includegraphics[width=1.0\linewidth]{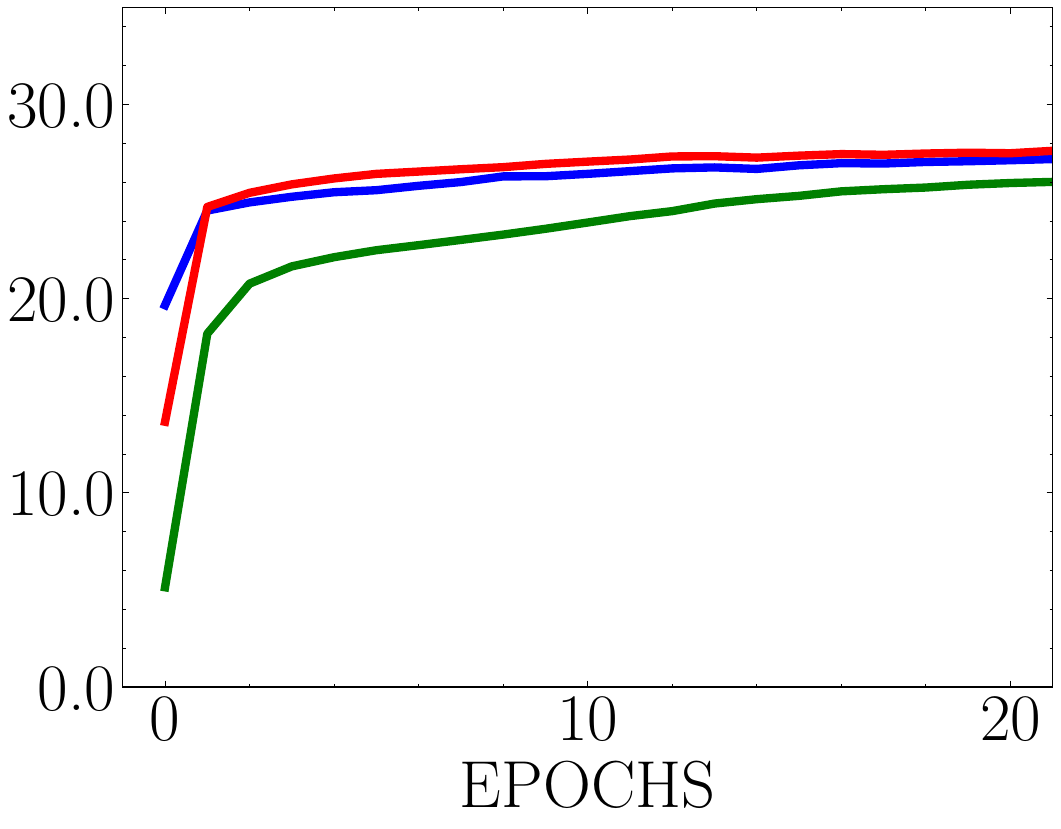}} &
		\cellcolor{white} {\includegraphics[width=1.0\linewidth]{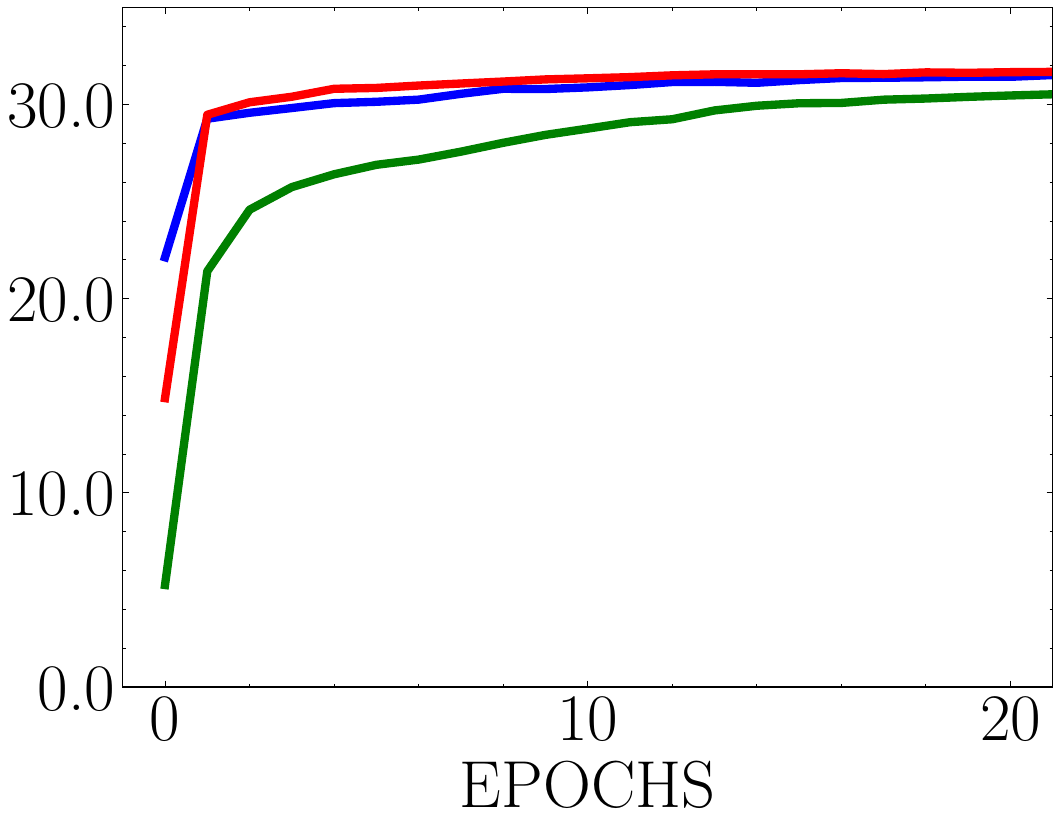}} \\
		$\quad \ \, $ Set11 & $\quad $ BSD68 & $\quad $ Urban100 & $\quad  $ DIV2K \\
	\end{tabular}
	\caption{Training convergence behavior for 20\% (top row) and 30\% (bottom row) CS ratio on various test datasets with ISTA-Net, TF-ISTA-Net, and RTF-ISTA-Net models. The figures show that the best convergence performance is achieved by the TF variants.}
	\label{fig:CSIR_convergence_ISTA} 
	\end{center} 
\end{figure}
% --------------------------------------------------------------------------
\begin{figure}[htb]
    \begin{center}
    % \hskip -0.4in
	\begin{scriptsize}
	\begin{tabular}[b]{P{.20\linewidth}P{.20\linewidth}P{.20\linewidth}P{.20\linewidth}}
	% \captionsetup[subfigure]{labelformat=empty}
	\cellcolor{white} \includegraphics[width=\linewidth]{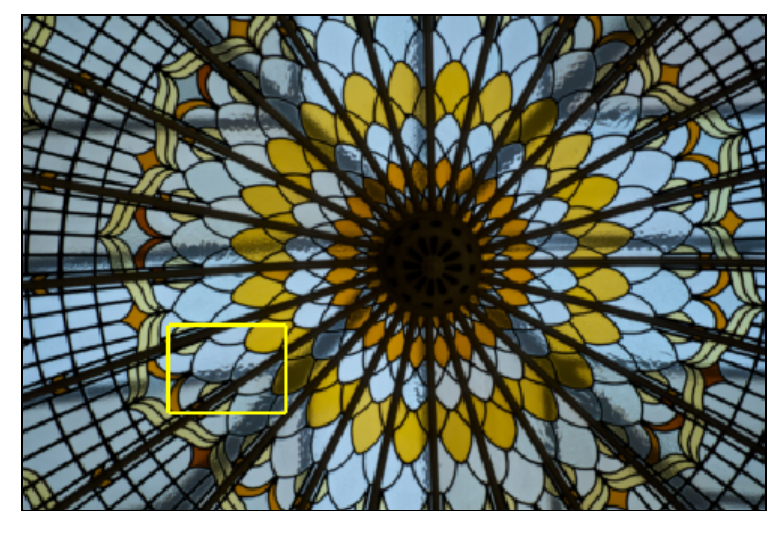} &
	\includegraphics[width=\linewidth]{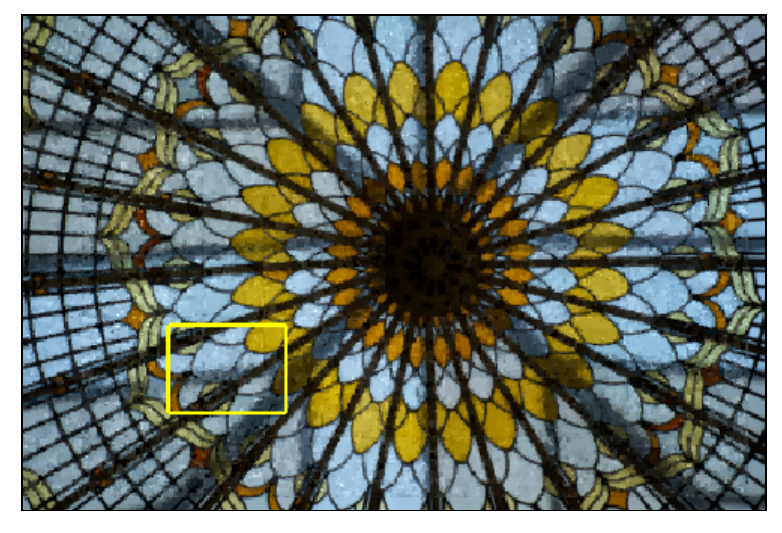}&
	\includegraphics[width=\linewidth]{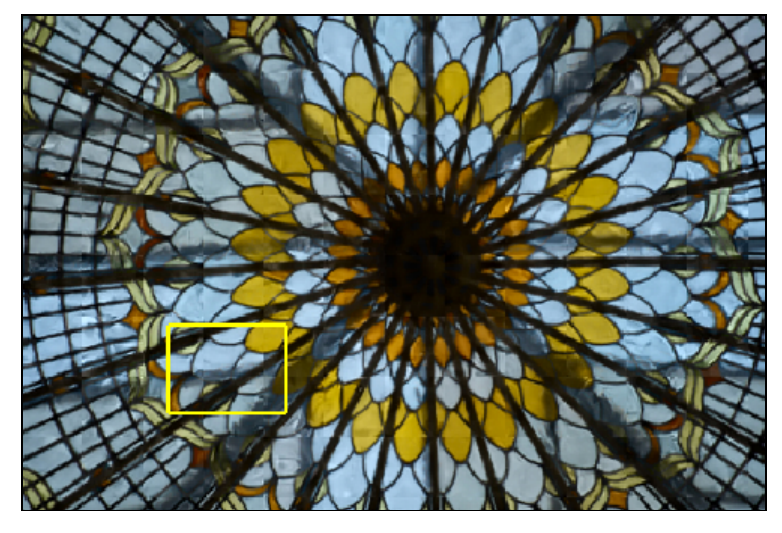}&
	\includegraphics[width=\linewidth]{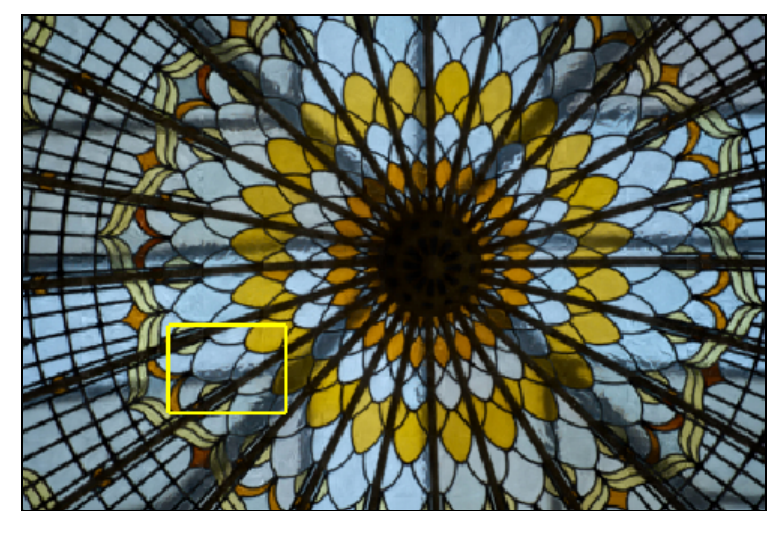} \\[-1pt]
	\includegraphics[width=\linewidth]{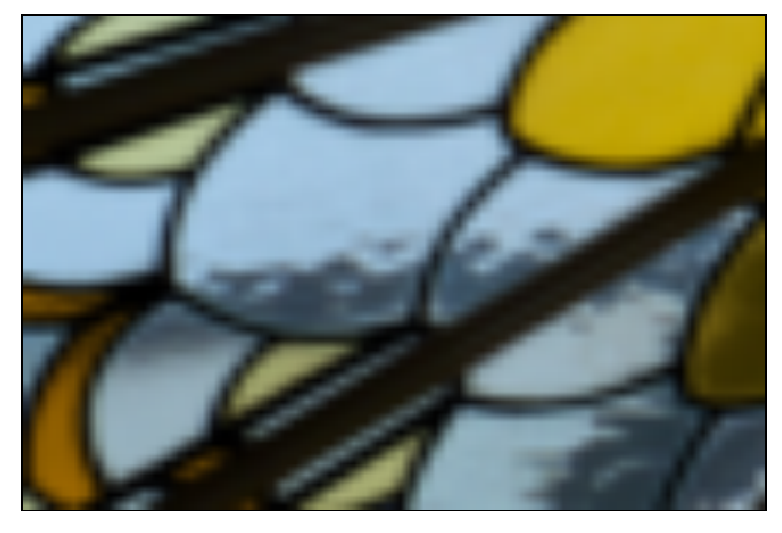}&
	\includegraphics[width=\linewidth]{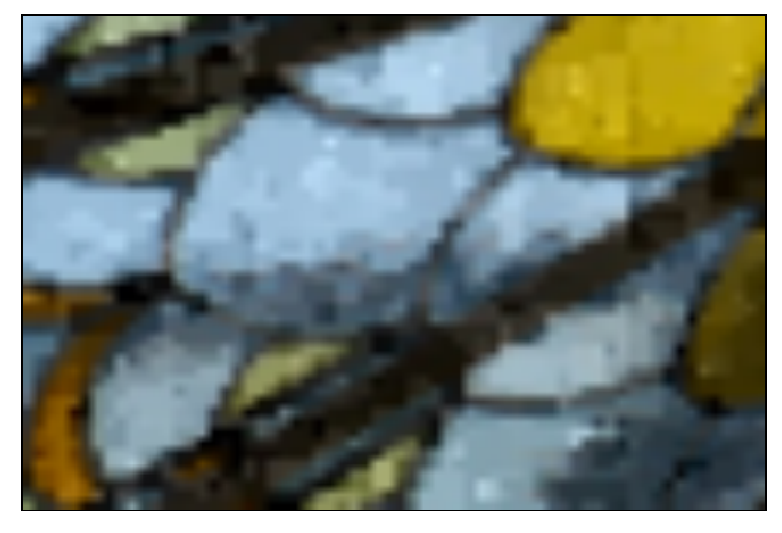}&
	\includegraphics[width=\linewidth]{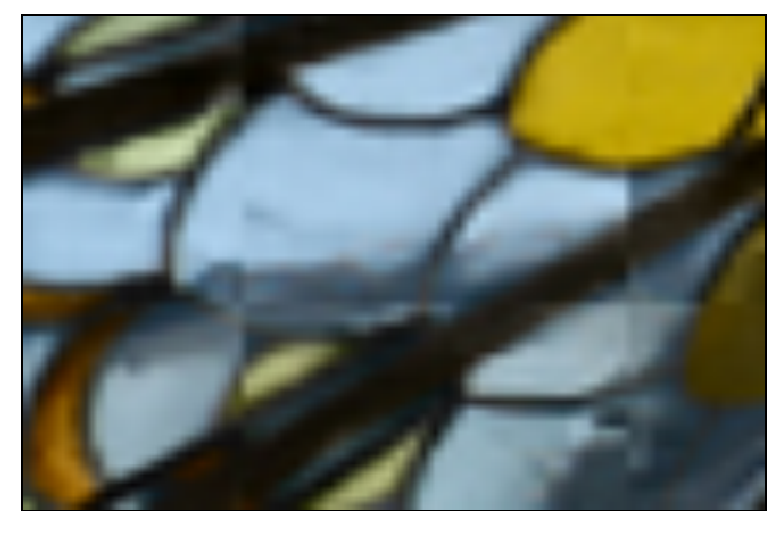}&
	\includegraphics[width=\linewidth]{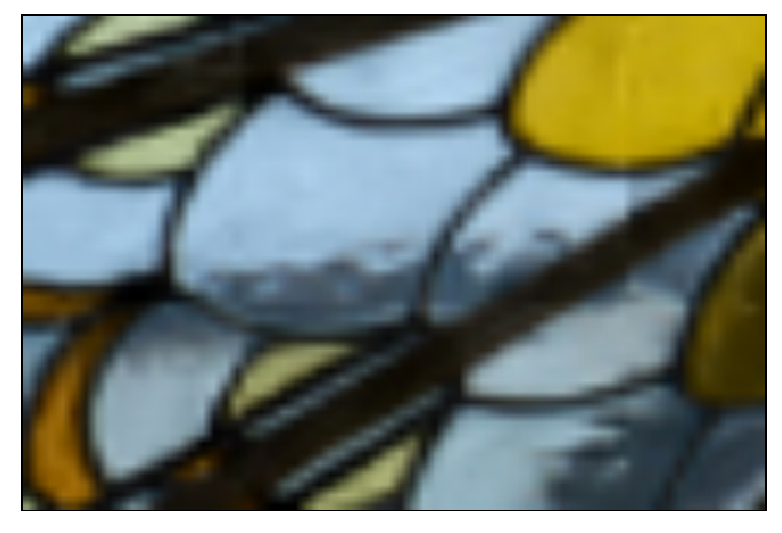} \\[-1pt]
	Original & TVAL3 & ReconNet & ISTA-Net \\
	& 22.76/0.8569 & 23.19/0.8674 & 29.71/0.9643 \\
	\includegraphics[width=\linewidth]{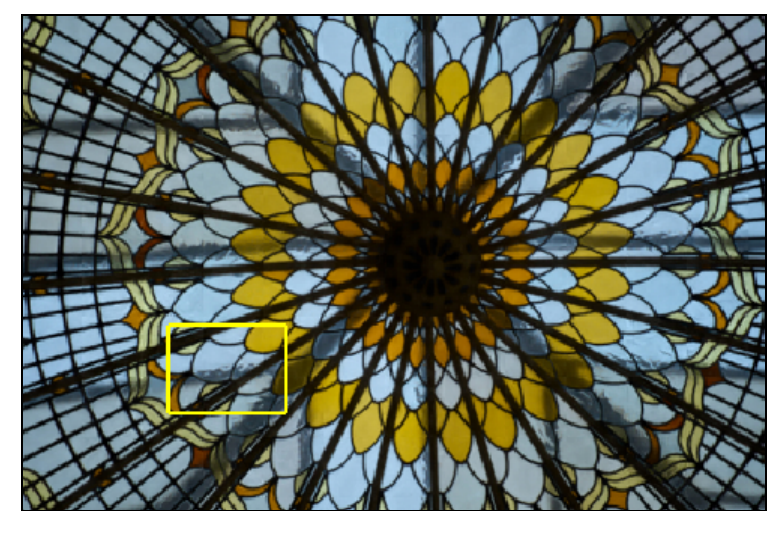}&
	\includegraphics[width=\linewidth]{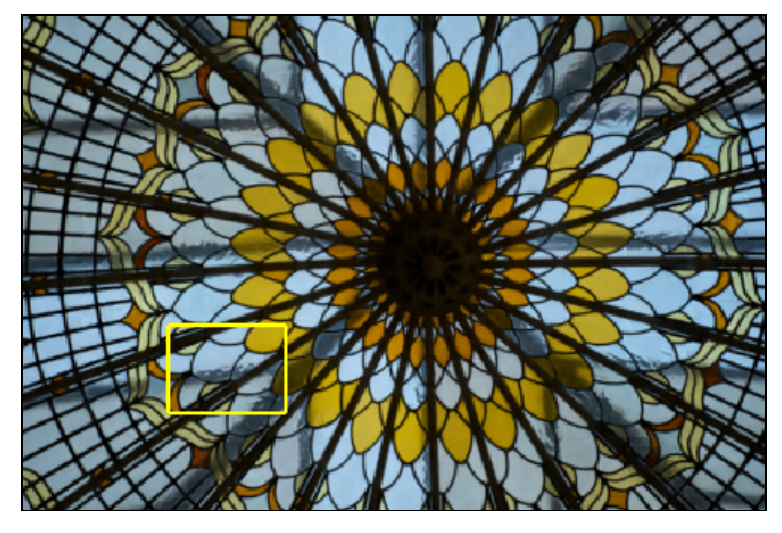}&
	\includegraphics[width=\linewidth]{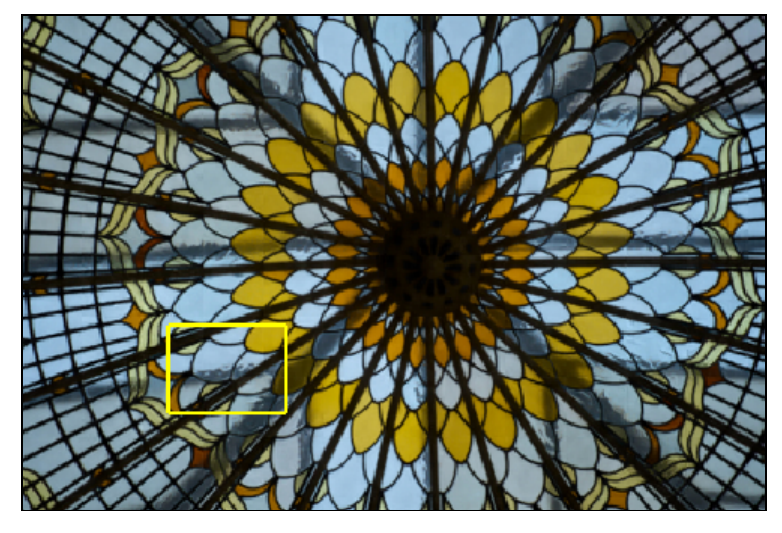}&
	\includegraphics[width=\linewidth]{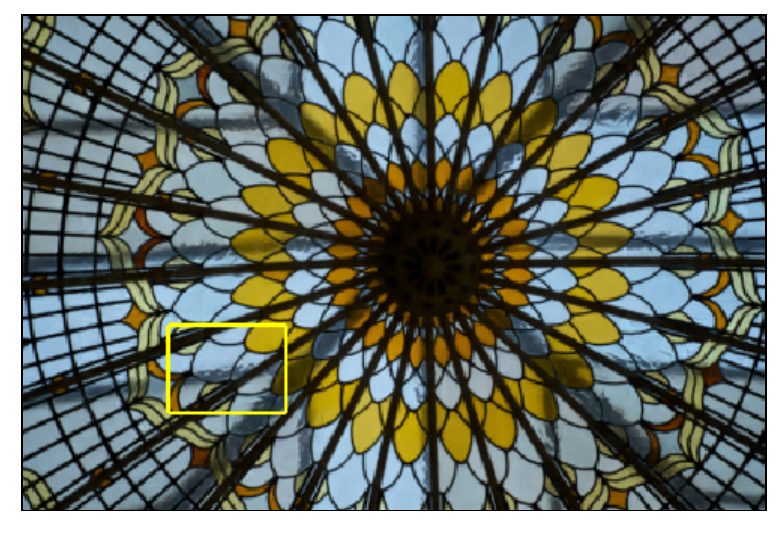} \\[-1pt]
	\includegraphics[width=\linewidth]{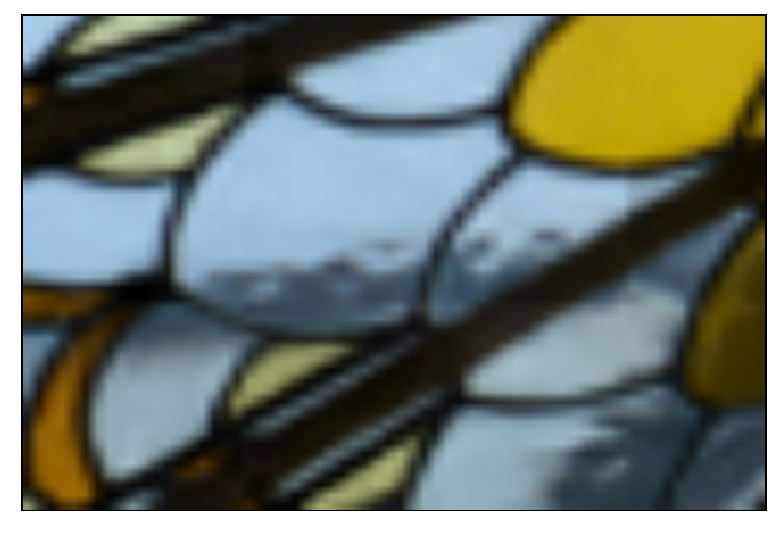}&
	\includegraphics[width=\linewidth]{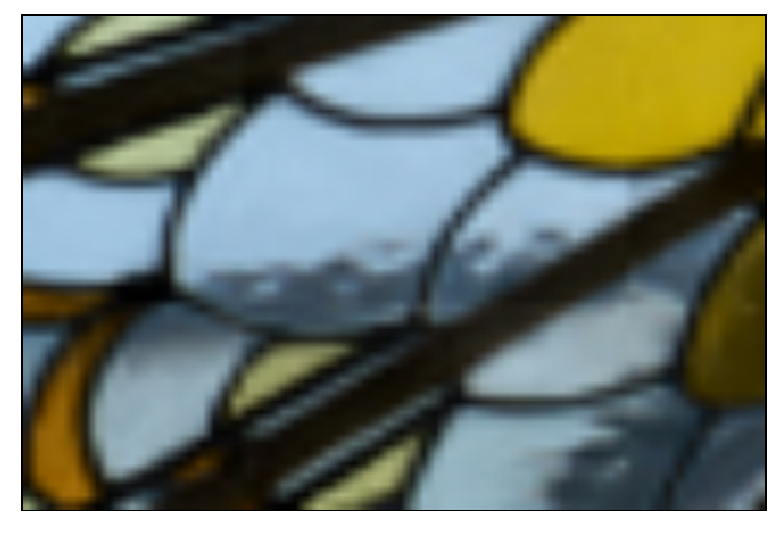}&
	\includegraphics[width=\linewidth]{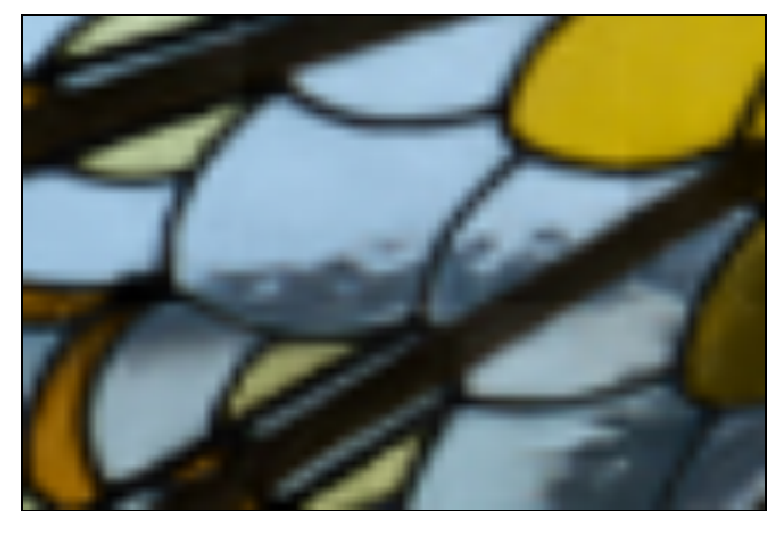}&
	\includegraphics[width=\linewidth]{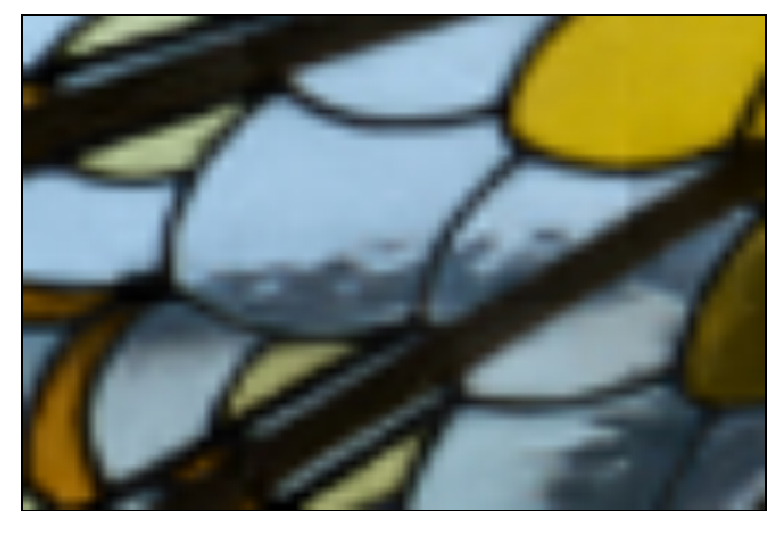} \\[-1pt]
	TF-ISTA-Net & RTF-ISTA-Net & \bf{TF-FISTA-Net}& RTF-FISTA-Net \\
	32.04/0.9783 & 32.22/0.9791 & \bf{32.54/0.9806}& 31.84/0.9774 \\
	\end{tabular} 
	\end{scriptsize}
	\caption{Reconstructions achieved using various methods under comparison. The numbers indicate PSNR/SSIM. Unlike the benchmark methods (first row), the proposed methods (second row) do not have strong artifacts. Spatial continuity is the best with the proposed methods, whereas the benchmark methods have distortions (cf. the distorted blue background in the zoomed-in area). TF-FISTA-Net has the best performance.}
	\label{fig:example_recon}   
\end{center}
\end{figure}
%----------------------------------------------- Table Set11 --------------------------------------------------------
\begin{table*}
	\centering
	\caption{CS-IR on \textit{Set11}: PSNR (in dB) and SSIM for compressed sensing image recovery using various competing methods. TF-FISTA-Net has a superior performance among all techniques under comparison for all CS ratios. We observe the following order of performance: TF-FISTA-Net $>$ TF-ISTA-Net $>$ RTF-FISTA-Net $\approx$ RTF-ISTA-Net $>$ ISTA-Net $>$ TVAL3 $>$ ReconNet. The symbol $>$ must be read as ``better than.'' The best performance is shown in \textbf{boldface} and the second best is shown \underline{underlined}.}
	\label{table:Set11}
	\vskip 0.05in
	\begin{center}
		\begin{small}
			\begin{sc}
			\resizebox{0.95\linewidth}{!}{	\begin{tabular}{c||ccccc}
			\hline
					%\toprule
					\cellcolor{white} & & & CS Ratio & &   \\ 
					%\midrule
					\hline
					Model & 10$\%$ & 20$\%$ & 30$\%$ & 40$\%$ & 50$\%$  \\
					%\midrule 
					\hline
				
                    TVAL3  & 22.56, 0.6589 & 26.48, 0.8057 & 29.07, 0.8717 & 31.25, 0.9112  &  33.23, 0.9370  \\
					ReconNet  & 23.59, 0.6694 & 26.34, 0.7988 & 27.64, 0.8526 & 30.38, 0.9005 & 31.03, 0.9104   \\
					ISTA-Net  & 26.33, 0.8003 & 30.43, 0.8924 & 33.08, 0.9318 & 35.21, 0.9522 & 37.13, 0.9652  \\
					%\midrule
					\hline
					TF-ISTA-Net & \underline{26.52, 0.8060} & \underline{30.98, 0.9021} & \underline{33.69, 0.9383} & \underline{35.97, 0.9578} & \underline{37.99, 0.9702}   \\
					RTF-ISTA-Net & 26.12, 0.7941 & 30.82, 0.8996 & 33.65, 0.9371 & 35.93, 0.9569 & 37.90, 0.9696   \\
					TF-FISTA-Net & \textbf{26.80, 0.8137} & \textbf{31.04, 0.9025} & \textbf{33.74, 0.9388} & \textbf{36.01, 0.9580} & \textbf{38.06, 0.9706}  \\
					RTF-FISTA-Net & 26.28, 0.7980 & 30.92, 0.9009 & 33.58, 0.9370 & 35.78, 0.9566 &  37.75, 0.9690  \\
					%\bottomrule 
					\hline
				\end{tabular}}
			\end{sc}
		\end{small}
	\end{center}
% 	\vskip -0.1in
\end{table*}

%----------------------------------------------- Table BSD68 --------------------------------------------------------
\begin{table*}
	\centering
	\caption{CS-IR on \textit{BSD68}: PSNR (in dB) and SSIM for compressed sensing image recovery using various competing methods. TF-FISTA-Net and TF-ISTA-Net have superior performance for all CS ratios. We observe the following order of performance: TF-FISTA-Net $\approx$ TF-ISTA-Net $>$ RTF-FISTA-Net $\approx$ RTF-ISTA-Net $>$ ISTA-Net $>$ TVAL3 $>$ ReconNet. The best performance is shown in \textbf{boldface} and the second best is shown \underline{underlined}.}
	\label{table:BSD68}
	\vskip 0.05in
	\begin{center}
		\begin{small}
			\begin{sc}
			\resizebox{0.95\linewidth}{!}{	\begin{tabular}{c||ccccc}
					\hline
					\cellcolor{white} & & & CS Ratio & &   \\ 
					\hline
					Model & 10$\%$ & 20$\%$ & 30$\%$ & 40$\%$ & 50$\%$  \\
					\hline
				
                    TVAL3  & 23.39, 0.6170 & 26.00, 0.7442 & 27.85, 0.8173 & 29.56, 0.8684 &  31.23, 0.9053  \\
					ReconNet  & 23.38, 0.6186 & 25.37, 0.7385 & 26.37, 0.8102 & 28.64, 0.8563  & 29.69, 0.8846   \\
					ISTA-Net  & 25.27, 0.6977 & 27.95, 0.8079 & 29.98, 0.8700 & 31.77, 0.9092 & 33.50, 0.9361  \\
					\hline
					TF-ISTA-Net & \underline{25.37, 0.7030} & \textbf{28.24, 0.8185} & \textbf{30.32, 0.8780} & \textbf{32.20, 0.9163} &  \underline{34.04, 0.9426}  \\
					RTF-ISTA-Net & 25.16, 0.6949 & 28.19, 0.8159 & 30.29, 0.8770 & 32.17, 0.9154 &  \underline{34.04, 0.9424}  \\
					TF-FISTA-Net & \textbf{25.44, 0.7059} & \underline{28.22, 0.8166} & \textbf{30.32, 0.8779} & \textbf{32.19, 0.9159} &  \textbf{34.07, 0.9432}  \\
					RTF-FISTA-Net & 25.20, 0.6938 & 28.19, 0.8160 & 30.26, 0.8747 & 32.13, 0.9151 & 33.95, 0.9416   \\
					\hline
				\end{tabular}}
			\end{sc}
		\end{small}
	\end{center}
% 	\vskip -0.1in
\end{table*}
%----------------------------------- Table Ur100 -----------------------------------------
\begin{table}[]
	\centering
	\caption{Compressed sensing image recovery (CS-IR) on \textit{Urban100} dataset: PSNR (in dB) and SSIM for recovery using various techniques. TF-FISTA-Net has a superior performance among all the techniques for all CS ratios. We observe the following order of performance: TF-FISTA-Net $>$ TF-ISTA-Net $\approx$ RTF-FISTA-Net $\approx$ RTF-ISTA-Net $>$ ISTA-Net $>$ TVAL3 $>$ ReconNet. The best figures are in \textbf{boldface} and the second best is \underline{underlined}.}
	\label{table:ur100}
	\vskip 0.05in
	\begin{center}
		\begin{small}
			\begin{sc}
			\resizebox{0.95\linewidth}{!}{	\begin{tabular}{c||ccccc}
					\hline
					 \cellcolor{white} & & & CS Ratio & &   \\ 
					\hline
					Model & 10$\%$ & 20$\%$ & 30$\%$ & 40$\%$ & 50$\%$  \\
					\hline
                    TVAL3  & 19.20, 0.5075 & 21.69, 0.6773 & 23.70, 0.7793 & 25.67, 0.8495 & 27.73, 0.8989   \\
					ReconNet  & 20.10, 0.5186 & 20.86, 0.6682 & 21.42, 0.7541 & 24.75, 0.8404 & 26.10, 0.8752   \\
					ISTA-Net  & 21.18, 0.6352 & 24.45, 0.7990 & 27.01, 0.8783 & 29.33, 0.9238 & 31.56, 0.9507  \\
					\hline
					TF-ISTA-Net & \underline{21.28, 0.6448} & \underline{24.89, 0.8157} & 27.68, 0.8920 & 30.17, 0.9341 &  \underline{32.77, 0.9602}  \\
					RTF-ISTA-Net & 21.10, 0.6343 & 24.84, 0.8115 & 27.69, 0.8917 & \underline{30.26, 0.9344} & 32.70, 0.9597   \\
					TF-FISTA-Net & \textbf{21.38, 0.6512} & \textbf{24.97, 0.8176} & \textbf{27.76, 0.8943} & \textbf{30.37, 0.9364} & \textbf{32.80, 0.9607}   \\
					RTF-FISTA-Net & 21.20, 0.6353 & \underline{24.90, 0.8157} & \underline{27.70, 0.8924} & 30.21, 0.9343 & 32.58, 0.9590   \\
					\hline
				\end{tabular}}
			\end{sc}
		\end{small}
	\end{center}
% 	\vskip -0.1in
\end{table}
%----------------------------------- Table DIV2K -----------------------------------
\begin{table}[]
	\centering
	\caption{Compressed sensing image recovery on \textit{DIV2K} dataset: PSNR (in dB) and SSIM for various techniques. TF-FISTA-Net has a superior performance among all the techniques for most CS ratios. We observe the following order of performance: TF-FISTA-Net $>$ TF-ISTA-Net $>$ RTF-FISTA-Net $\approx$ RTF-ISTA-Net $>$ ISTA-Net $>$ TVAL3 $>$ ReconNet. The best performance is shown in \textbf{boldface} and the second best is shown \underline{underlined}.}
	\label{table:DIV2K}
	\vskip 0.05in
	\begin{center}
		\begin{small}
			\begin{sc}
			\resizebox{0.95\linewidth}{!}{	\begin{tabular}{c||ccccc}
					\hline
					 & & & CS Ratio & &   \\ 
					\hline
					Model & 10$\%$ & 20$\%$ & 30$\%$ & 40$\%$ & 50$\%$  \\
					\hline
                    TVAL3  & 22.79, 0.6328 & 25.81, 0.7760 & 28.02, 0.8520 & 30.04, 0.9003 & 31.98, 0.9326   \\
					ReconNet  & 23.02, 0.6351 & 25.23, 0.7653 & 26.42, 0.8417 & 29.21, 0.8952 & 31.03, 0.9153   \\
					ISTA-Net  & 25.29, 0.7439 & 28.76, 0.8627 & 31.41, 0.9186 & 33.69, 0.9488 &  35.83, 0.9666 \\
					\hline
					TF-ISTA-Net & \underline{25.38, 0.7476} & \textbf{29.16, 0.8735} & \underline{31.92, 0.9259} & \underline{34.35, 0.9543} & \underline{36.68, 0.9714}   \\
					RTF-ISTA-Net & 25.18, 0.7398 & 29.05, 0.8705 & 31.80, 0.9244 & 34.23, 0.9532 &  36.53, 0.9707  \\
					TF-FISTA-Net & \textbf{25.49, 0.7525} & \underline{29.16, 0.8723} & \textbf{31.98, 0.9269} & \textbf{34.39, 0.9547} & \textbf{36.73, 0.9719}   \\
					RTF-FISTA-Net & 25.22, 0.7394 & 29.08, 0.8715 & 31.79, 0.9244 & 34.13, 0.9526 &  36.35, 0.9699  \\
					\hline
				\end{tabular}}
			\end{sc}
		\end{small}
	\end{center}
% 	\vskip -0.1in
\end{table}
% % -------------------------------------------------------------------------
\paragraph{Initialization and Training Performance} 
The parameters of the deep sparsifying transform $\cl F$ are initialized using the Xavier weight initialization scheme \cite{glorot2010understanding}, {\color{black}which prevents gradients from exploding or vanishing}. The step-size $\eta^{(k)}$ is initialized to 0.5 for every unfolding, and the shrinkage-thresholding parameter $\lambda^{(k)}$ is initialized to 0.01. The default learning rate is $10^{-3}$. The difference between  ISTA-Net/FISTA-Net and the proposed variants lies in the gradient computation step $\mathcal{G}$ (cf. Figure~\ref{fig:schematic_model}):
The gradient computation strategy is fixed and the learnable quantities are $\{ \eta^{(k)}, \lambda^{(k)}, \cl F_k, \tilde{\cl F}_k \}_{k=1}^K$. Figure~\ref{fig:CSIR_convergence_ISTA} shows a comparison of the PSNR versus epochs. Right at initialization, the techniques have different PSNRs with the PSNR of TF-ISTA-Net being superior to that of the others, which is  attributed to the new gradient computation. As the training proceeds, the learnt sparsifying transform gives performance improvements over the initialization. The convergence behavior is the best for TF-ISTA-Net closely followed by RTF-ISTA-Net, both of which being clearly superior to ISTA-Net.\\
\indent Overall, the inference from these results is that the TF variants have a headstart in performance owing to a superior gradient computation. { \color{black} Further, the intermediate reconstructions $\bld z^{(k)}$ in \eqref{eq:istanet_gradient} are also consistent with the measurements}. The results also show that the TF variants have a superior convergence behavior and require fewer epochs (shorter training time). These findings are consistent across various datasets. 
% ----------------------------------------------------------------------------------
\paragraph{Results} We test ISTA-Net \cite{ISTA-Net}, TF-ISTA-Net, TF-FISTA-Net, RTF-ISTA-Net, and RTF-FISTA-Net with $8$ unfoldings for CS ratios of $10\%$, $20\%$, $30\%$, $40\%$, and $50\%$. We also compare the performance of these models with TVAL3, which is a standard iterative CS-IR method \cite{li2013efficient}, and ReconNet, which is a deep learning model that is not obtained by unfolding an iterative algorithm \cite{reconnet}. The results are presented in Tables~\ref{table:Set11}, \ref{table:BSD68}, \ref{table:ur100} and \ref{table:DIV2K} for the datasets under consideration. Across all the datasets, we observe the following order of performance (read $>$ as ``better than''): TF-FISTA-Net $>$ TF-ISTA-Net $>$ RTF-FISTA-Net $>$ RTF-ISTA-Net $>$ ISTA-Net $>$ TVAL3 $>$ ReconNet, in terms of both PSNR and SSIM, for varying CS ratios. In the tables, the best performance is highlighted in boldface and the next best is underlined. Figure~\ref{fig:example_recon} shows a visual comparison across methods with 40\% CS ratio on an example image taken from the DIV2K dataset. It can be observed that RTF-ISTA-Net has a higher PSNR than other methods and is able to recover the details more accurately. From the magnified portions of the reconstructed images, we observe that the proposed methods have fewer artifacts compared with the benchmark techniques.
\paragraph{Effect of Number of Unfolding Steps} The optimal number of unfolding steps is determined experimentally by testing across Set11, BSD68, Urban100, and DIV2K datasets. Table~\ref{tableS:comp-image-recovery-unfoldings} shows the average PSNR and SSIM obtained on the test datasets. The PSNR and SSIM improve with an increase in the number of unfolding steps. However, the increase in performance beyond $8$ unfolding steps is marginal (limited to the second decimal place). Hence, we set the number of unfolding steps to $8$, which also keeps the model complexity small without compromising on the performance.\\
\indent Additional results on compressed sensing image recovery are presented in the Supplementary Material.
\begin{table*}[!t]
\centering
\caption{Effect of the number of unfolding steps on reconstruction performance: PSNR (in dB) and SSIM with varying number of unfolding steps of TF-ISTA-Net on various datasets.}
\label{tableS:comp-image-recovery-unfoldings}
\vskip 0.05in
\begin{center}
\begin{small}
\begin{sc}
\resizebox{0.9\linewidth}{!}{
	\begin{tabular}{c||ccccc}
		\toprule
		Unfoldings & 6 & 8 &  10 & 12 & 14 \\
		\midrule
		Dataset & (PSNR, SSIM) & (PSNR, SSIM) & (PSNR, SSIM) & (PSNR, SSIM) & (PSNR, SSIM) \\
		\midrule
		Set11 & 30.81, 0.8997 & 31.07, 0.9028 & 31.10, 0.9042 & 31.12, 0.9050 & 31.21, 0.9059 \\
		BSD68 & 28.16, 0.8158 & 28.25, 0.8190 & 28.28, 0.8198 & 28.30, 0.8209 & 28.32, 0.8213 \\
		Urban100 & 24.75, 0.8105 & 24.96, 0.8182 & 24.94, 0.8184 & 24.94, 0.8191 & 25.00, 0.8209  \\
		DIV2K & 29.06, 0.8711 & 29.21, 0.8746 & 29.23, 0.8752 & 29.27, 0.8761 & 29.30, 0.8765 \\
		\bottomrule 
	\end{tabular}}
\end{sc}
\end{small}
\end{center}
\end{table*}

% % --------------------------------------------
% % 	Discussions and Conclusion
% % --------------------------------------------
\section{Discussion and Conclusion}
\label{sec:conclusion}
We considered the analysis-sparse $\ell_1$-minimization problem with a generalized $\ell_2$-norm-based data-fidelity as the constraint. The proposed formulation results in the effective sensing matrix constituting a tight frame, thereby combining the best of both worlds: the ease of constructing random Gaussian sensing matrices; and the performance advantages accruing from the tight-frame property. We derived performance bounds on the $\ell_2$-error in reconstruction in terms of the noise level and the $\ell_1$-error in compressibility. The analysis on the performance bounds showed that the proposed formulation offers superior performance compared to the standard least-squares data-fidelity constraint when fewer measurements are available or the signal is less compressible.\\
\indent The constrained analysis-sparse $\ell_1$-minimization problem is solved in the unconstrained setting using proximal methods, effectively, offering a tight-frame counterpart to existing techniques. Further, we rescaled the gradients of the data-fidelity loss in the iterative updates to improve the accuracy of analysis-sparse recovery. We validated the claims via experiments on synthetic analysis-sparse signals, and showed that the proposed techniques outperformed the benchmarks in terms of reconstruction accuracy for varying sparsity, noise levels, and sensing matrix distributions. We demonstrated the application of the proposed techniques on compressed sensing image recovery with a deep sparsifying transform as the analysis operator, which enhanced the reconstruction quality compared  with the benchmarks, in terms of PSNR and SSIM, as validated on Set11, BSD68, Urban100 and DIV2K datasets. The improvements come with a modest one-time computational overhead, that of precomputing the inverse of a smaller Gram matrix of size $m \times m$ for an $m \times n$ sensing matrix ($m \ll n$).\\
\indent One could leverage the superior image reconstruction capabilities of the proposed techniques to related computational imaging modalities such as magnetic resonance imaging \cite{vaswani2009}, single-pixel camera \cite{duarte2008single}, seismic imaging \cite{yuan2013spectral, mache2023introducing}, etc., which is a promising direction for further research. 
%------------------------------------------------------------

% ----------------------------------------------------------------------------------------------------------------------------------------------------------
% ----------------------------------------------------------------------------------------------------------------------------------------------------------
% 	      									APPENDICES BEGIN
% ----------------------------------------------------------------------------------------------------------------------------------------------------------
% ----------------------------------------------------------------------------------------------------------------------------------------------------------

\appendix

% % --------------------------------------------
% % 	Statistical Significance
% % --------------------------------------------

\section{Statistical Significance}
Sparse signal recovery from noisy measurements indirectly also results in an estimate of the noise. Let $\hat{\bld x}$ be the solution to the optimization problem in \eqref{eq:unconstrained-P1-analysis} with $\bd D = \bd I$. The corresponding noise estimate is $\hat{\bld w} = \bld y - \bd A \hat{\bld x}$. Ideally, we expect $\hat{\bld w}$ to have the same statistics as ${\bld w}$, i.e., zero mean and uncorrelated. Consider
\begin{equation*}
\begin{split}
	\hat{\bld w} = \bld y - \bd A \hat{\bld x}
	= \bd A \bld x^* + \bld w - \bd A \hat{\bld x}
	= \bd A (\bld x^* - \hat{\bld x}) + \bld w.
\end{split}
\end{equation*}
The estimation error in noise is given by $\Delta \bld w \overset{\text{def.}}{=} \bld w-\hat{\bld w} = \bd A \Delta \bld x$, where $\Delta \bld x = \bld x^* - \hat{\bld x}$ is the estimation error in the signal $\bld x$. The expected estimation error in $\bld w$ is given by $\bb E[\Delta \bld w] = \bd A\bb E[\Delta \bld x] \neq \bld 0$, since the $\ell_1$ penalty is known to result in biased estimates \cite{zhang2010nearly}. The bias can be corrected by an appropriate choice of the sparsity enforcing penalty, for example, the minimax-concave penalty (MCP) \cite{zhang2010nearly,pokala2021iteratively} instead of the $\ell_1$ penalty. Let us now examine the correlation matrix of $\Delta \bld w$:
\begin{equation*}
    \begin{split}
        R_{\hat{\bld w}} \overset{\text{def.}}{=} \bb E (\Delta \bld w \Delta \bld w^\TT)
        = \bb E (\bd A \Delta \bld x \Delta \bld x^\TT \bd A^\TT ) 
        = \bd A \bb E (\Delta \bld x \Delta \bld x^\TT) \bd A^\TT
        = \bd A R_{\hat{\bld x}} \bd A^\TT,
    \end{split}
\end{equation*}
where $R_{\hat{\bld x}}$ denotes the signal correlation matrix, which is positive semi-definite. Considering its eigen-decomposition $R_{\hat{\bld x}}=\bd U_{\Delta \bld x} \bd \Lambda_{\Delta \bld x}  \bd U_{\Delta \bld x}^\TT$, we get $R_{\hat{\bld w}} = \left(\bd A \bd U_{\Delta \bld x}\right)  \bd \Lambda_{\Delta \bld x}  \left(\bd A \bd U_{\Delta \bld x} \right)^\TT$.
The following result relates the statistics of the signal and noise estimation errors. 
\begin{proposition}
Suppose $\bd A$ is a tight-frame matrix. If the signal estimation error $\Delta \bld x$  has i.i.d. entries, then the noise estimation error $\Delta \bld w$ also has i.i.d. entries, and vice versa.
\end{proposition}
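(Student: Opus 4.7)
The proof strategy is a direct calculation using the identity $R_{\hat{\bld w}} = \bd A R_{\hat{\bld x}} \bd A^\TT$ established just above the proposition, combined with the tight-frame property $\bd A \bd A^\TT = \alpha \bd I$ for some $\alpha>0$. The ``i.i.d.'' hypothesis is most naturally interpreted at the level of second-order statistics, i.e., zero-mean, uncorrelated entries of equal variance; this is the level at which the relation $R_{\hat{\bld w}} = \bd A R_{\hat{\bld x}} \bd A^\TT$ operates.

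For the forward implication, I would assume $R_{\hat{\bld x}} = \sigma^2 \bd I$ for some $\sigma^2>0$ and substitute directly:
\[
R_{\hat{\bld w}} \;=\; \bd A(\sigma^2 \bd I)\bd A^\TT \;=\; \sigma^2 \bd A \bd A^\TT \;=\; \alpha \sigma^2 \bd I,
\]
so the entries of $\Delta \bld w$ are zero-mean, uncorrelated, and of common variance $\alpha\sigma^2$. This direction is a one-line consequence of the tight-frame property and needs nothing further.

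For the converse, I would start from $R_{\hat{\bld w}} = \tau^2 \bd I$ and aim to conclude $R_{\hat{\bld x}} = \sigma^2 \bd I$. Because $\bd A$ is tight, its pseudo-inverse has the closed form $\bd A^\dagger = \bd A^\TT(\bd A \bd A^\TT)^{-1} = \alpha^{-1}\bd A^\TT$. Applying $\bd A^\dagger$ to $\Delta \bld w = \bd A \Delta \bld x$ yields $\bd A^\dagger \Delta \bld w = \bd P \,\Delta \bld x$, with $\bd P = \alpha^{-1}\bd A^\TT \bd A$ the orthogonal projector onto $\mathrm{range}(\bd A^\TT)$. Taking outer expectations gives
\[
\bd P\, R_{\hat{\bld x}}\, \bd P \;=\; \alpha^{-1}\tau^2\, \bd P,
\]
which fixes $R_{\hat{\bld x}}$ on $\mathrm{range}(\bd A^\TT)$.

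The main obstacle is the non-uniqueness on the complement: when $m<n$, the projector $\bd P$ has a nontrivial null space, so the display above pins down $R_{\hat{\bld x}}$ only on the row space of $\bd A$, and on that subspace the natural object is a scaled copy of $\bd P$ rather than a scaled identity on all of $\bb R^n$. To close the converse cleanly, I would restrict the statement to the component of $\Delta \bld x$ lying in $\mathrm{range}(\bd A^\TT)$, which is the natural setting for the back-projection-based iterations used throughout the paper; under this restriction the isotropy of $\Delta \bld w$ transfers to the isotropy of that component of $\Delta \bld x$ by the same one-line computation as in the forward direction, now applied to $\bd A^\dagger$ in place of $\bd A$. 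Making this restriction (or an equivalent symmetry assumption on $\Delta \bld x$) explicit is, in my view, the only subtle step in the proof.
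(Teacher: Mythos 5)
Your forward direction coincides exactly with the paper's: substitute an isotropic $R_{\hat{\bld x}}$ into $R_{\hat{\bld w}} = \bd A R_{\hat{\bld x}} \bd A^\TT$ and invoke tightness. For the converse, however, you take a genuinely different route, and it is worth comparing the two. The paper works with the eigendecomposition $R_{\hat{\bld x}} = \bd U_{\Delta\bld x}\bd\Lambda_{\Delta\bld x}\bd U_{\Delta\bld x}^\TT$, reduces $R_{\hat{\bld w}}=\bd I$ to the identity $(\bd A\bd U_{\Delta\bld x})\bd\Lambda_{\Delta\bld x}(\bd A\bd U_{\Delta\bld x})^\TT = (\bd A\bd U_{\Delta\bld x})(\bd A\bd U_{\Delta\bld x})^\TT$, and then concludes $\bd\Lambda_{\Delta\bld x}=\bd I$ on the grounds that the equality ``must hold for any tight-frame sensing matrix $\bd A$.'' That last step is a universal quantification over $\bd A$: for a single fixed $m\times n$ tight frame with $m<n$, the matrix $\bd A\bd U_{\Delta\bld x}$ has rank $m$, so $(\bd A\bd U_{\Delta\bld x})(\bd\Lambda_{\Delta\bld x}-\bd I)(\bd A\bd U_{\Delta\bld x})^\TT=\bd 0$ does not force $\bd\Lambda_{\Delta\bld x}=\bd I$; the discrepancy can hide in the $(n-m)$-dimensional null space of $\bd A$. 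Your pseudo-inverse argument makes exactly this obstruction explicit --- $\bd P R_{\hat{\bld x}}\bd P = \alpha^{-1}\tau^2\bd P$ determines $R_{\hat{\bld x}}$ only on $\mathrm{range}(\bd A^\TT)$ --- and you close the gap by restricting the claim to that subspace (or assuming the corresponding symmetry of $\Delta\bld x$). The paper closes the same gap by strengthening the hypothesis to all tight frames, which is arguably not what the proposition as stated (``Suppose $\bd A$ \emph{is} a tight-frame matrix'') licenses. So: your forward direction is the paper's; your converse is a more honest, fixed-$\bd A$ version that correctly identifies the rank-deficiency issue the paper's one-line justification glosses over. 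The only cosmetic difference is normalization --- the paper takes $\bd A\bd A^\TT=\bd I$ and unit variances, while you carry general frame constant $\alpha$ and variances $\sigma^2,\tau^2$ --- which changes nothing of substance.
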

\begin{proof}
If $\bd \Lambda_{\Delta \bld x} = \bd I$, then $R_{\hat{\bld w}} = \bd A \bd A^\TT = \bd I$, by virtue of the tightness of the sensing matrix. Conversely, if $R_{\hat{\bld w}} = \bd I$, then 
$\left(\bd A \bd U_{\Delta \bld x}\right) \bd \Lambda_{\Delta \bld x}  \left( \bd A \bd U_{\Delta \bld x}\right)^\TT = \bd I = \bd A \bd A^\TT.$
Noting that  $\bd U_{\Delta \bld x} \bd U_{\Delta \bld x}^\TT = \bd I$ gives
$\left(\bd A \bd U_{\Delta \bld x}\right)  \bd \Lambda_{\Delta \bld x}  \left( \bd A \bd U_{\Delta \bld x}\right)^\TT = \bd A \bd U_{\Delta \bld x} \bd U_{\Delta \bld x}^\TT \bd A^\TT,$
which leads to $\Lambda_{\Delta \bld x} = \bd I$, since this equality must hold for any tight-frame sensing matrix $\bd A$.
\end{proof}
Thus, a tight-frame sensing matrix ensures that the noise estimate is consistent with the assumptions on the noise made in the observation model.
 
% ---------------------------------------------------------------------------------
% 	Relationship Between $\hat{\delta}_{2s}$ and $\hat{\delta}_{ps}$}
% ---------------------------------------------------------------------------------
%\clearpage
\section{Relationship Between RICs Corresponding to Different Orders of Sparsity}
% % ----------------------- Consequence of Corollary 3.4 from CoSAMP paper ----------
The following lemma establishes that $\norm{\bd A}^2_2 > 1$.
\begin{lemma}[$\norm{\bd A}^2_2 > 1$]\label{lem:A_2_greater_1} 
Let $\bd A \in \bb R^{m \times n}, m < n$ satisfy RIP. Then, $\norm{\bd A}_2^2 > 1$.
\end{lemma}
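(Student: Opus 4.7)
The plan is to combine the standing assumption from Section~\ref{sec:problem_formulation} that $\bd A$ has unit-$\ell_2$-norm columns with a simple pigeonhole argument on the singular values, exploiting the rank deficiency $m<n$. The key observation is that the Frobenius norm of $\bd A$ is fixed by the column-norm condition, while it can also be written as the sum of squared singular values; since there are only $m$ of these (with $m<n$), the largest one must be strictly greater than $1$.

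First, I would compute $\| \bd A \|_F^2$ in two different ways. On one hand, $\| \bd A \|_F^2 = \sum_{i=1}^n \| \bld a_i \|_2^2 = n$, since each column $\bld a_i$ has unit $\ell_2$-norm. On the other hand, $\| \bd A \|_F^2 = \sum_{i=1}^{\min(m,n)} \sigma_i^2(\bd A) = \sum_{i=1}^{m} \sigma_i^2(\bd A)$, because $m<n$ implies there are exactly $m$ singular values.

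Next, I would bound each squared singular value by the largest one: $\sum_{i=1}^m \sigma_i^2(\bd A) \leq m \, \sigma_{\max}^2(\bd A) = m \, \| \bd A \|_2^2$. Chaining these relations gives $n \leq m \, \| \bd A \|_2^2$, i.e., $\| \bd A \|_2^2 \geq n/m$. Since $m<n$, the ratio $n/m$ is strictly greater than $1$, which yields the claim. As a sanity check, this lower bound is saturated precisely when $\bd A$ is (up to column scaling) a Parseval tight frame, in which case all $m$ singular values coincide and equal $\sqrt{n/m}>1$.

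There is no real obstacle in this argument; the only subtle point is that the conclusion does not use the full strength of RIP, only the unit-norm-columns assumption and the dimension inequality $m<n$. The RIP hypothesis enters indirectly, since the paper treats unit-norm columns as part of the standard CS setup in which RIP is invoked. One could alternatively relax unit-norm columns to the near-isometric bound $\| \bld a_i \|_2^2 \geq 1-\delta_1$ coming from $1$-sparse RIP, yielding $\| \bd A \|_2^2 \geq n(1-\delta_1)/m$, which still exceeds $1$ whenever $\delta_1 < 1 - m/n$; however, within the paper's conventions, the cleaner unit-norm-columns route is preferable.
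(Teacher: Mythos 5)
Your argument is correct under the paper's standing assumptions, but it is a genuinely different proof from the one in the paper. The paper argues directly from the RIP: since $\delta_s$ is the smallest constant in $(0,1)$ for which $\left(1-\delta_s\right)\norm{\bld x}_2^2 \leq \norm{\bd A\bld x}_2^2 \leq \left(1+\delta_s\right)\norm{\bld x}_2^2$ holds on $s$-sparse vectors, it concludes $\norm{\bd A}_2^2 \geq 1+\delta_s > 1$. Your route instead ignores the RIP entirely and uses the normalization from Section~\ref{sec:problem_formulation}: $\norm{\bd A}_F^2 = n$ from the unit-norm columns, together with $\norm{\bd A}_F^2 \leq m\norm{\bd A}_2^2$, gives the stronger quantitative bound $\norm{\bd A}_2^2 \geq n/m > 1$. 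Two remarks on the trade-off. First, your proof buys robustness: the paper's deduction $\norm{\bd A}_2^2 \geq 1+\delta_s$ implicitly assumes the \emph{upper} RIP inequality is the binding one in the definition of $\delta_s$ (if $\delta_s$ is determined by the lower inequality, the supremum of $\norm{\bd A\bld x}_2^2/\norm{\bld x}_2^2$ over sparse $\bld x$ may be strictly less than $1+\delta_s$), whereas your Frobenius-norm pigeonhole needs no such tightness. Second, the cost is that your hypothesis set differs from the lemma's: the statement assumes only that $\bd A$ satisfies RIP, and you must import the unit-$\ell_2$-norm-columns convention from the problem setup; without some such normalization the Frobenius computation $\norm{\bd A}_F^2=n$ is unavailable. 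As you note, one can patch this via the $1$-sparse RIP lower bound $\norm{\bld a_i}_2^2 \geq 1-\delta_1$, which yields $\norm{\bd A}_2^2 \geq n(1-\delta_1)/m$, but this exceeds $1$ only when $\delta_1 < 1 - m/n$, so it does not recover the lemma in full generality from the RIP hypothesis alone. Your side remark about the equality case is essentially right, though the extremal matrices are unit-norm \emph{tight} frames with $\bd A\bd A^\TT = (n/m)\bd I$ rather than Parseval frames; the global rescaling you allude to would break the unit-norm column constraint.
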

\begin{proof}
    Recall the RIP condition of $\bd A$ with RIP constant $0 < \delta_s < 1$:
    \begin{equation} \label{eq:delta_s_equal} \nonumber
       \left(1-\delta_s \right)\norm{\bld x}_2^2 \leq  \norm{\bd A \bld x}_2^2 \leq \left(1+\delta_s \right)\norm{\bld x}_2^2,
    \end{equation}
for all $s$-sparse $\bld x \in \mathbb{R}^n$.\\
By definition of the matrix 2-norm \cite{horn2012matrix}, we have
$\norm{\bd A \bld x}_2^2 \leq \norm{\bd A}_2^2 \norm{\bld x}^2_2, \forall \bld x \in \bb R^n.$
Since $s$-sparse vectors in $\mathbb{R}^n$ are a subset of $\mathbb{R}^n$, we have the inequality
$\norm{\bd A}_2^2 \geq 1+\delta_s > 1$,
    since $\delta_s > 0$.
\end{proof}
% ---------------------------------------------------------------------------------
\begin{lemma}[$\bd B$-norm RIP relations]\label{lem:delta_ps_2s}
Let $p$ and $s$ be positive integers and let $\bd A$ satisfy RIP with RIC $\delta_s$, and $\bd B$-norm RIP with $\bd B$-norm RIC $\hat{\delta}_s$. Then, $\hat{\delta}_{ps} \leq p \cdot \hat{\delta}_{2s}$.
\end{lemma}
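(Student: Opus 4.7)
The plan is to adapt the classical Needell--Tropp polarization argument (which gives $\delta_{ps}\le p\,\delta_{2s}$ under the symmetric RIP) to the asymmetric $\bd B$-norm RIP, where the upper bound in \eqref{def:B-norm-RIP} is pinned at $1$ by the Parseval tight-frame property of $\bd B\bd A$.

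First I would establish an off-diagonal ``inner-product'' inequality: for disjoint-support $s$-sparse vectors $\bld u,\bld v\in\mathbb R^n$,
\begin{equation}\label{eq:plan-inner}
\bigl|\langle \bd A\bld u,\bd A\bld v\rangle_{\bd B}\bigr| \;\le\; \tfrac{1}{2}\,\hat\delta_{2s}\,\norm{\bld u}_2\norm{\bld v}_2,
\end{equation}
where $\langle\cdot,\cdot\rangle_{\bd B}$ denotes the inner product induced by the $\bd B$-norm. The derivation would WLOG normalize $\norm{\bld u}_2=\norm{\bld v}_2=1$, apply the polarization identity $4\langle\bd A\bld u,\bd A\bld v\rangle_{\bd B}=\norm{\bd A(\bld u+\bld v)}_{\bd B}^2-\norm{\bd A(\bld u-\bld v)}_{\bd B}^2$, and then use the upper bound $\norm{\bd A(\bld u+\bld v)}_{\bd B}^2\le\norm{\bld u+\bld v}_2^2=2$ on the first term together with the lower bound $\norm{\bd A(\bld u-\bld v)}_{\bd B}^2\ge(1-\hat\delta_{2s})\norm{\bld u-\bld v}_2^2=2(1-\hat\delta_{2s})$ on the second (and symmetrically). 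Note that this yields the factor $\tfrac12$ in \eqref{eq:plan-inner}, which is \emph{half} the standard Needell--Tropp constant; this is precisely the manifestation of the asymmetric $\bd B$-norm RIP, and it will be the delicate point to get right.

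Next, given any $ps$-sparse $\bld x$, I would partition its support into $p$ disjoint blocks of size at most $s$ and write $\bld x=\sum_{i=1}^p\bld x_i$, each $\bld x_i$ being $s$-sparse. Expanding the squared $\bd B$-norm,
\begin{equation}\label{eq:plan-expand}
\norm{\bd A\bld x}_{\bd B}^2 \;=\; \sum_{i=1}^p\norm{\bd A\bld x_i}_{\bd B}^2 + 2\sum_{i<j}\langle\bd A\bld x_i,\bd A\bld x_j\rangle_{\bd B},
\end{equation}
I would lower-bound the diagonal terms by $(1-\hat\delta_s)\norm{\bld x_i}_2^2$ via the $\bd B$-norm RIP at sparsity $s$, and upper-bound the off-diagonal terms in magnitude via \eqref{eq:plan-inner}. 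The cross-term sum $\sum_{i<j}\norm{\bld x_i}_2\norm{\bld x_j}_2$ is then handled by Cauchy--Schwarz, $\bigl(\sum_i\norm{\bld x_i}_2\bigr)^2\le p\sum_i\norm{\bld x_i}_2^2=p\norm{\bld x}_2^2$, so that $2\sum_{i<j}\norm{\bld x_i}_2\norm{\bld x_j}_2\le(p-1)\norm{\bld x}_2^2$.

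Putting it together yields $\norm{\bd A\bld x}_{\bd B}^2\ge\bigl(1-\hat\delta_s-(p-1)\hat\delta_{2s}\bigr)\norm{\bld x}_2^2$, and invoking the elementary monotonicity $\hat\delta_s\le\hat\delta_{2s}$ (immediate from the definition, since any $s$-sparse vector is $2s$-sparse) collapses this to $\bigl(1-p\,\hat\delta_{2s}\bigr)\norm{\bld x}_2^2$. Comparing with \eqref{def:B-norm-RIP} at sparsity level $ps$ (and observing that the claim is trivial when $p\,\hat\delta_{2s}\ge 1$) delivers $\hat\delta_{ps}\le p\,\hat\delta_{2s}$. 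The main obstacle I anticipate is the asymmetric polarization step producing the $\tfrac12$ in \eqref{eq:plan-inner}; everything downstream is routine once that is secured.
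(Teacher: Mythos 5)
Your proposal is correct but takes a genuinely different route from the paper. The paper does not re-run the polarization argument at all: it invokes Needell--Tropp's Corollary~3.4 ($\delta_{ps}\le p\,\delta_{2s}$) as a black box for the \emph{standard} RICs, converts between $\delta_s$ and $\hat{\delta}_s$ via the linear relation $\delta_s = 1-\norm{\bd A}_2^2(1-\hat{\delta}_s)$ from \eqref{eq:delta_relation}, and then discards the resulting extra term $(p-1)(\norm{\bd A}_2^{-2}-1)$ using Lemma~\ref{lem:A_2_greater_1} ($\norm{\bd A}_2^2>1$). You instead redo the Needell--Tropp argument natively in the $\bd B$-inner product. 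Both work. Yours is self-contained: it needs neither the standard-RIP hypothesis nor the spectral-norm fact, only the definition \eqref{def:B-norm-RIP} and the observation that its upper branch (being pinned at $\norm{\bld x}_2^2$) holds for the sums $\bld u+\bld v$ automatically; moreover, because of that asymmetry your polarization constant is $\tfrac12\hat{\delta}_{2s}$ rather than $\hat{\delta}_{2s}$, which, carried through consistently, yields the strictly sharper conclusion $\hat{\delta}_{ps}\le\tfrac{p+1}{2}\hat{\delta}_{2s}$. One minor arithmetic slip: combining $2\sum_{i<j}|\langle\bd A\bld x_i,\bd A\bld x_j\rangle_{\bd B}|\le\hat{\delta}_{2s}\sum_{i<j}\norm{\bld x_i}_2\norm{\bld x_j}_2$ with your Cauchy--Schwarz step gives a cross-term of $\tfrac{p-1}{2}\hat{\delta}_{2s}\norm{\bld x}_2^2$, not $(p-1)\hat{\delta}_{2s}\norm{\bld x}_2^2$; the weaker bound you state is still valid and still delivers $\hat{\delta}_{ps}\le p\,\hat{\delta}_{2s}$, so nothing breaks. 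The paper's route is shorter on the page but leans on \eqref{eq:delta_relation} holding with equality and on the auxiliary Lemma~\ref{lem:A_2_greater_1}; yours trades a longer computation for independence from both.
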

\begin{proof}
From \eqref{eq:delta_relation}, we have
$\delta_s = 1- \norm{\bd A}_2^2 \left(1-\hat{\delta}_s\right)\text{ and } \delta_{ps} = 1- \norm{\bd A}_2^2 \left(1-\hat{\delta}_{ps}\right)$.
We use Corollary 3.4 from \cite{needell2009cosamp}, which states that ${\delta}_{ps} \leq p \cdot {\delta}_{2s}$ for positive integers $p$ and $s$. Combining the preceding equations gives
\begin{equation} \nonumber
\begin{split}
{\delta}_{ps} = 1- \norm{\bd A}_2^2 \left(1-\hat{\delta}_{ps}\right) \leq p \left(1- \norm{\bd A}_2^2 \left(1-\hat{\delta}_{2s}\right) \right) &\implies \hat{\delta}_{ps} \leq \left( p - 1 \right) \left( \frac{1}{\norm{\bd A}_2^2} - 1 \right) + p \cdot \hat{\delta}_{2s} \\ 
&\implies \hat{\delta}_{ps} \leq  p \cdot \hat{\delta}_{2s}, \text{ since } \norm{\bd A}_2^2 > 1.
\end{split}
\end{equation}
\end{proof}
% ---------------------------------------------------------------------------------
% 	Proof of Theorem~\ref{thm:MSE_D}
% ---------------------------------------------------------------------------------

\section{Proof of Theorem~\ref{thm:MSE_D}}
\label{appendix:mse_bounds}
The mechanism of the proof of Theorem~\ref{thm:MSE_D} follows that provided by Cand\`es \cite{candes_analysis} with the $\ell_2$ norm adapted to $\bd B$-norm. We recall relevant lemmas from \cite{candes_analysis}, which are cited in context.\\
\indent We begin with the proof of Theorem~\ref{thm:MSE_D}. Let $\bld h = \bld x^* - \hat{\bld x}$, where $\bld x^*$ is the ground-truth signal. We seek to bound $\|\bld h \|_2$. Assume $\bd T_0$ be the set of largest $s$ coefficients of $\bd D^\TT \bld x^*$ in magnitude. Denote by $\bd D_{\bd T}$ the matrix $\bd D$ restricted to the columns indexed by $\bd T$.
\begin{lemma}[Cone constraint \cite{candes_analysis}]\label{lem:cone}
The vector $\bd D^{\TT} \bld h$ obeys the following cone constraint, $$\| \bd D_{\bd T_0^c}^{\TT} \bld h \|_1 \leq 2 \| \bd D_{\bd T_0^c}^{\TT} \bld x^* \|_1 + \| \bd D_{\bd T_0}^{\TT} \bld h \|_1.$$
\end{lemma}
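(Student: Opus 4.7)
The plan is to exploit the optimality of $\hat{\bld x}$ for the constrained minimization in \eqref{eq:P1-TF}. First, I would verify that $\bld x^*$ is itself feasible: from $\bld y = \bd A\bld x^* + \bld w$ with the noise level satisfying $\|\bld w\|_{\bd B} \leq \epsilon$, we have $\|\bd A\bld x^* - \bld y\|_{\bd B} = \|\bld w\|_{\bd B} \leq \epsilon$. By the minimality of $\hat{\bld x}$ over the feasible set, this immediately yields $\|\bd D^\TT \hat{\bld x}\|_1 \leq \|\bd D^\TT \bld x^*\|_1$. This step is where the $\bd B$-norm data-fidelity constraint enters; everything that follows is independent of the particular norm on the residual.

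Next, I would write $\hat{\bld x} = \bld x^* - \bld h$ and decompose each $\ell_1$-norm by partitioning indices into $\bd T_0$ and $\bd T_0^c$, using the identity $\|\bd D^\TT \bld z\|_1 = \|\bd D_{\bd T_0}^\TT \bld z\|_1 + \|\bd D_{\bd T_0^c}^\TT \bld z\|_1$. This recasts the minimality inequality as
$$\|\bd D_{\bd T_0}^\TT (\bld x^* - \bld h)\|_1 + \|\bd D_{\bd T_0^c}^\TT (\bld x^* - \bld h)\|_1 \leq \|\bd D_{\bd T_0}^\TT \bld x^*\|_1 + \|\bd D_{\bd T_0^c}^\TT \bld x^*\|_1.$$
The key manipulation is then to apply the reverse triangle inequality on the $\bd T_0$ block, giving $\|\bd D_{\bd T_0}^\TT (\bld x^* - \bld h)\|_1 \geq \|\bd D_{\bd T_0}^\TT \bld x^*\|_1 - \|\bd D_{\bd T_0}^\TT \bld h\|_1$, and on the $\bd T_0^c$ block, giving $\|\bd D_{\bd T_0^c}^\TT (\bld x^* - \bld h)\|_1 \geq \|\bd D_{\bd T_0^c}^\TT \bld h\|_1 - \|\bd D_{\bd T_0^c}^\TT \bld x^*\|_1$. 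Substituting these two lower bounds, cancelling the common $\|\bd D_{\bd T_0}^\TT \bld x^*\|_1$ on each side and rearranging would then deliver the cone constraint exactly as stated.

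There is no substantive obstacle: the argument is a direct consequence of minimality and the triangle inequality, essentially an adaptation of the classical Cand\`es et al.\ derivation in \cite{candes_analysis}. The only thing to be careful about is that the choice of $\bd T_0$ as the support of the $s$ largest entries of $\bd D^\TT \bld x^*$ is used only via the split $\|\cdot\|_1 = \|\cdot_{\bd T_0}\|_1 + \|\cdot_{\bd T_0^c}\|_1$; the cone inequality itself would hold for any index set of size $s$. The specific choice of $\bd T_0$ becomes important only later, when this cone constraint is combined with the $\bd B$-norm D-RIP and the tube constraint $\|\bd A \bld h\|_{\bd B} \leq 2\epsilon$ to bound $\|\bld h\|_2$ in the main proof of Theorem~\ref{thm:MSE_D}.
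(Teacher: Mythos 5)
Your proposal is correct and is exactly the standard argument: the paper itself does not reproduce a proof of this lemma but recalls it from Cand\`es \emph{et al.}, and your chain (feasibility of $\bld x^*$ under the $\bd B$-norm constraint $\Rightarrow$ minimality $\|\bd D^\TT\hat{\bld x}\|_1\leq\|\bd D^\TT\bld x^*\|_1$ $\Rightarrow$ split over $\bd T_0$ and $\bd T_0^c$ $\Rightarrow$ reverse triangle inequality on each block $\Rightarrow$ cancel and rearrange) is precisely that derivation, correctly adapted to the sign convention $\bld h=\bld x^*-\hat{\bld x}$ used here. Your side remarks --- that the particular residual norm enters only through feasibility, and that the cone inequality holds for any size-$s$ index set while the choice of $\bd T_0$ as the largest entries matters only downstream --- are also accurate.
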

The set complement of $\bd T_0$ is denoted by $\bd T^c_0$. We divide $\bd T^c_0$ into subsets of size $M$ in the order of decreasing magnitude of $\bd D_{\bd T^c_0}^{\TT}\bld h$. We denote these sets as $\bd T_1, \bd T_2, \cdots$, and $\bd T_{01} = \bd T_0 \cup \bd T_1$. 
\begin{lemma}[Bounding the tail \cite{candes_analysis}]\label{lem:bound_tail}
Let $\displaystyle\rho = \frac{s}{M}$ and $\displaystyle\eta = \frac{2 \left \|  \bd D_{\bd T_0^c}^{\TT} \bld x^* \right\|_1}{\sqrt{s}} $. Then, $$\displaystyle\sum_{j \geq 2} \| \bd D_{\bd T_j}^{\TT} \bld h \|_2 \leq \sqrt{\rho} \left( \| \bd D_{\bd T_0}^{\TT} \bld h \|_2 + \eta \right).$$
\end{lemma}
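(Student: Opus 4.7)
The plan is to follow the standard Candès-style shelling argument: since the index sets $\bd T_1, \bd T_2, \ldots$ partition $\bd T_0^c$ in decreasing order of magnitude of the entries of $\bd D^{\TT}\bld h$, each block is dominated by its predecessor in an averaged sense. First I would establish the block-to-block decay. Because entries are ordered, every entry of $\bd D_{\bd T_{j+1}}^{\TT}\bld h$ has magnitude at most the minimum entry of $\bd D_{\bd T_j}^{\TT}\bld h$, which in turn is at most the average magnitude $\|\bd D_{\bd T_j}^{\TT}\bld h\|_1/M$. Squaring and summing over at most $M$ entries gives
\begin{equation}
\|\bd D_{\bd T_{j+1}}^{\TT}\bld h\|_2 \leq \sqrt{M}\cdot\frac{\|\bd D_{\bd T_j}^{\TT}\bld h\|_1}{M} = \frac{\|\bd D_{\bd T_j}^{\TT}\bld h\|_1}{\sqrt{M}}. \nonumber
\end{equation}

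Next I would sum the above over $j \geq 1$ and use the telescoping identity $\sum_{j\geq 1}\|\bd D_{\bd T_j}^{\TT}\bld h\|_1 = \|\bd D_{\bd T_0^c}^{\TT}\bld h\|_1$ to obtain
\begin{equation}
\sum_{j\geq 2}\|\bd D_{\bd T_j}^{\TT}\bld h\|_2 \leq \frac{\|\bd D_{\bd T_0^c}^{\TT}\bld h\|_1}{\sqrt{M}}. \nonumber
\end{equation}
Now I would invoke the cone-constraint lemma (Lemma \emph{Cone constraint}) to control $\|\bd D_{\bd T_0^c}^{\TT}\bld h\|_1$ by $2\|\bd D_{\bd T_0^c}^{\TT}\bld x^*\|_1 + \|\bd D_{\bd T_0}^{\TT}\bld h\|_1$, and apply Cauchy--Schwarz on the set $\bd T_0$ (of cardinality $s$) to write $\|\bd D_{\bd T_0}^{\TT}\bld h\|_1 \leq \sqrt{s}\,\|\bd D_{\bd T_0}^{\TT}\bld h\|_2$. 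Substituting these bounds yields
\begin{equation}
\sum_{j\geq 2}\|\bd D_{\bd T_j}^{\TT}\bld h\|_2 \leq \frac{2\|\bd D_{\bd T_0^c}^{\TT}\bld x^*\|_1 + \sqrt{s}\,\|\bd D_{\bd T_0}^{\TT}\bld h\|_2}{\sqrt{M}}. \nonumber
\end{equation}

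Finally I would factor out $\sqrt{s/M} = \sqrt{\rho}$ and recognize the definition of $\eta$, since $\frac{2\|\bd D_{\bd T_0^c}^{\TT}\bld x^*\|_1}{\sqrt{M}} = \sqrt{\rho}\cdot\frac{2\|\bd D_{\bd T_0^c}^{\TT}\bld x^*\|_1}{\sqrt{s}} = \sqrt{\rho}\,\eta$, yielding the claimed bound. No step here requires any property of $\bd A$, $\bd B$, or the D-RIP, so the proof is inherited verbatim from the original argument; the only care needed is that $\bd T_0$ indexes \emph{columns of $\bd D$}, not coordinates of $\bld h$, so the Cauchy--Schwarz step is applied to the coefficient vector $\bd D^{\TT}\bld h$ restricted to $\bd T_0$ (which has $|\bd T_0| = s$ entries). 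The main potential subtlety, and the only place I would spend any care, is making sure that the index shift $j \mapsto j+1$ on the left (starting from $j \geq 2$) matches the telescoped sum on the right (starting from $j \geq 1$), which it does because the first block $\bd T_1$ is excluded from the left-hand sum and provides no bound on $\|\bd D_{\bd T_1}^{\TT}\bld h\|_2$ itself.
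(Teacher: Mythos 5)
Your proof is correct and is exactly the standard shelling argument from Cand\`es \emph{et al.}, which is what the paper relies on here --- the paper states Lemma~\ref{lem:bound_tail} with a citation to \cite{candes_analysis} and does not reprove it. Your reconstruction (block-to-block decay via the ordering, telescoping the $\ell_1$ norms over $\bd T_0^c$, the cone constraint, and Cauchy--Schwarz on the $s$ entries indexed by $\bd T_0$) recovers the claimed constant $\sqrt{\rho}\left(\|\bd D_{\bd T_0}^{\TT}\bld h\|_2 + \eta\right)$ precisely.
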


% ---------------------------------------------------------------------------------
\begin{lemma}[Tube constraint]\label{lem:tube}
The vector $\bd A \bld h$ satisfies following bound: $\left \| \bd A \bld h \right \|_{\bd B} \leq 2 \epsilon$.
\end{lemma}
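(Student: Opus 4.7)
The plan is to prove this by a straightforward triangle-inequality argument in the $\bd B$-norm, exploiting the fact that both the ground-truth $\bld x^*$ and the minimizer $\hat{\bld x}$ lie in the feasible set of \eqref{eq:P1-TF}. First, I would split the quantity of interest as
\[
\|\bd A \bld h\|_{\bd B} = \|\bd A\bld x^* - \bd A\hat{\bld x}\|_{\bd B} = \|(\bd A\bld x^* - \bld y) - (\bd A\hat{\bld x} - \bld y)\|_{\bd B} \leq \|\bd A\bld x^* - \bld y\|_{\bd B} + \|\bd A\hat{\bld x} - \bld y\|_{\bd B},
\]
where the final step uses that $\|\cdot\|_{\bd B}$ is a genuine norm (it is the composition of the $\ell_2$ norm with the invertible linear map $\bd B$).

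Next I would bound each term by $\epsilon$. For the second term, this is immediate from the constraint of the optimization problem \eqref{eq:P1-TF}, since $\hat{\bld x}$ is a feasible point (indeed the minimizer), so $\|\bd A\hat{\bld x} - \bld y\|_{\bd B} \leq \epsilon$. For the first term, I would use the observation model $\bld y = \bd A\bld x^* + \bld w$, so $\bd A\bld x^* - \bld y = -\bld w$, and assume (as is standard in this line of work and implicitly required for \eqref{eq:P1-TF} to contain $\bld x^*$ as a feasible point) that $\|\bld w\|_{\bd B} \leq \epsilon$. This yields $\|\bd A\bld x^* - \bld y\|_{\bd B} \leq \epsilon$.

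Combining the two bounds gives $\|\bd A \bld h\|_{\bd B} \leq 2\epsilon$, which is exactly the claim. There is no real obstacle here; the only subtlety is the standing assumption that the noise level $\epsilon$ in the constraint is chosen so that the true signal is feasible, which is the usual assumption in the compressed-sensing setting (and is implicit in the setup of Theorem~\ref{thm:MSE_D}). This lemma is the $\bd B$-norm counterpart of the classical tube-constraint lemma in \cite{candes_analysis}, and the proof is a direct adaptation since $\|\cdot\|_{\bd B}$ satisfies the triangle inequality.
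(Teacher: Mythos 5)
Your proof is correct and follows essentially the same route as the paper: add and subtract $\bld y$, apply the triangle inequality for the $\bd B$-norm, and invoke feasibility of both $\hat{\bld x}$ (as the minimizer) and $\bld x^*$ (via the standing assumption $\|\bld w\|_{\bd B}\leq\epsilon$). If anything, your write-up is slightly more careful than the paper's, which omits the $-\bld y$ terms in the middle of its displayed inequality chain and leaves the feasibility of $\bld x^*$ as an unexplained assertion.
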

\begin{proof}
Since $\hat{\bld x}$ is a minimizer of \eqref{eq:P1-TF},  it satisfies the constraint 
$\norm{\bd A \hat{\bld x} - \bld y}_{\bd B} \leq \epsilon.$
We also know that $\bld x^*$ satisfies the feasibility conditions of \eqref{eq:P1-TF}, which leads to
$\norm{\bd A {\bld x}^* - \bld y}_{\bd B} \leq \epsilon.$
The error term $\bld h = \bld x^* - \hat{\bld x}$ satisfies, 
$\norm{\bd A \bld h}_{\bd B} = \norm{\bd A (\bld x^* - \hat{\bld x})}_{\bd B} \leq \norm{\bd A \bld x^*}_{\bd B} + \norm{\bd A \hat{\bld x}}_{\bd B} \leq 2\epsilon.$
\end{proof}
% ---------------------------------------------------------------------------------
Next, we make use of the $\bd B$-norm D-RIP condition.
\begin{lemma}[Consequence of $\bd B$-norm D-RIP]\label{lem:consequence}
The following inequality holds
\begin{equation*}   
    \sqrt{1-\hat{\delta}_{s+M}} \left \| \bd D_{\bd T_{01}} \bd D_{\bd T_{01}}^{\TT} \bld h \right \|_2 - \sqrt{\rho} \left( \| \bld h \|_2 + \eta \right)  \leq 2 \epsilon.
\end{equation*}
\end{lemma}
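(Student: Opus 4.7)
The plan is to combine the tube constraint (Lemma~\ref{lem:tube}), the tail bound (Lemma~\ref{lem:bound_tail}), and the one-sided $\bd B$-norm D-RIP inequalities \eqref{def:RIP-mod} via a standard triangle-inequality decomposition, exploiting crucially that the upper bound in the $\bd B$-norm D-RIP is exactly $\|\bd D \bld v\|_2^2$ (with no $(1+\hat{\delta}_s)$ inflation) because $\bd B \bd A$ is Parseval tight.

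First, I would use the Parseval tight-frame property $\bd D \bd D^\TT = \bd I$ to write the identity
\begin{equation*}
    \bld h = \bd D \bd D^\TT \bld h = \bd D_{\bd T_{01}} \bd D_{\bd T_{01}}^\TT \bld h + \sum_{j \geq 2} \bd D_{\bd T_j} \bd D_{\bd T_j}^\TT \bld h,
\end{equation*}
and then apply $\bd A$ and take $\bd B$-norms, yielding, via the triangle inequality,
\begin{equation*}
    \bigl\| \bd A \bd D_{\bd T_{01}} \bd D_{\bd T_{01}}^\TT \bld h \bigr\|_{\bd B} \leq \bigl\| \bd A \bld h \bigr\|_{\bd B} + \sum_{j \geq 2} \bigl\| \bd A \bd D_{\bd T_j} \bd D_{\bd T_j}^\TT \bld h \bigr\|_{\bd B}.
\end{equation*}
The first term on the right is at most $2\epsilon$ by Lemma~\ref{lem:tube}.

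Next I would bound each piece using \eqref{def:RIP-mod}. Since $\bd D_{\bd T_{01}}^\TT \bld h$ is $(s+M)$-sparse, the lower $\bd B$-norm D-RIP bound gives
\begin{equation*}
    \bigl\| \bd A \bd D_{\bd T_{01}} \bd D_{\bd T_{01}}^\TT \bld h \bigr\|_{\bd B} \geq \sqrt{1 - \hat{\delta}_{s+M}} \, \bigl\| \bd D_{\bd T_{01}} \bd D_{\bd T_{01}}^\TT \bld h \bigr\|_2,
\end{equation*}
while each $\bd D_{\bd T_j}^\TT \bld h$ is $M$-sparse, and the upper $\bd B$-norm D-RIP bound (which is simply $\| \bd D \bld v\|_2^2$ without any $(1+\hat{\delta})$ factor) yields
\begin{equation*}
    \bigl\| \bd A \bd D_{\bd T_j} \bd D_{\bd T_j}^\TT \bld h \bigr\|_{\bd B} \leq \bigl\| \bd D_{\bd T_j} \bd D_{\bd T_j}^\TT \bld h \bigr\|_2 \leq \bigl\| \bd D_{\bd T_j}^\TT \bld h \bigr\|_2,
\end{equation*}
where the last step uses $\| \bd D_{\bd T_j} \|_2 \leq \|\bd D\|_2 = 1$ (a consequence of $\bd D$ being Parseval tight).

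Finally I would sum over $j \geq 2$ and invoke Lemma~\ref{lem:bound_tail}:
\begin{equation*}
    \sum_{j \geq 2} \bigl\| \bd D_{\bd T_j}^\TT \bld h \bigr\|_2 \leq \sqrt{\rho} \bigl( \bigl\| \bd D_{\bd T_0}^\TT \bld h \bigr\|_2 + \eta \bigr) \leq \sqrt{\rho} \bigl( \| \bld h \|_2 + \eta \bigr),
\end{equation*}
where the last inequality uses $\| \bd D_{\bd T_0}^\TT \bld h \|_2 \leq \| \bd D^\TT \bld h \|_2 = \| \bld h \|_2$, again by the Parseval property. Combining these pieces and rearranging gives the claimed inequality. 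The only subtle point — and the one place the argument genuinely exploits the framework of the paper — is recognizing that the upper $\bd B$-norm D-RIP constant is exactly $1$, so that the tail terms do not pick up an extra $\sqrt{1+\hat{\delta}}$ factor; everything else is bookkeeping with the triangle inequality and Parseval.
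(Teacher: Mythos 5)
Your proposal is correct and follows essentially the same route as the paper's own proof: decompose $\bld h = \bd D\bd D^\TT\bld h$ over the index sets, apply the triangle inequality in the $\bd B$-norm, invoke the tube constraint, use the lower $\bd B$-norm D-RIP bound on the $\bd T_{01}$ block and the (constant-free) upper bound on the tail blocks, and finish with Lemma~\ref{lem:bound_tail} and the Parseval property. Your explicit remark that the upper D-RIP constant is exactly $1$ is the same observation the paper relies on implicitly.
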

\begin{proof}
We start with Lemma~\ref{lem:tube} and use the property that $\bd D$ is tight i.e., $\bd D \bd D^\TT = \bd I$.
\begin{equation} \nonumber
\begin{split}
2\epsilon &\geq \norm{\bd A \bld h}_{\bd B} = \norm{\bd A \bd D \bd D^\TT \bld h}_{\bd B}
= \norm{\sum_{j\geq0} \bd A \bd D_{T_j} \bd D_{T_j}^\TT \bld h}_{\bd B}
=\norm{  \bd A \bd D_{T_01} \bd D_{T_01}^\TT \bld h +  \sum_{j\geq2} \bd A \bd D_{T_j} \bd D_{T_j}^\TT \bld h}_{\bd B}, \\
&\geq \norm{  \bd A \bd D_{T_{01}} \bd D_{T_{01}}^\TT \bld h}_{\bd B} -  \norm{\sum_{j\geq2} \bd A \bd D_{T_j} \bd D_{T_j}^\TT \bld h}_{\bd B}
\geq \norm{  \bd A \bd D_{T_{01}} \bd D_{T_{01}}^\TT \bld h}_{\bd B} -  \sum_{j\geq2} \norm{ \bd A \bd D_{T_j} \bd D_{T_j}^\TT \bld h}_{\bd B}, \\
&\geq \sqrt{1-\hat{\delta}_{s+M}} \norm{ \bd D_{T_{01}} \bd D_{T_{01}}^\TT \bld h}_2 -  \sum_{j\geq2} \norm{\bd D_{T_j} \bd D_{T_j}^\TT \bld h}_2,\\
&\geq \sqrt{1-\hat{\delta}_{s+M}} \norm{ \bd D_{T_{01}} \bd D_{T_{01}}^\TT \bld h}_2 - \sqrt{\rho} \left(  \norm{ \bd D_{\bd T_0}^{\TT} \bld h }_2 + \eta \right) \text{(invoking Lemma~\ref{lem:cone}),}\\
&\geq \sqrt{1-\hat{\delta}_{s+M}} \norm{ \bd D_{T_{01}} \bd D_{T_{01}}^\TT \bld h}_2 - \sqrt{\rho} \left(  \norm{ \bld h }_2 + \eta \right). (\text{since } \bd D_{T_j}^\TT \text{ is tight, } \norm{\bd D_{T_j} \bld x}_2 = \norm{\bld x}_2 \forall j)
\end{split}
\end{equation}
\end{proof}
% ---------------------------------------------------------------------------------
We now bound the norm of the error vector $\bld h$.
\begin{lemma}[Bounding the error \cite{candes_analysis}]\label{lem:bound_error}
The error vector $\bld h$ has $\ell_2$ norm that satisfies
\begin{equation*}   
     \left \| \bld h \right \|_2^2  \leq \left \| \bld h \right \|_2  \left \| \bd D_{\bd T_{01}} \bd D_{\bd T_{01}}^{\TT} \bld h \right \|_2 + \rho \left(  \| \bd D^\TT_{\bd T_0} \bld h \|_2 + \eta  \right)^2.
\end{equation*}
\end{lemma}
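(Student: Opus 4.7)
The plan is to start from the identity $\norm{\bld h}_2^2 = \bld h^\TT \bld h$, insert a resolution of the identity using the tight-frame property of $\bd D$, and then split the resulting sum over the column-index partition $\{T_{01}, T_2, T_3, \ldots\}$ that was introduced just before Lemma~\ref{lem:bound_tail}. Since $\bd D \bd D^\TT = \bd I$ and $\{T_0, T_1, T_2, \ldots\}$ partition the columns of $\bd D$, one gets $\bd D \bd D^\TT = \bd D_{T_{01}} \bd D_{T_{01}}^\TT + \sum_{j \geq 2} \bd D_{T_j} \bd D_{T_j}^\TT$, which is the only place the hypothesis ``$\bd D$ is tight'' enters.

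Concretely, I would first write
\begin{equation*}
\norm{\bld h}_2^2 \;=\; \bld h^\TT \bd D \bd D^\TT \bld h \;=\; \bld h^\TT \bd D_{T_{01}} \bd D_{T_{01}}^\TT \bld h \;+\; \sum_{j \geq 2} \bld h^\TT \bd D_{T_j} \bd D_{T_j}^\TT \bld h \;=\; \bld h^\TT \bd D_{T_{01}} \bd D_{T_{01}}^\TT \bld h \;+\; \sum_{j \geq 2} \norm{\bd D_{T_j}^\TT \bld h}_2^2,
\end{equation*}
using $\bld h^\TT \bd D_{T_j} \bd D_{T_j}^\TT \bld h = \norm{\bd D_{T_j}^\TT \bld h}_2^2$. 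The first (``head'') term is handled by Cauchy--Schwarz,
\begin{equation*}
\bld h^\TT \bd D_{T_{01}} \bd D_{T_{01}}^\TT \bld h \;=\; \langle \bld h,\; \bd D_{T_{01}} \bd D_{T_{01}}^\TT \bld h \rangle \;\leq\; \norm{\bld h}_2 \, \norm{\bd D_{T_{01}} \bd D_{T_{01}}^\TT \bld h}_2,
\end{equation*}
which produces the first summand in the target inequality.

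For the tail, I would apply the elementary inequality $\sum_j a_j^2 \leq \bigl(\sum_j a_j\bigr)^2$ for nonnegative scalars $a_j := \norm{\bd D_{T_j}^\TT \bld h}_2$, and then invoke Lemma~\ref{lem:bound_tail} directly:
\begin{equation*}
\sum_{j \geq 2} \norm{\bd D_{T_j}^\TT \bld h}_2^2 \;\leq\; \Bigl(\sum_{j \geq 2} \norm{\bd D_{T_j}^\TT \bld h}_2\Bigr)^{\!2} \;\leq\; \rho \bigl( \norm{\bd D_{T_0}^\TT \bld h}_2 + \eta \bigr)^{\!2}.
\end{equation*}
Adding the head and tail bounds yields the desired inequality.

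I do not anticipate any real obstacle; the proof is essentially bookkeeping that hinges on two ingredients already secured, namely the tight-frame identity $\bd D \bd D^\TT = \bd I$ (which makes the partition-of-identity step exact, unlike in a non-tight setting) and the tail bound of Lemma~\ref{lem:bound_tail}. The one point that deserves a careful sentence in the write-up is that $T_{01}$ together with $\{T_j\}_{j \geq 2}$ genuinely covers all $d$ column indices of $\bd D$, which is immediate from how $T_1, T_2, \ldots$ were constructed by sorting the entries of $\bd D_{T_0^c}^\TT \bld h$ into blocks of size $M$.
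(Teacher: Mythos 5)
Your proof is correct. The paper states this lemma without proof, simply citing \cite{candes_analysis}; your argument --- resolving the identity via $\bd D\bd D^\TT = \bd I$, splitting over the partition $\{\bd T_{01}, \bd T_2, \bd T_3, \ldots\}$, Cauchy--Schwarz on the head term, and the bound $\sum_j a_j^2 \leq \bigl(\sum_j a_j\bigr)^2$ followed by Lemma~\ref{lem:bound_tail} on the tail --- is precisely the standard derivation from that reference, so it matches the intended proof in every essential respect.
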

Recall the following result based on arithmetic mean (AM) -- geometric mean (GM) inequality. 
\begin{lemma}
\label{lem:algebra}
For real values $u$, $v$, and $c > 0$, we have the inequality:
\begin{equation*}   
	| u v | \leq \frac{c u^2}{2} + \frac{v^2}{2c}.
\end{equation*}
\end{lemma}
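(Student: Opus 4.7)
The plan is to derive this as a direct consequence of the nonnegativity of a perfect square, which is the standard route to weighted AM--GM inequalities. Since $c>0$, the quantity $\sqrt{c}$ is well defined and real, so I can form the real number $\sqrt{c}\,|u| - |v|/\sqrt{c}$ and square it.

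First I would write down the trivial inequality
\begin{equation*}
\left( \sqrt{c}\,|u| - \frac{|v|}{\sqrt{c}} \right)^{2} \geq 0,
\end{equation*}
which holds for all real $u,v$ and $c>0$. Expanding the square gives $c\,u^{2} - 2|u||v| + v^{2}/c \geq 0$, where I have used $|u|^{2}=u^{2}$ and $|v|^{2}=v^{2}$. Rearranging yields $2|u||v| \leq c\,u^{2} + v^{2}/c$, and dividing by $2$ produces $|u||v| \leq c u^{2}/2 + v^{2}/(2c)$. Finally, I would invoke $|uv|=|u||v|$ to conclude the claim.

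There is essentially no obstacle here; the only thing one has to be mindful of is that $c>0$ is needed twice: once to form $\sqrt{c}$ as a real number (so that squaring yields a nonnegative real), and once to divide by $c$ in the final form of the inequality. The hypothesis $c>0$ in the statement covers both uses. An alternative route would be to start from the scalar AM--GM inequality $a+b \geq 2\sqrt{ab}$ applied to $a = c u^{2}$ and $b = v^{2}/c$ (both nonnegative since $c>0$), which gives $c u^{2} + v^{2}/c \geq 2\sqrt{u^{2}v^{2}} = 2|uv|$, leading to the same conclusion after dividing by $2$. I would prefer the perfect-square route since it is self-contained and avoids citing AM--GM as a separate fact.
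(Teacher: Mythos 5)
Your proof is correct and matches the paper's intent: the paper simply states this lemma as a consequence of the AM--GM inequality without writing out a derivation, and your perfect-square expansion of $\bigl(\sqrt{c}\,|u| - |v|/\sqrt{c}\bigr)^2 \geq 0$ is the standard self-contained way to establish exactly that. Both of your uses of the hypothesis $c>0$ are correctly identified, and no gap remains.
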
 
% ---------------------------------------------------------------------------------
Now, we proceed with the proof of the main result. Apply Lemma \ref{lem:algebra} to the tail error $\left \| \bld h \right \|_2  \left \| \bd D_{\bd T_{01}} \bd D_{\bd T_{01}}^{\TT} \bld h \right \|_2$ to obtain 
\begin{equation}\label{eq:h_Dh}
\left \| \bld h \right \|_2  \left \| \bd D_{\bd T_{01}} \bd D_{\bd T_{01}}^{\TT} \bld h \right \|_2 \leq \frac{ c_1 \| \bld h\|_2^2}{2} + \frac{\left \| \bd D_{\bd T_{01}} \bd D_{\bd T_{01}}^{\TT} \bld h \right \|_2^2}{2c_1}.
\end{equation}
Similarly,
\begin{align}\label{eq:h_eta}
\left(  \| \bld h \|_2 + \eta  \right)^2 &=  \| \bld h \|_2^2 + \eta^2 + 2  \| \bld h \|_2 \eta \leq   \| \bld h \|_2^2 + \eta^2 + 2 \left(  \frac{c_2 \| \bld h \|_2^2}{2} + \frac{\eta^2}{2c_2} \right).
\end{align}
Using \eqref{eq:h_Dh} and \eqref{eq:h_eta} in Lemma~\ref{lem:bound_error} leads to
\begin{align}
\left \| \bld h \right \|_2^2 & \leq \frac{ c_1 \| \bld h\|_2^2}{2} + \frac{\left \| \bd D_{\bd T_{01}} \bd D_{\bd T_{01}}^{\TT} \bld h \right \|_2^2}{2c_1} + \rho \left( \| \bld h \|_2^2 + \eta^2 + 2 \left(  \frac{c_2 \| \bld h \|_2^2}{2} + \frac{\eta^2}{2c_2} \right)\right). \nonumber
\end{align}
Simplifying the above leads to
\begin{align}
\left( 1 - \frac{c_1}{2} - \rho - \rho c_2 \right) \left \| \bld h \right \|_2^2 &\leq \frac{1}{2c_2} \left \| \bd D_{\bd T_{01}} \bd D_{\bd T_{01}}^{\TT} \bld h \right \|_2^2 + \left( \rho + \frac{\rho}{c_2} \right) \eta^2. \nonumber
\end{align}
Using the fact that $\sqrt{u^2 + v^2} \leq u + v $ for $u, v \geq 0$, we can simplify the above expression further to obtain the desired lower bound
\begin{equation} \label{eq:Dh_greater}
\left \| \bd D_{\bd T_{01}} \bd D_{\bd T_{01}}^{\TT} \bld h \right \|_2 \geq \left \| \bld h \right \|_2 \sqrt{2c_1\left( 1 - \frac{c_1}{2} - \rho -\rho c_2\right)} - \eta \sqrt{2c_1\left( \frac{\rho}{c_2} + \rho \right).}
\end{equation}
From Lemma~\ref{lem:consequence}, we know
\begin{equation} \label{eq:Dh_lower}
\left \| \bd D_{\bd T_{01}} \bd D_{\bd T_{01}}^{\TT} \bld h \right \|_2 \leq \frac{2 \epsilon}{\sqrt{1 - \hat{\delta}_{s+M}} } + \sqrt{\frac{\rho}{1- \hat{\delta}_{s+M}}} \left( \| \bld h\|_2 + \eta \right). 
\end{equation}
Combining inequalities in eqs.~\eqref{eq:Dh_greater} and \eqref{eq:Dh_lower} leads to 
\begin{align}
\frac{2 \epsilon}{\sqrt{1 - \hat{\delta}_{s+M}} } 
+ \sqrt{\frac{\rho}{1- \hat{\delta}_{s+M}}} \left( \| \bld h\|_2 + \eta \right) \geq 
 \left \| \bld h \right \|_2 \sqrt{2c_1\left( 1 - \frac{c_1}{2} - \rho -\rho c_2\right)} 
 - \eta \sqrt{2c_1\left( \frac{\rho}{c_2} + \rho \right)}. \nonumber
\end{align}
Therefore, we have
$\displaystyle\| \bld h \|_2 \leq \frac{2\epsilon}{K_1} + \frac{K_2}{K_1} \eta$, where
\begin{equation} \label{eq:final_expression_for_perf_bounds}
K_1 = \sqrt{2c_1\left( 1 - \frac{c_1}{2} -\rho -\rho c_2 \right)\left( 1 - \hat{\delta}_{s+M} \right)} - \sqrt{\rho},
K_2 = \sqrt{2c_1\left(\frac{\rho}{c_2} + \rho \right) \left(1-\hat{\delta}_{s+M} \right) } - \sqrt{\rho}.
\end{equation}
The parameters $c_1, c_2,$ and $M$ must be chosen such that $K_1$ and $K_2$ are both positive and the upper-bound is as tight as possible. \qed\\
\indent The selection of $M$ smaller than $4s$ leads to a scenario where there exist no $c_1$ and $c_2$ such that $K_1 > 0$. We chose $M = 6s$ akin to \cite{candes_analysis}. Consider $\hat{\delta}_{2s} < 0.08$, which leads to $\hat{\delta}_{7s} < 0.56$ by virtue of Lemma~\ref{lem:delta_ps_2s}. Choose $c_1 = 0.75, M = 6s, c_2 = 0.234$ such that $K_1$ is positive and the upper bound is the tightest, in which case the $\ell_2$-norm bound takes the form $\|\bld h\|_2 \leq 16.97 \epsilon + 3.00 \eta$.\\
\indent For canonical sparse coding, i.e., the dictionary $\bd D = \bd I$ (identity), the formulation reduces to
\begin{equation}
   \underset{\boldsymbol{x}\in\bb R^n}{\text{minimize }} \|  \boldsymbol{x}\|_1 \ \text{subject to }\ \| \mathbf{A}\boldsymbol{x} - \boldsymbol{y} \|_{\bd B} \leq \epsilon.
 \label{eq:P1-C-TF}
\end{equation}
In this scenario, we have the following corollary of Theorem~\ref{thm:MSE_D}:
\begin{corollary}
Let $\bd A$ satisfy the $\bd B$-norm RIP with $\hat{\delta}_{2s} < 0.08$. The minimizer $\bld x^*$ of \eqref{eq:P1-C-TF} satisfies
\begin{equation}
	\| \hat{\bld x} - \bld x^* \|_2 \leq C_0^{\textsc{TF}} \epsilon + C_1^{\textsc{TF}} \frac{\| \bld x^* - \left(  \bld x^* \right)_s \|_1}{\sqrt{s}},
\end{equation}
where the constants $C_0^{\textsc{TF}}$ and $C_1^{\textsc{TF}}$ depend on $\hat{\delta}_{2s}$.
\label{corollary:MSE_canonical}
\end{corollary}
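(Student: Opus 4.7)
The plan is to observe that Corollary~\ref{corollary:MSE_canonical} is the specialization of Theorem~\ref{thm:MSE_D} to the choice $\bd D = \bd I \in \bb R^{n \times n}$, so essentially no new work is required---only a verification that each hypothesis of Theorem~\ref{thm:MSE_D} survives intact under this substitution, and that each term of the conclusion collapses to the canonical-sparse form.

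First I would verify that $\bd I$ is an admissible dictionary in the sense of Theorem~\ref{thm:MSE_D}: the identity trivially satisfies $\bd I \bd I^\TT = \bd I$, so it is a Parseval tight frame, which is the structural assumption on $\bd D$ in Theorem~\ref{thm:MSE_D}. Next I would note that the $\bd B$-norm D-RIP in \eqref{def:RIP-mod} with $\bd D = \bd I$ reduces verbatim to the $\bd B$-norm RIP in \eqref{def:B-norm-RIP}, because $\bd D \bld v = \bld v$ and $\bd A \bd D \bld v = \bd A \bld v$; hence the assumed bound $\hat{\delta}_{2s} < 0.08$ on the $\bd B$-norm RIC is exactly the bound required by Theorem~\ref{thm:MSE_D} with $\bd D = \bd I$. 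Likewise the objective in \eqref{eq:P1-TF} becomes $\|\bd I^\TT \bld x\|_1 = \|\bld x\|_1$, so \eqref{eq:P1-TF} collapses to \eqref{eq:P1-C-TF}, and the minimizer $\hat{\bld x}$ of the latter is a minimizer of the former.

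Having aligned the hypotheses, I would simply invoke Theorem~\ref{thm:MSE_D} to conclude
\begin{equation*}
\|\hat{\bld x} - \bld x^*\|_2 \leq C_0^{\textsc{TF}} \epsilon + C_1^{\textsc{TF}} \frac{\|\bd I^\TT \bld x^* - (\bd I^\TT \bld x^*)_s\|_1}{\sqrt{s}},
\end{equation*}
and then observe that $\bd I^\TT \bld x^* = \bld x^*$, so the right-hand side equals $C_0^{\textsc{TF}} \epsilon + C_1^{\textsc{TF}} \|\bld x^* - (\bld x^*)_s\|_1 / \sqrt{s}$, with the same constants $C_0^{\textsc{TF}}, C_1^{\textsc{TF}}$ from \eqref{eq:final_expression_for_perf_bounds}. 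In particular, the explicit numerical values $C_0^{\textsc{TF}} = 16.97$ and $C_1^{\textsc{TF}} = 3.00$ derived from the choices $c_1 = 0.75$, $c_2 = 0.234$, $M = 6s$ carry over unchanged, since their derivation relied only on the $\bd B$-norm RICs and the relations in Lemma~\ref{lem:delta_ps_2s}, neither of which depends on $\bd D$.

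There is no real obstacle to anticipate: the entire content of the corollary is absorbed into Theorem~\ref{thm:MSE_D}, and the only thing to be careful about is noting explicitly that the four ingredients of the theorem (tight-frame dictionary, $\bd B$-norm D-RIP, the $\ell_1$-objective, and the sparsity-defect term) each reduce to the canonical-sparse counterpart when $\bd D = \bd I$. Consequently, the proof should occupy only a few lines in the final manuscript.
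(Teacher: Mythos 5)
Your proposal is correct and matches the paper's own treatment: the paper presents this corollary as an immediate specialization of Theorem~\ref{thm:MSE_D} to $\bd D = \bd I$, with no additional argument. Your verification that the identity is a Parseval tight frame, that the $\bd B$-norm D-RIP collapses to the $\bd B$-norm RIP, and that the objective and sparsity-defect term reduce to their canonical forms is exactly the (implicit) content of the paper's derivation.
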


% ----------------------------------------
% ----------------------------------------
% 	      ADDITIONAL RESULTS
% ----------------------------------------
% ----------------------------------------
\section{Additional Results on Compressed Sensing Image Recovery}
Tables~\ref{tableS:ur100_highres}~and \ref{tableS:div2k_highres} show the performance metrics for the reconstructions obtained on Urban100 and DIV2K high-resolution datasets, respectively. The proposed TF variants result in a superior performance over the benchmark techniques. Among the proposed ones, TF-FISTA-Net is the best. 

\begin{table}[htb]
\centering
\caption{\noindent Compressed sensing image recovery on \textit{Urban100} (High-Res): PSNR (in dB) and SSIM for recovery using various competitive methods. TF-FISTA-Net and RTF-ISTA-Net have superior performance among all methods for all CS ratios. We observe the following order of performance: TF-FISTA-Net $\approx$ RTF-ISTA-Net $>$ TF-ISTA-Net $>$ RTF-FISTA-Net $>$ ISTA-Net $>$ TVAL3 $>$ ReconNet. The best performance is shown in \textbf{boldface}, and the second best is shown \underline{underlined}.}
\label{tableS:ur100_highres}
\vskip 0.05in
\begin{center}
\begin{small}
\begin{sc}
\resizebox{.95\linewidth}{!}{
    \begin{tabular}{c||ccccc}
    \toprule
    & & & CS Ratio & &   \\ 
    \midrule
    Method & 10$\%$ & 20$\%$ & 30$\%$ & 40$\%$ & 50$\%$  \\
    \midrule
    TVAL3  & 20.72, 0.5942 & 23.67, 0.7398 & 25.90, 0.8206 & 27.94, 0.8744 & 29.92, 0.9118  \\
    ReconNet  & 21.54, 0.6124 & 22.93, 0.7021 & 23.56, 0.7987 & 26.85, 0.8656 & 28.42, 0.8934   \\
    ISTA-Net  & 23.40, 0.7142 & 26.90, 0.8375 & 29.43, 0.8954 & 31.52, 0.9282 & 33.46, 0.9496  \\
    \midrule
    TF-ISTA-Net & 23.57, \underline{0.7242} & 27.52, 0.8528 & 30.17, 0.9069 & 32.33, 0.9372 &  \underline{34.49, 0.9578}  \\
    RTF-ISTA-Net & \underline{23.64}, 0.7239 & \underline{27.61}, \textbf{0.8557} & \textbf{30.24, 0.9086} & \textbf{32.58, 0.9398} & \textbf{34.62, 0.9588}  \\
    TF-FISTA-Net & \textbf{23.76, 0.7314} & \textbf{27.63}, \underline{0.8545} & \underline{30.21, 0.9082} & \underline{32.46, 0.9390} & {34.48, 0.9580}   \\
    RTF-FISTA-Net & 23.36, 0.7114 & 27.54, 0.8534 & 30.17, 0.9071 & 32.40, 0.9526 & 34.44, 0.9576   \\
    \bottomrule 
    \end{tabular}}
\end{sc}
\end{small}
\end{center}
\end{table}

\begin{table}[htb]
    \centering
    \caption{\noindent Compressed sensing image recovery on \textit{DIV2K} (High-Res): PSNR (in dB) and SSIM for recovery using various competitive methods. TF-FISTA-Net and RTF-ISTA-Net have superior performance among all methods for all CS ratios. We observe the following order of performance: TF-FISTA-Net $>$ RTF-ISTA-Net $\approx$ TF-ISTA-Net $>$ RTF-FISTA-Net $>$ ISTA-Net $>$ TVAL3 $>$ ReconNet. The best performance is shown in \textbf{boldface}, and the second best is shown \underline{underlined}.}
    \label{tableS:div2k_highres}
    \vskip 0.05in
    \begin{center}
    \begin{small}
    \begin{sc}
    \resizebox{.95\linewidth}{!}{
        \begin{tabular}{c||ccccc}
        \toprule
        & & & CS Ratio & &   \\ 
        \midrule
        Method & 10$\%$ & 20$\%$ & 30$\%$ & 40$\%$ & 50$\%$  \\
        \midrule
        TVAL3  & 24.94, 0.6970 & 27.98, 0.8075 & 30.05, 0.8659 & 31.89, 0.9042 & 33.64, 0.9309  \\
        ReconNet  & 25.12, 0.6916 & 27.41, 0.7957 & 28.54, 0.8564 & 31.03, 0.8978 & 32.86, 0.9104   \\
        ISTA-Net  & 27.65, 0.7833 & 30.73, 0.8691 & 33.00, 0.9136 & 34.91, 0.9394 & 36.71, 0.9569  \\
        \midrule
        TF-ISTA-Net & \underline{27.74}, \underline{0.7871} & \textbf{31.08, 0.8775} & \underline{33.41, 0.9196} & 35.43, 0.9450 &  \underline{37.39, 0.9621}  \\
        RTF-ISTA-Net & 27.71, 0.7861 & \underline{31.06}, \underline{0.8769} & 33.39, 0.9196 & \textbf{35.45, 0.9450} & 37.36, 0.9619  \\
        TF-FISTA-Net & \textbf{27.80, 0.7885} & 31.06, 0.8763 & \textbf{33.43, 0.9199} & \textbf{35.45, 0.9452} & \textbf{37.41, 0.9625}   \\
        RTF-FISTA-Net & 27.53, 0.7767 & 31.02, 0.8757 & 33.31, 0.9184 & 35.31, 0.9438 & 37.21, 0.9610   \\
        \bottomrule 
        \end{tabular}}
    \end{sc}
    \end{small}
    \end{center}
\end{table}

% ----------------------------------------
% ----------------------------------------
% 	      EXAMPLE FIGURES
% ----------------------------------------
% ----------------------------------------

Figures~\ref{fig:example_recon_BSD68}-\ref{fig:example_recon_Set11} show results of compressed sensing image recovery with data drawn from BSD68, Urban100, DIV2K, and Set11 datasets. These results show that the proposed methods have superior reconstruction capabilities. The artifacts in the reconstructed images are significantly lower in the case of the proposed methods vis-\`a-vis the benchmark methods. The proposed TF variants result in a superior performance over the benchmark techniques. Among the proposed ones, TF-FISTA-Net performs the best. 
\begin{figure*}[htb]
\begin{center}
	\begin{tabular}[b]{P{.21\linewidth}P{.21\linewidth}P{.21\linewidth}P{.21\linewidth}}
		\includegraphics[width=1.0\linewidth]{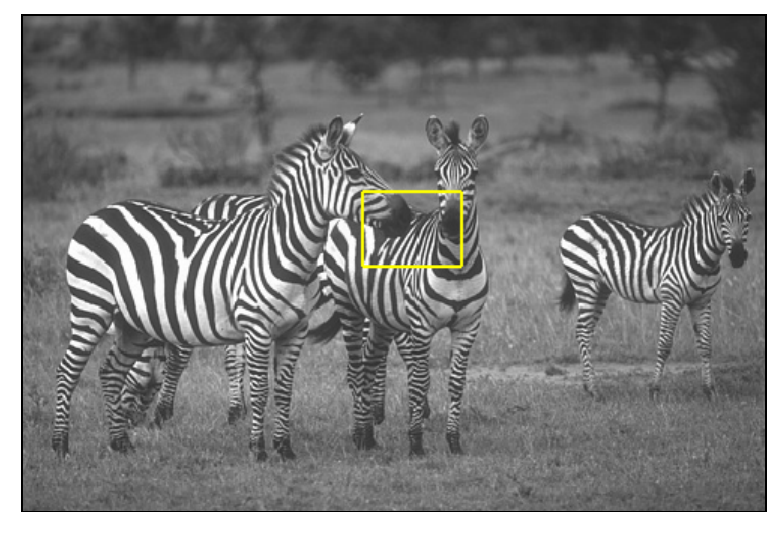}  &
		\includegraphics[width=1.0\linewidth]{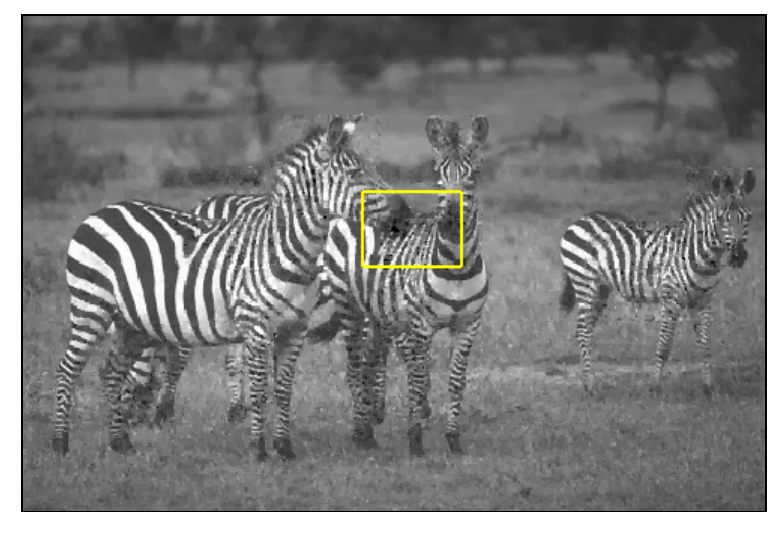}  &
		\includegraphics[width=1.0\linewidth]{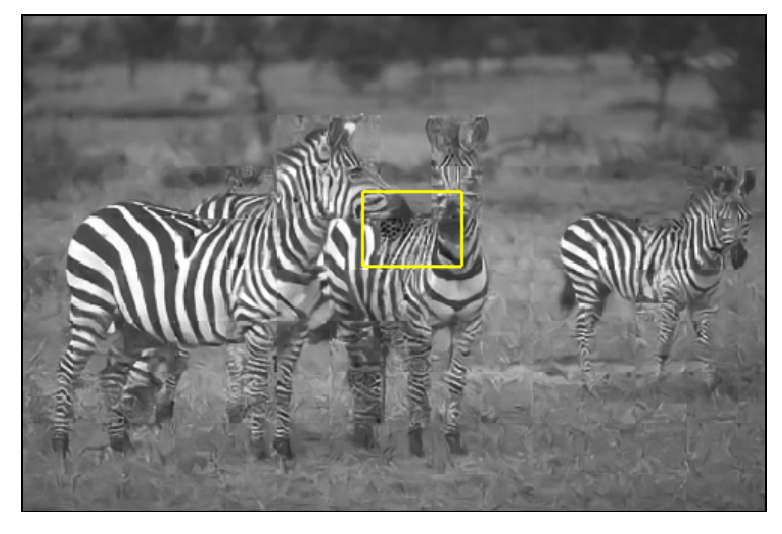} & 
		\includegraphics[width=1.0\linewidth]{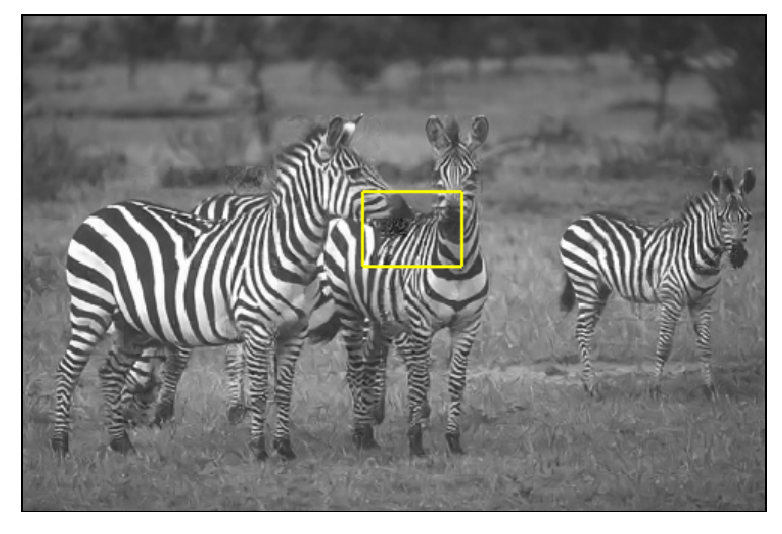} \\[-1pt]
		\includegraphics[width=1.0\linewidth]{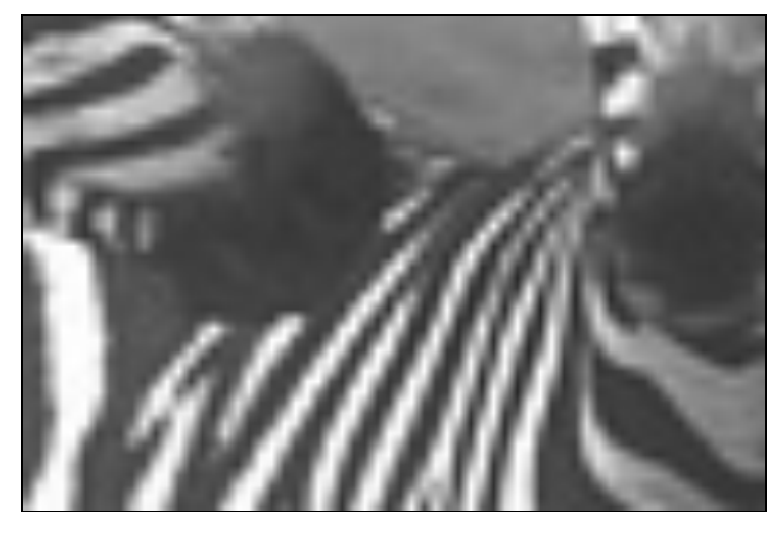}  &
		\includegraphics[width=1.0\linewidth]{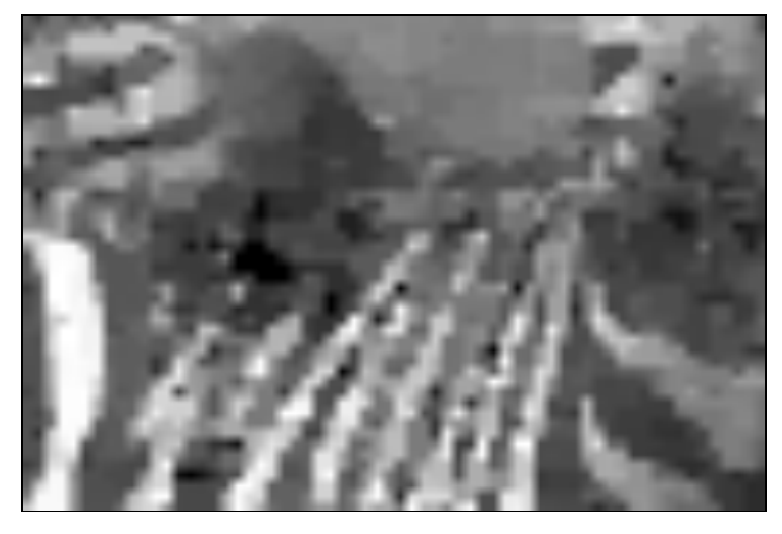}  &
		\includegraphics[width=1.0\linewidth]{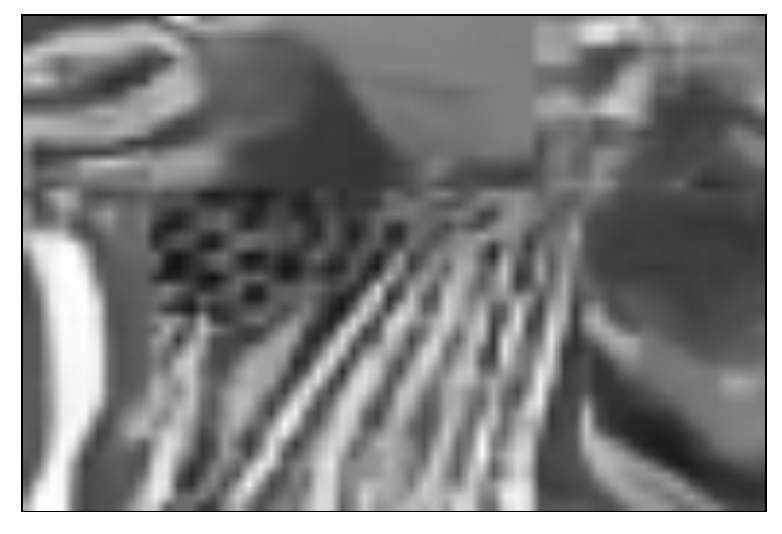} & 
		\includegraphics[width=1.0\linewidth]{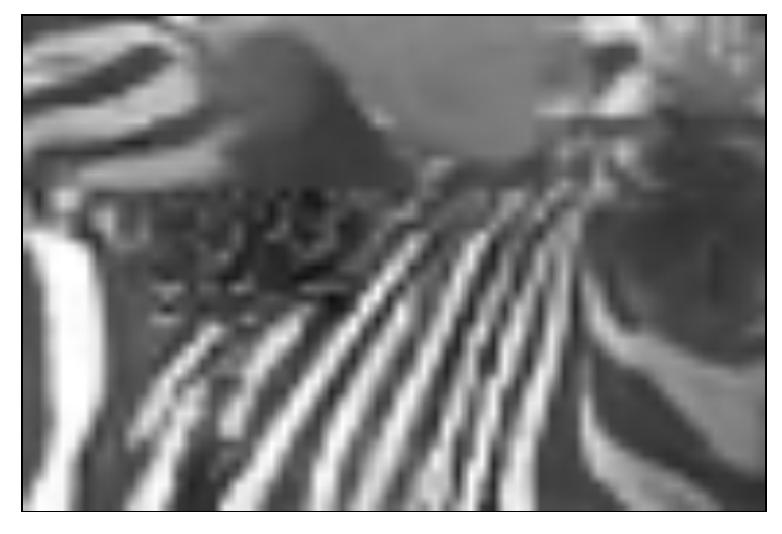} \\[-1pt]
		Original & TVAL3 & ReconNet & ISTA-Net \\ 
		& (25.03, 0.8445) & (24.51, 0.7938) & (28.98, 0.8976)\\ 
		\includegraphics[width=1.0\linewidth]{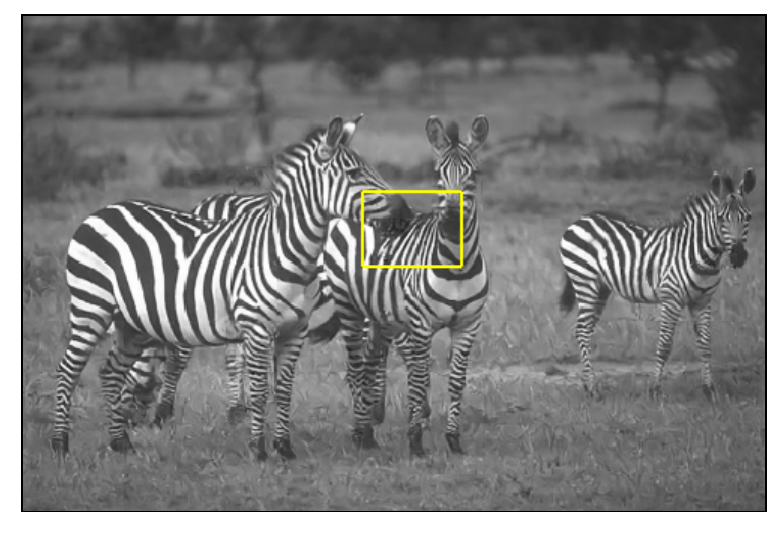}  &
		\includegraphics[width=1.0\linewidth]{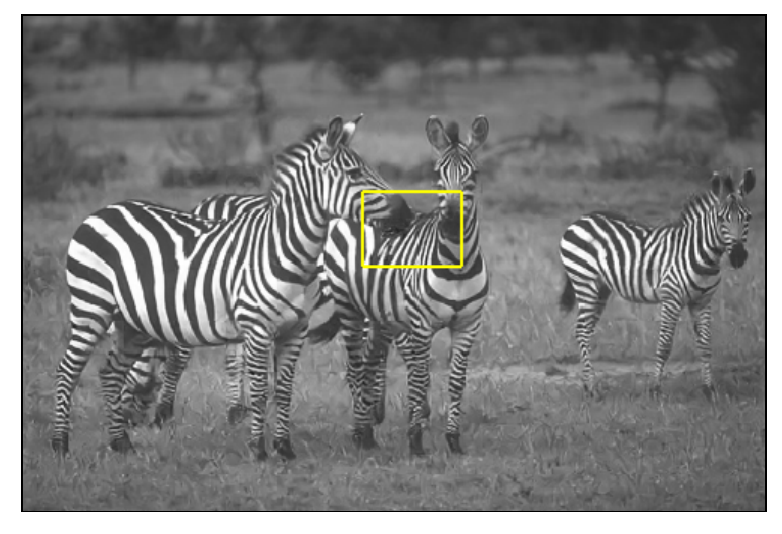}  &
		\includegraphics[width=1.0\linewidth]{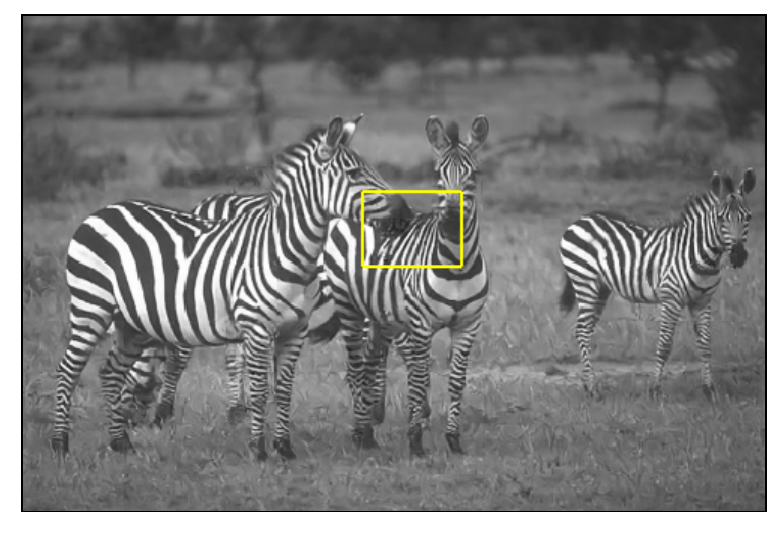} & 
		\includegraphics[width=1.0\linewidth]{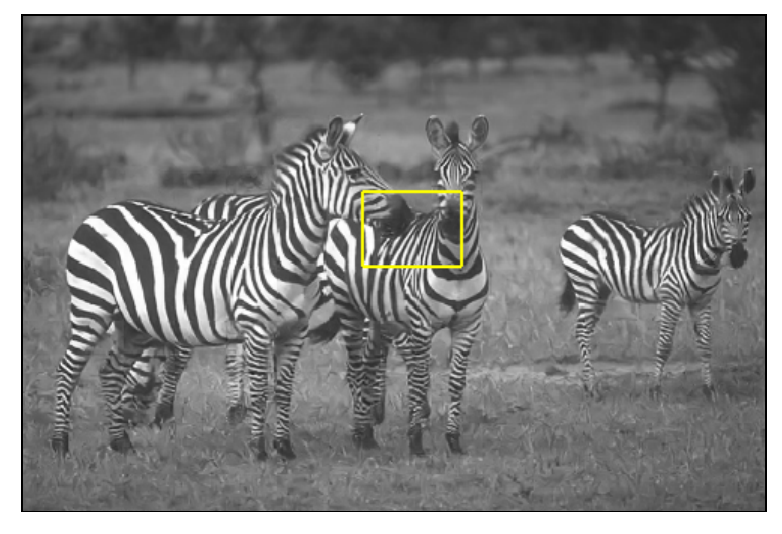} \\[-1pt]
		\includegraphics[width=1.0\linewidth]{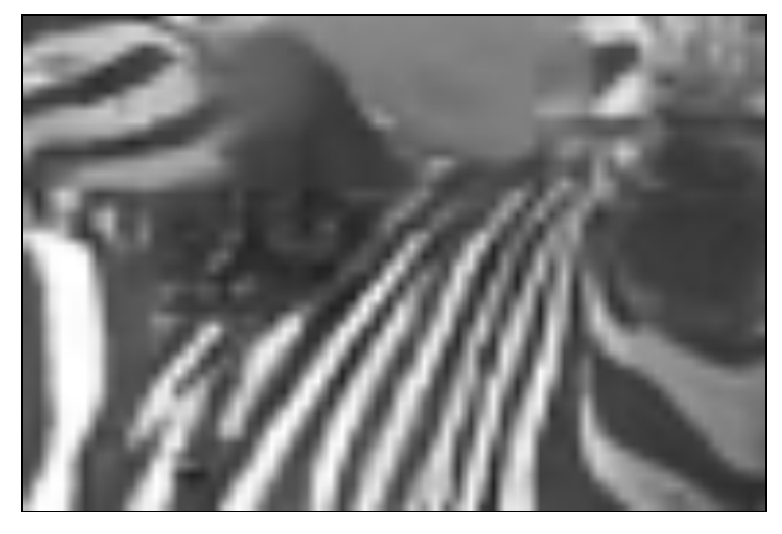}  &
		\includegraphics[width=1.0\linewidth]{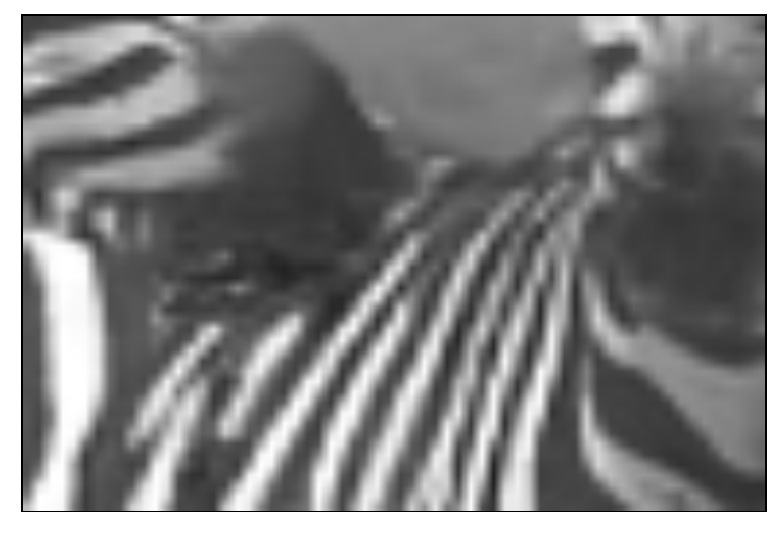}  &
		\includegraphics[width=1.0\linewidth]{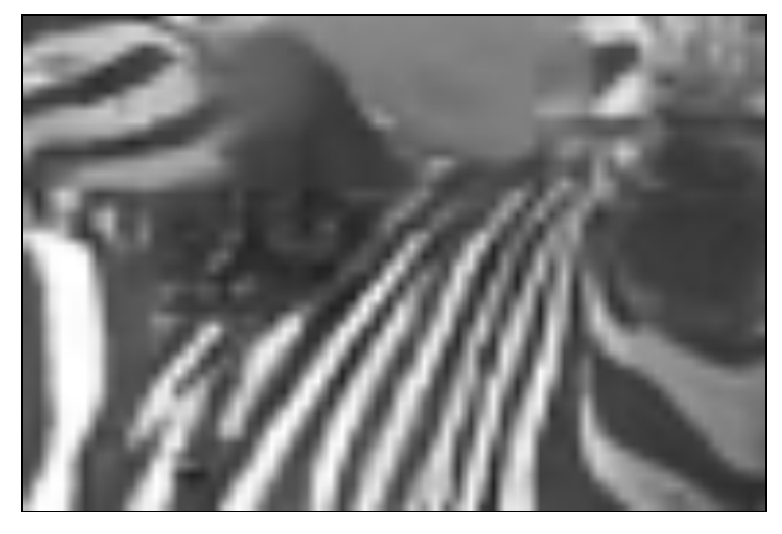} & 
		\includegraphics[width=1.0\linewidth]{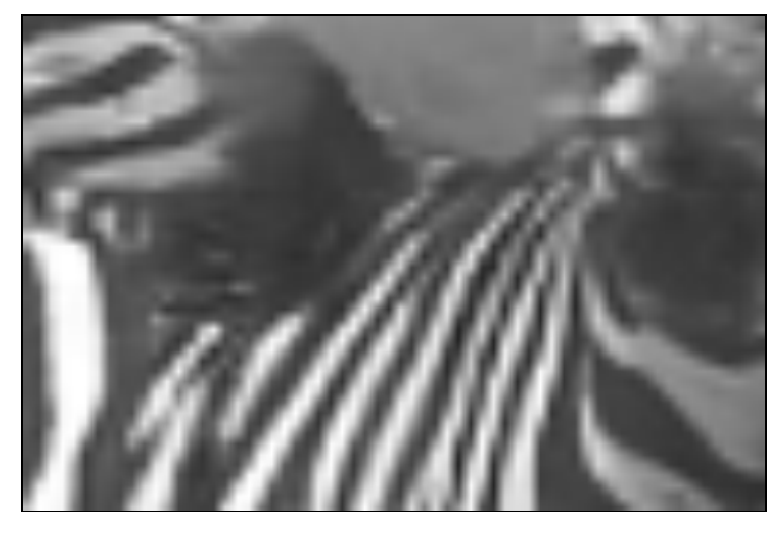} \\[-1pt]
		TF-ISTA-Net & \bf{RTF-ISTA-Net} & TF-FISTA-Net & RTF-FISTA-Net \\
		(29.76, 0.9081) & \textbf{(30.20, 0.9118)} & (29.77, 0.9079) & (29.98, 0.9086)
	\end{tabular} 
\caption[]{ Examples of reconstructions achieved using various methods under comparison. The numbers indicate (PSNR in dB, SSIM). Unlike the benchmark methods (first row), the proposed methods (second row) do not have prominent artifacts. The best performance is given by RTF-ISTA-Net.}
\label{fig:example_recon_BSD68}  
\end{center}
\end{figure*}

%------------------------------------------------------------------------

\begin{figure*}[htb]
\begin{center}
    \begin{tabular}[b]{P{.20\linewidth}P{.20\linewidth}P{.20\linewidth}P{.20\linewidth}}
        \includegraphics[width=1.0\linewidth]{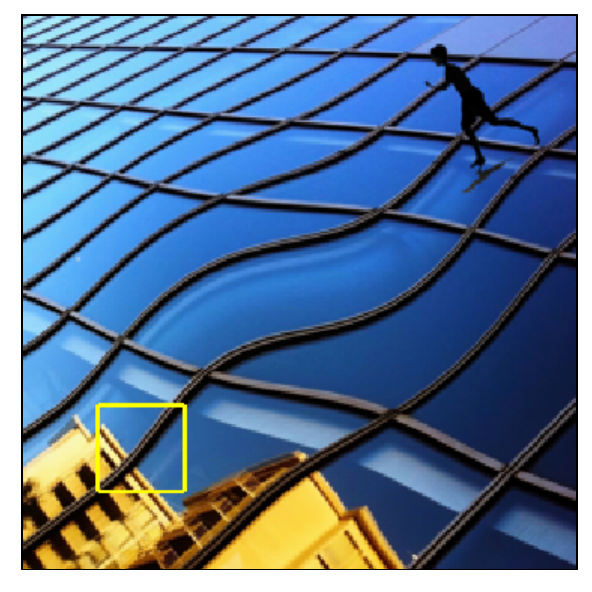}  &
        \includegraphics[width=1.0\linewidth]{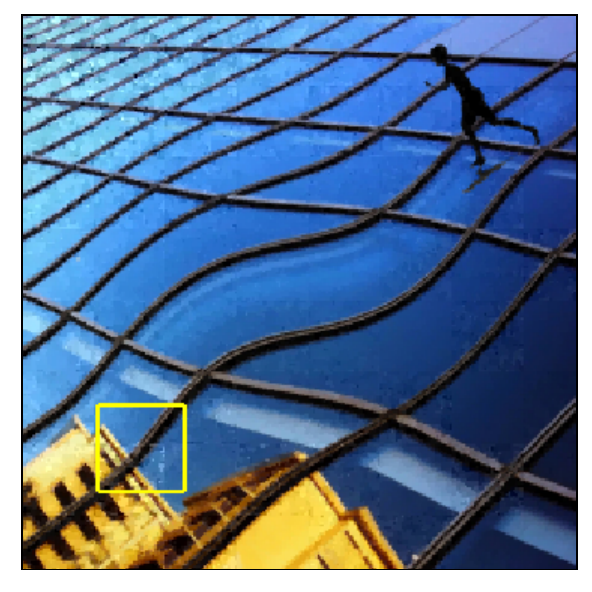}  &
        \includegraphics[width=1.0\linewidth]{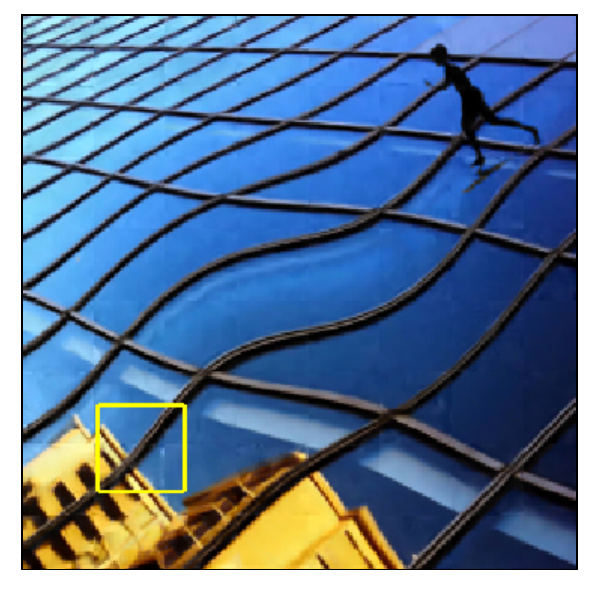} & 
        \includegraphics[width=1.0\linewidth]{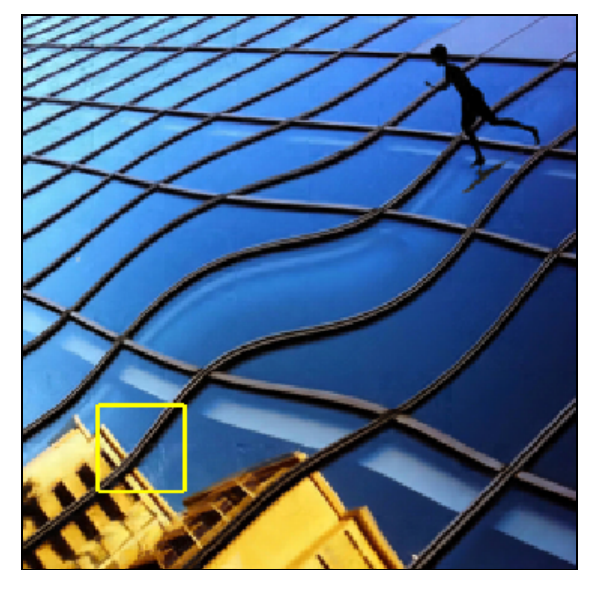} \\[-1pt]
        \includegraphics[width=1.0\linewidth]{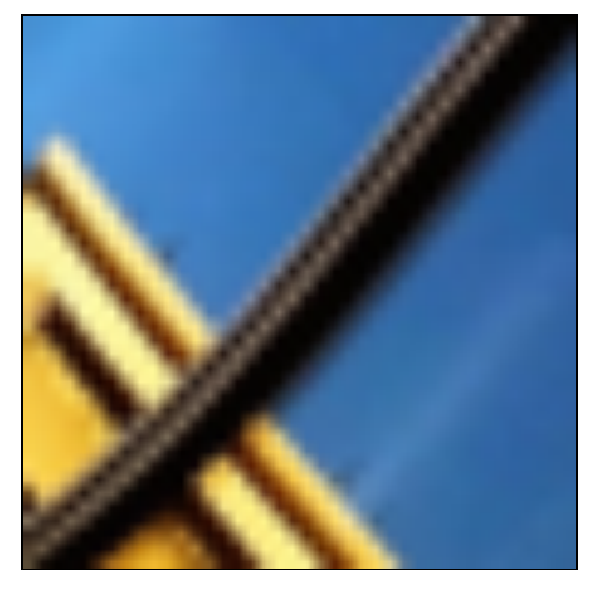}  &
        \includegraphics[width=1.0\linewidth]{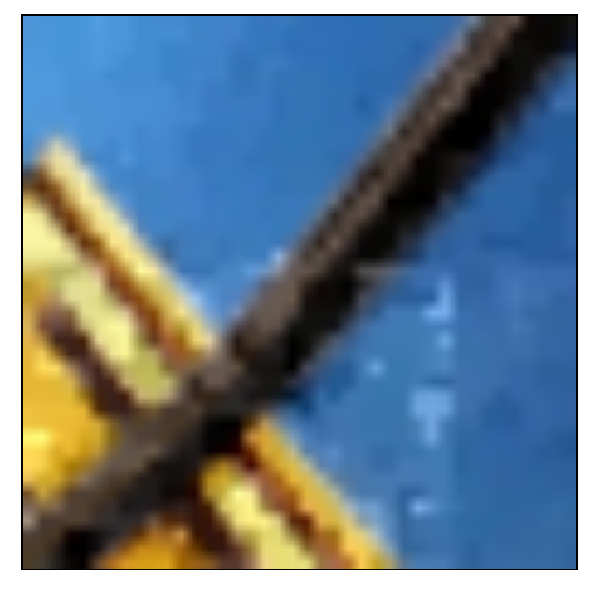}  &
        \includegraphics[width=1.0\linewidth]{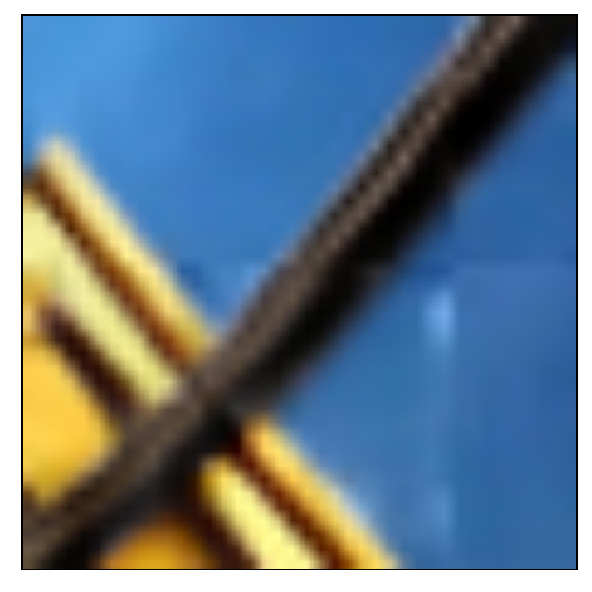} & 
        \includegraphics[width=1.0\linewidth]{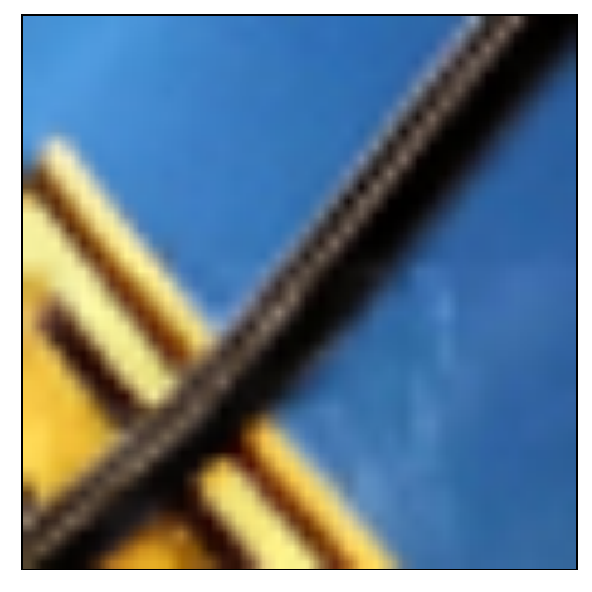} \\[-1pt]
        Original & TVAL3 & ReconNet & ISTA-Net \\ 
        & (26.52, 0.9053) & (29.67, 0.9442) & (34.55, 0.9796)\\ 
        \includegraphics[width=1.0\linewidth]{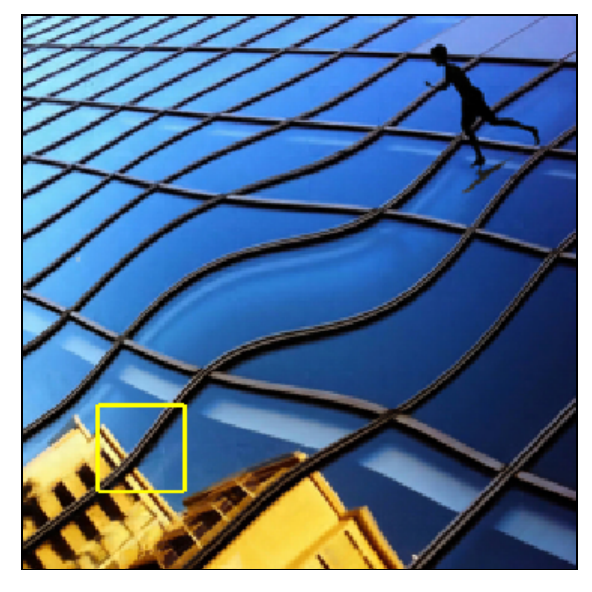}  &
        \includegraphics[width=1.0\linewidth]{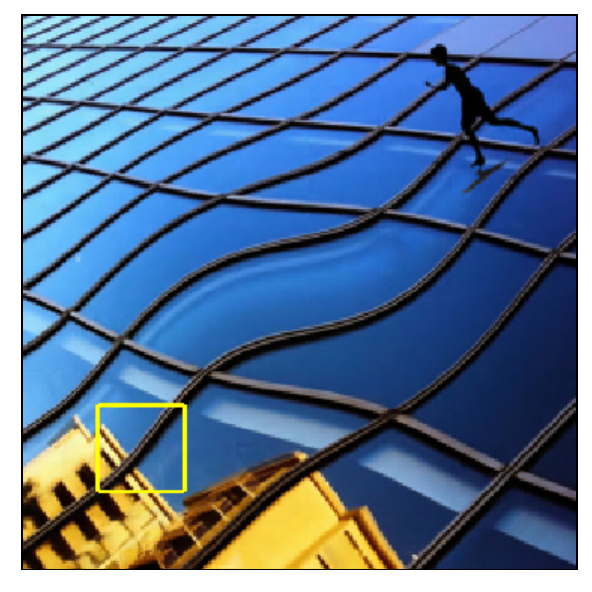}  &
        \includegraphics[width=1.0\linewidth]{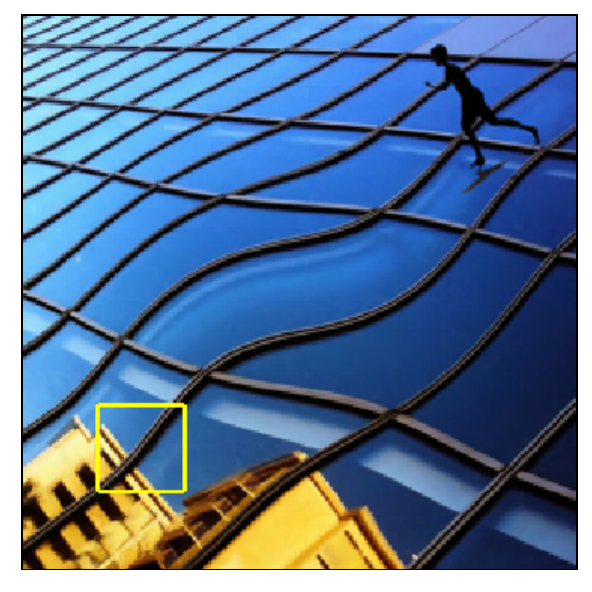} & 
        \includegraphics[width=1.0\linewidth]{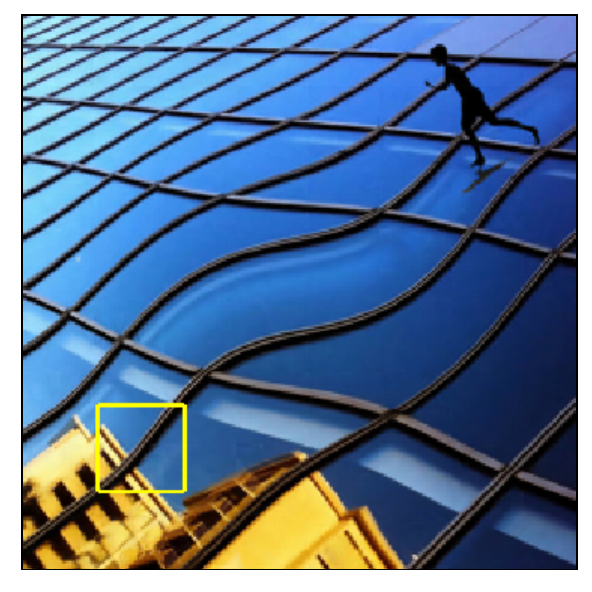} \\[-1pt]
        \includegraphics[width=1.0\linewidth]{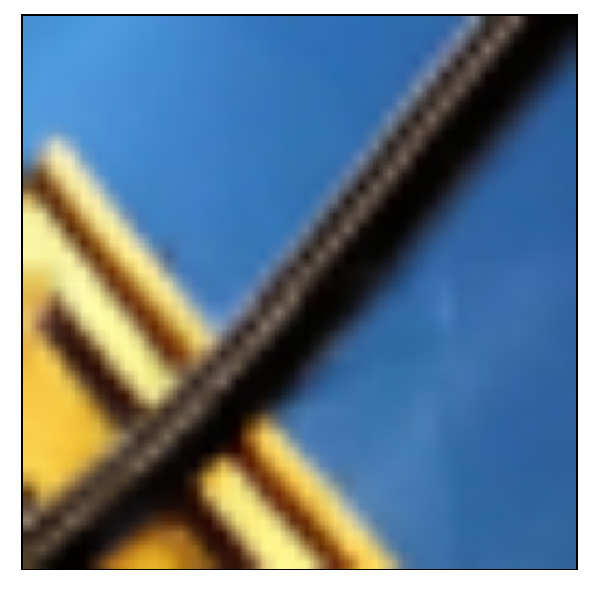}  &
        \includegraphics[width=1.0\linewidth]{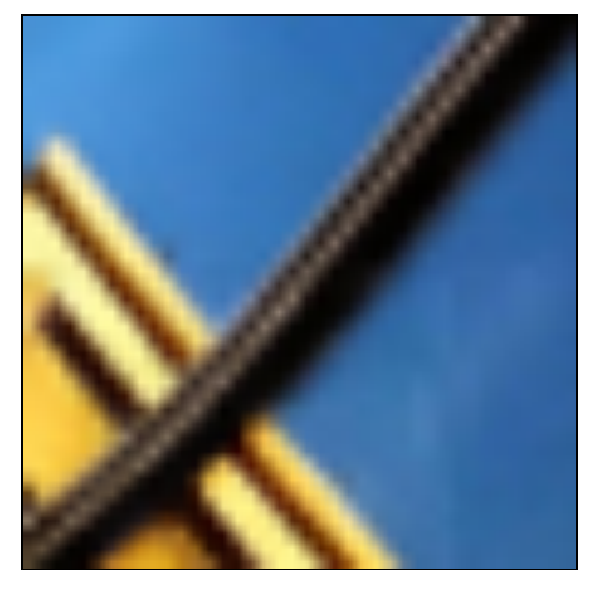}  &
        \includegraphics[width=1.0\linewidth]{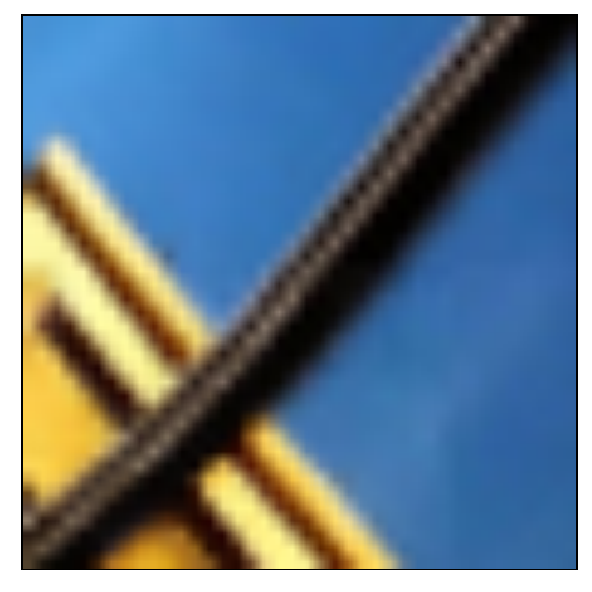} & 
        \includegraphics[width=1.0\linewidth]{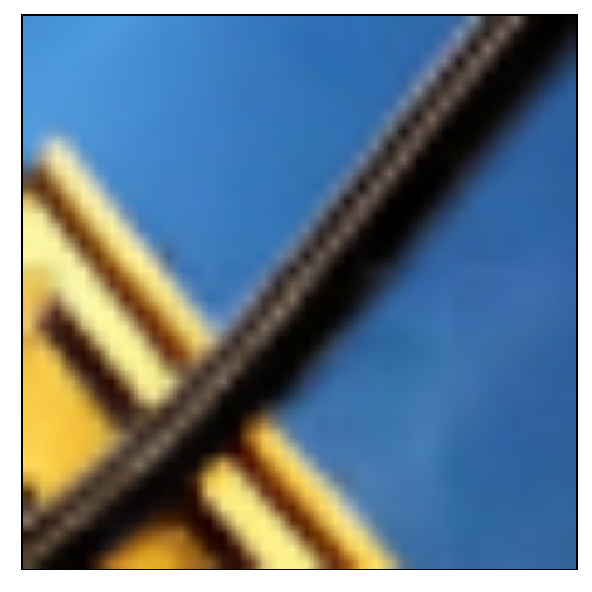} \\[-1pt]
        TF-ISTA-Net & \bf{RTF-ISTA-Net} & TF-FISTA-Net & RTF-FISTA-Net \\
        (37.20, 0.9876) & \textbf{(37.59, 0.9875)} & (37.46, 0.9876) & (37.00, 0.9859)
    \end{tabular} 
\caption[]{ Examples of reconstructions achieved using various methods under comparison. The numbers indicate (PSNR in dB, SSIM). Unlike the benchmark methods (first row), the proposed methods (second row) do not have prominent artifacts. The best performance is given by RTF-ISTA-Net.}
\label{fig:example_recon_ur100}  
\end{center}
\end{figure*}

%------------------------------------------------------------------------

\begin{figure*}[htb]
\begin{center}
    \begin{tabular}[b]{P{.21\linewidth}P{.21\linewidth}P{.21\linewidth}P{.21\linewidth}}
        \includegraphics[width=1.0\linewidth]{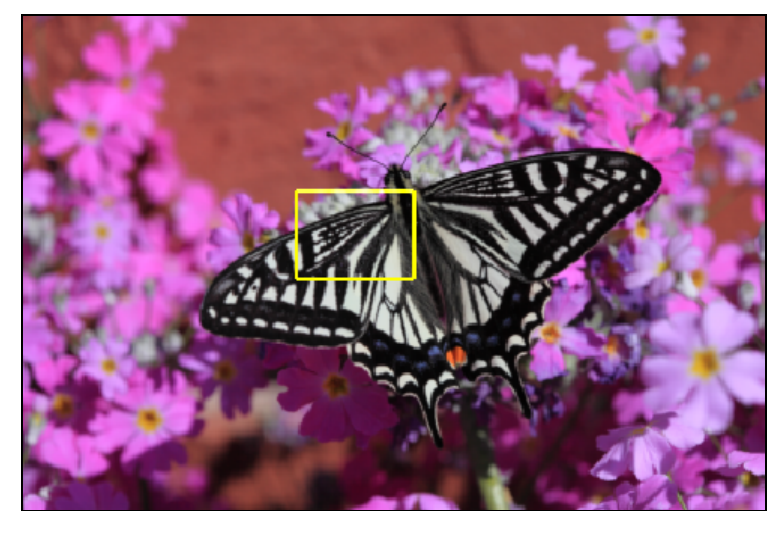}  &
        \includegraphics[width=1.0\linewidth]{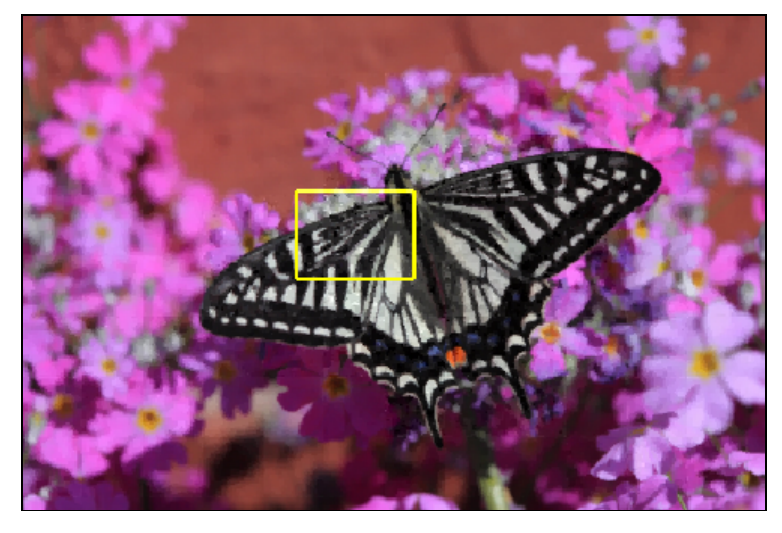}  &
        \includegraphics[width=1.0\linewidth]{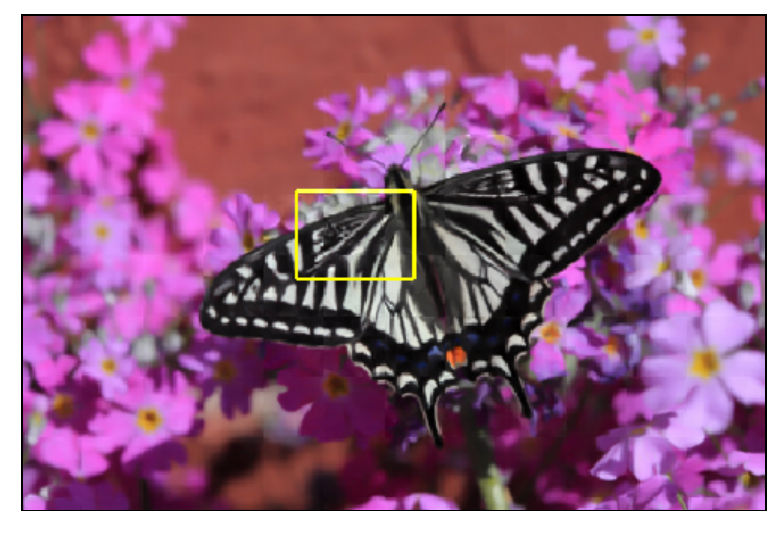} & 
        \includegraphics[width=1.0\linewidth]{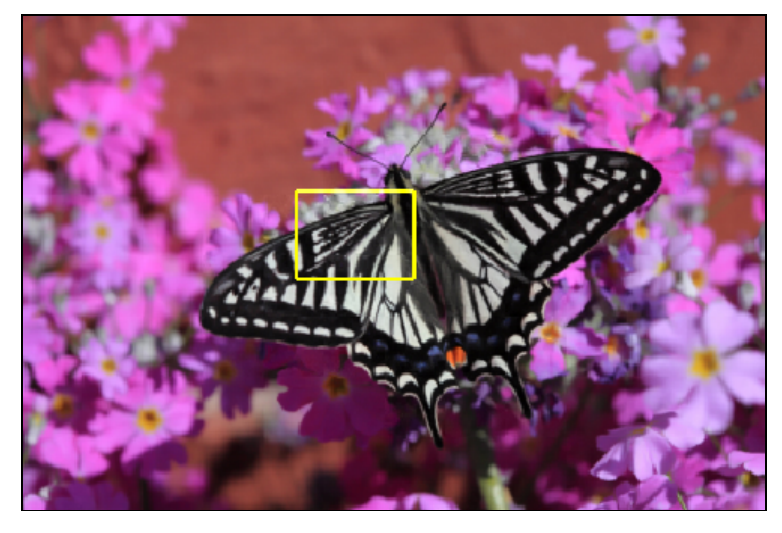} \\[-1pt]
        \includegraphics[width=1.0\linewidth]{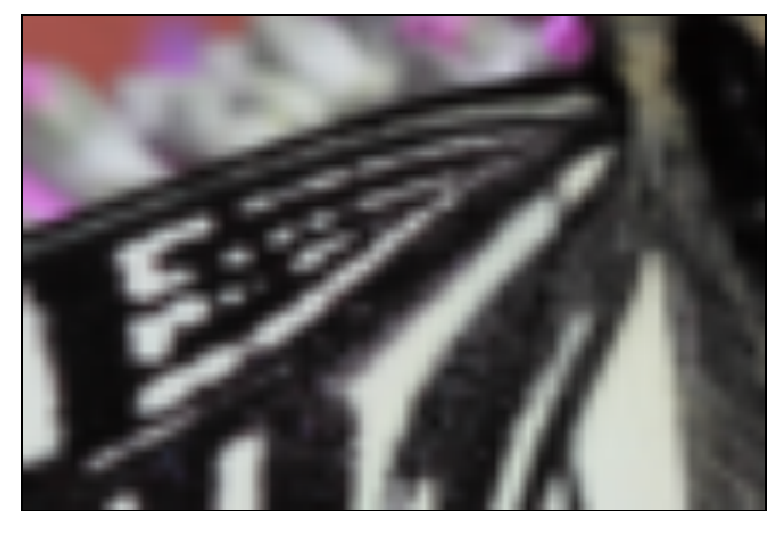}  &
        \includegraphics[width=1.0\linewidth]{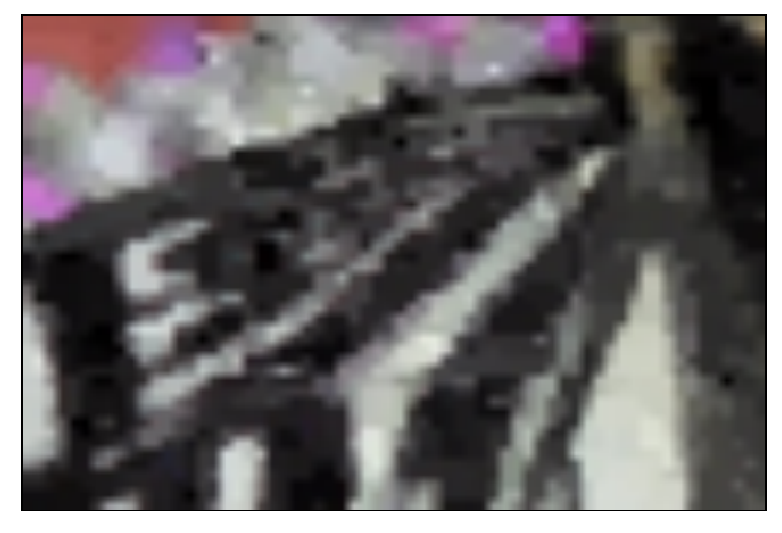}  &
        \includegraphics[width=1.0\linewidth]{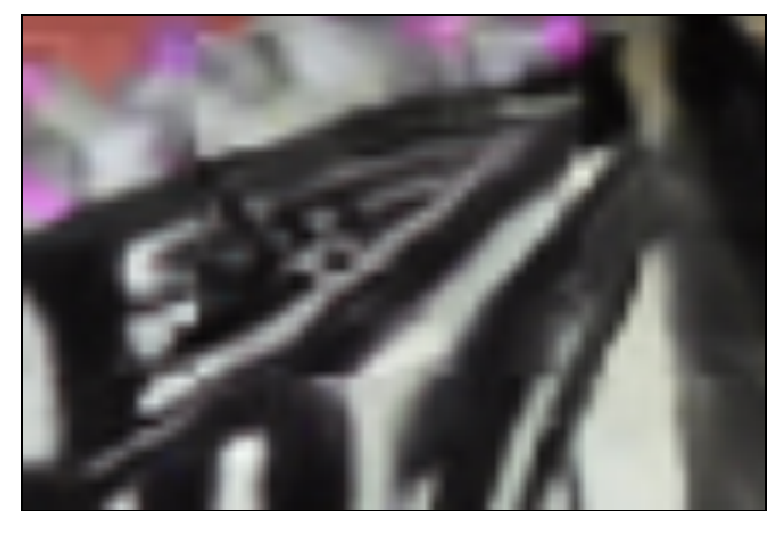} & 
        \includegraphics[width=1.0\linewidth]{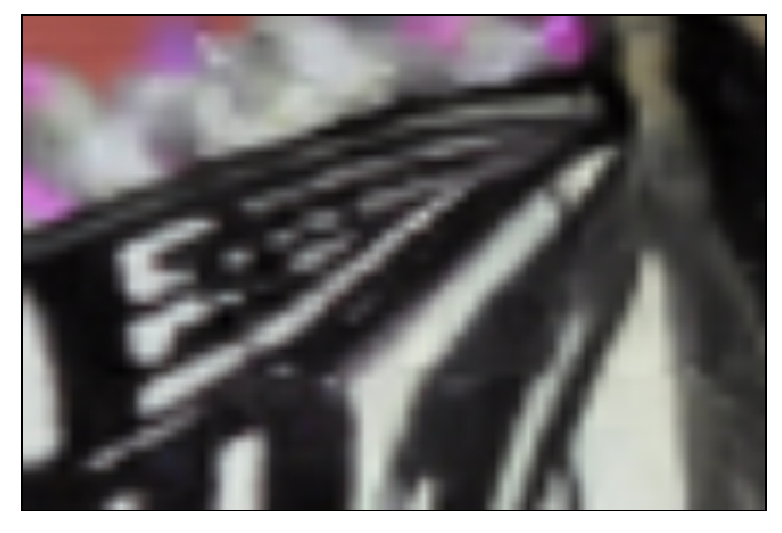} \\[-1pt]
        Original & TVAL3 & ReconNet & ISTA-Net \\ 
        & (29.70, 0.407) & (30.65, 0.9510) & (36.77, 0.9864)\\ 
        \includegraphics[width=1.0\linewidth]{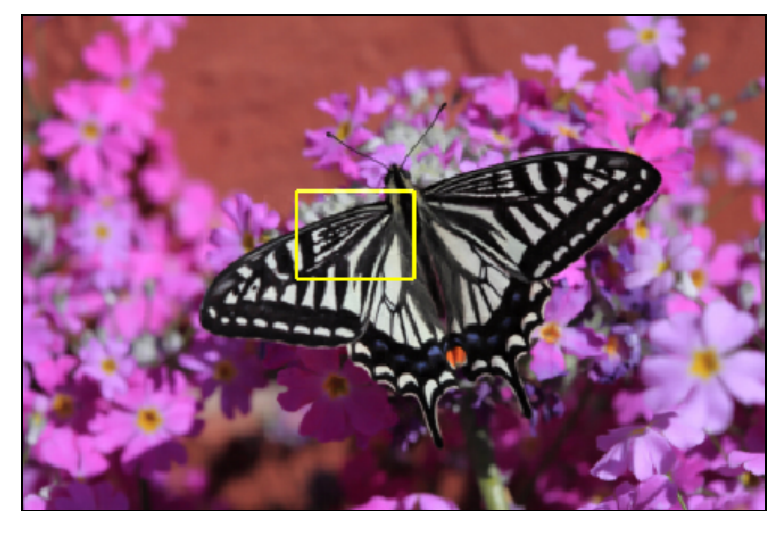}  &
        \includegraphics[width=1.0\linewidth]{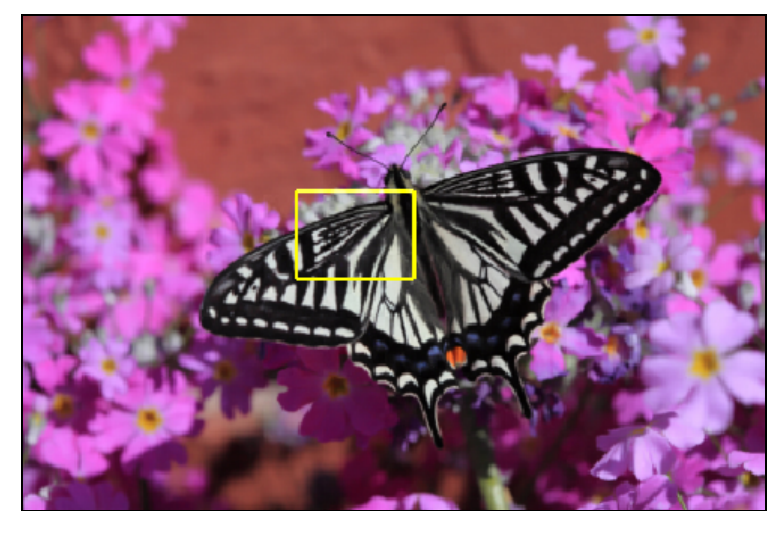}  &
        \includegraphics[width=1.0\linewidth]{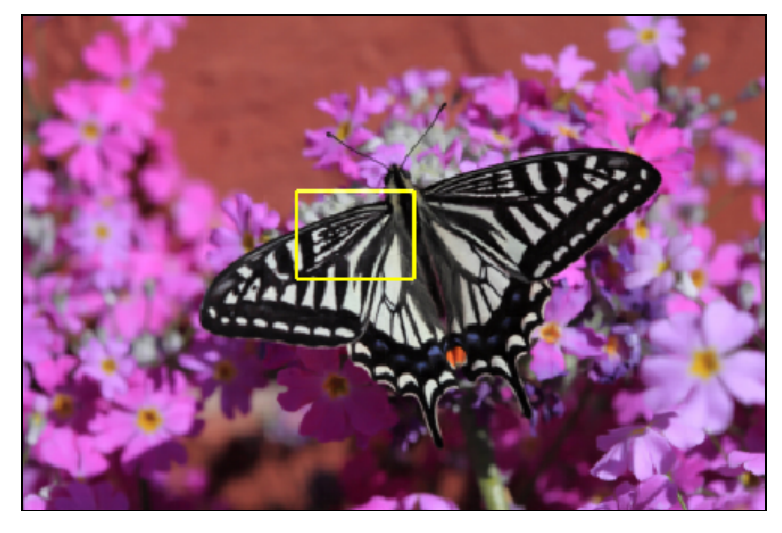} & 
        \includegraphics[width=1.0\linewidth]{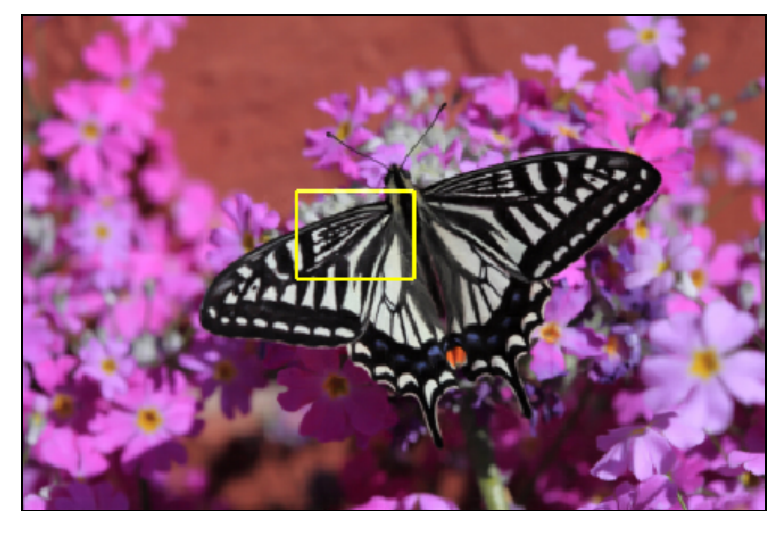} \\[-1pt]
        \includegraphics[width=1.0\linewidth]{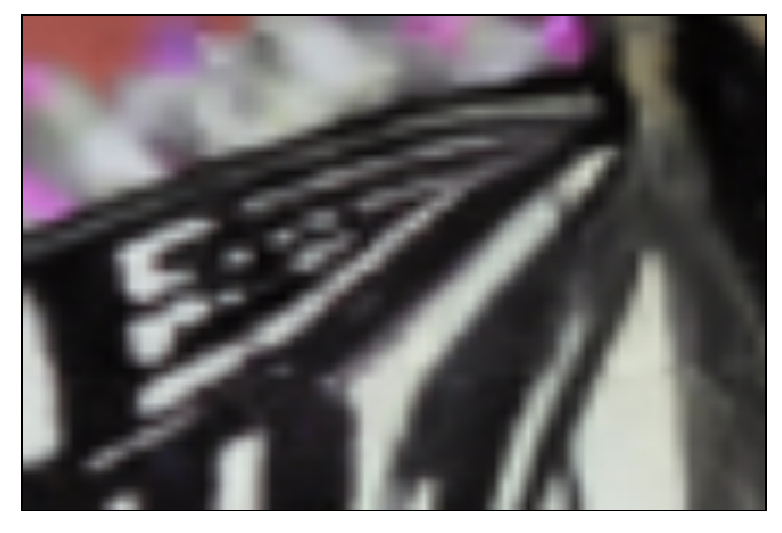}  &
        \includegraphics[width=1.0\linewidth]{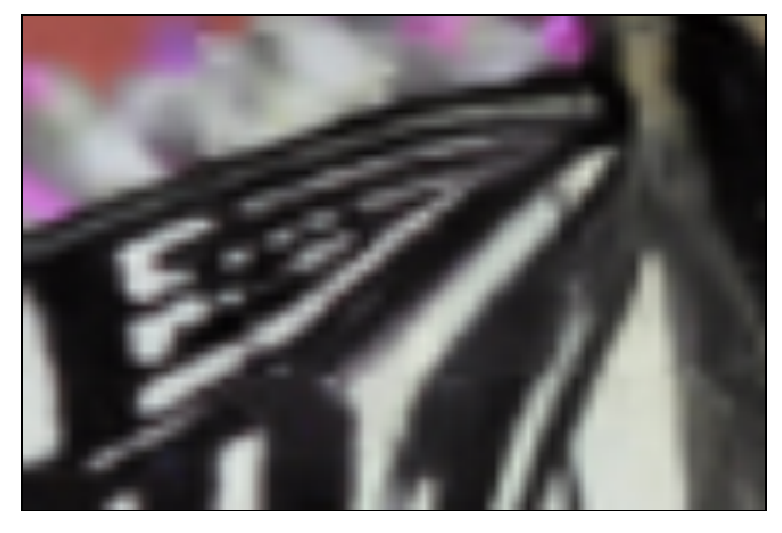}  &
        \includegraphics[width=1.0\linewidth]{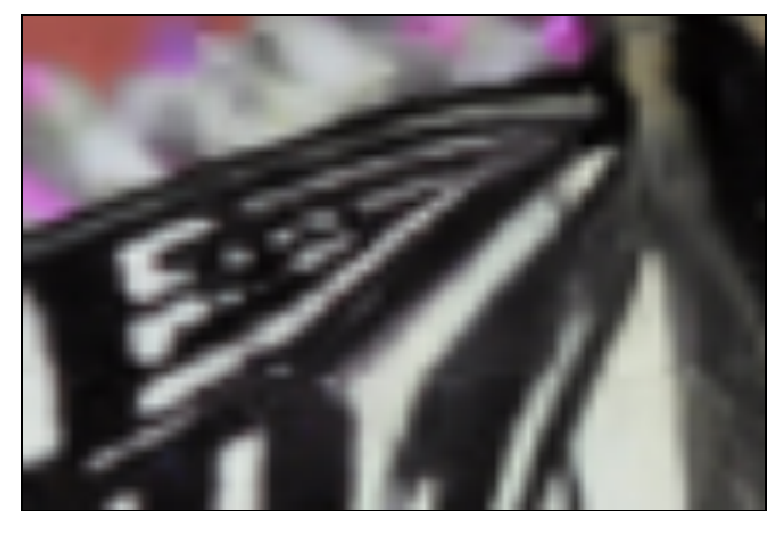} & 
        \includegraphics[width=1.0\linewidth]{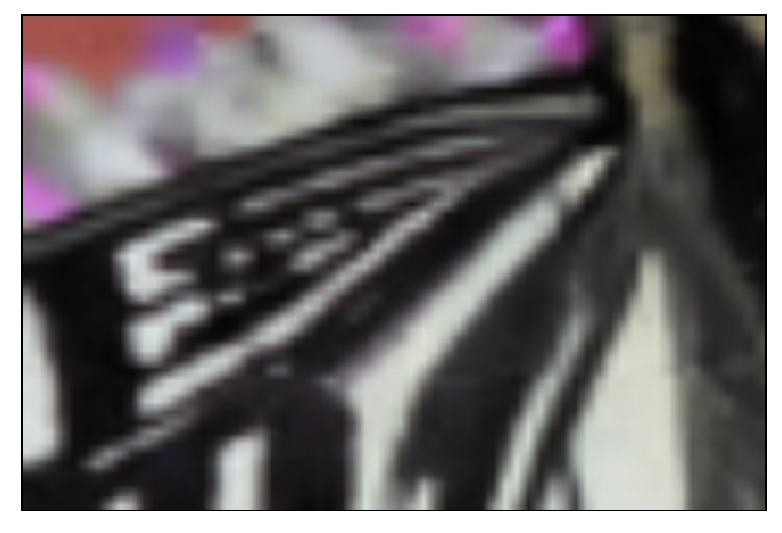} \\[-1pt]
        TF-ISTA-Net & RTF-ISTA-Net & \bf{TF-FISTA-Net} & RTF-FISTA-Net \\
        (38.33, 0.9897) & (38.28, 0.9896) & \textbf{(38.46, 0.9901)} & (37.93, 0.9885)
    \end{tabular} 
\caption[]{ Examples of reconstructions achieved using various methods under comparison. The numbers indicate (PSNR in dB, SSIM). Unlike the benchmark methods (first row), the proposed methods (second row) do not have prominent artifacts. TVAL3 and ReconNet have smudged wings and ISTA-Net has patch stitching artifacts (cf. zoomed in area). One can observe that the structure of the flowers is retained in all the proposed methods, but not in the benchmark methods (see zoomed-in area). The best reconstruction performance for this example is given by TF-FISTA-Net.}
\label{fig:example_recon_DIV2K}  
\end{center}
\end{figure*}

%------------------------------------------------------------------------

\begin{figure*}[htb]
    \begin{center}
        \begin{tabular}[b]{P{.21\linewidth}P{.21\linewidth}P{.21\linewidth}P{.21\linewidth}}
            \includegraphics[width=1.0\linewidth]{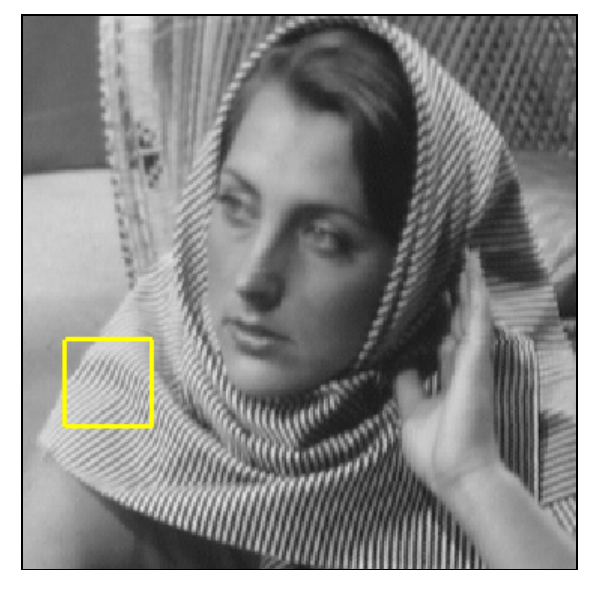}  &
            \includegraphics[width=1.0\linewidth]{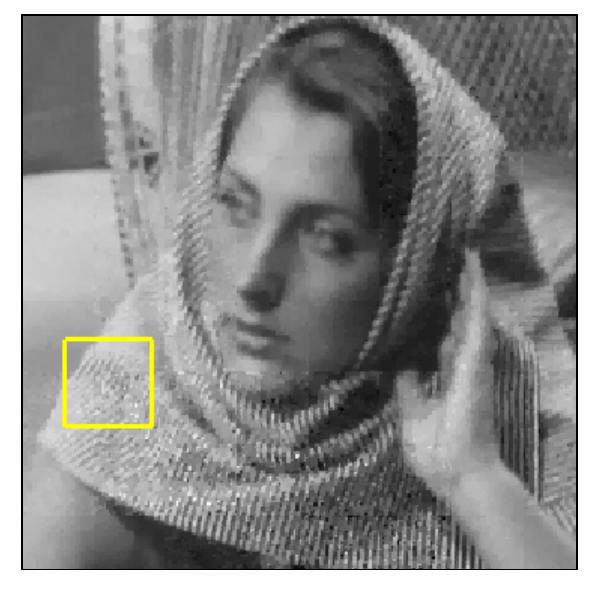}  &
            \includegraphics[width=1.0\linewidth]{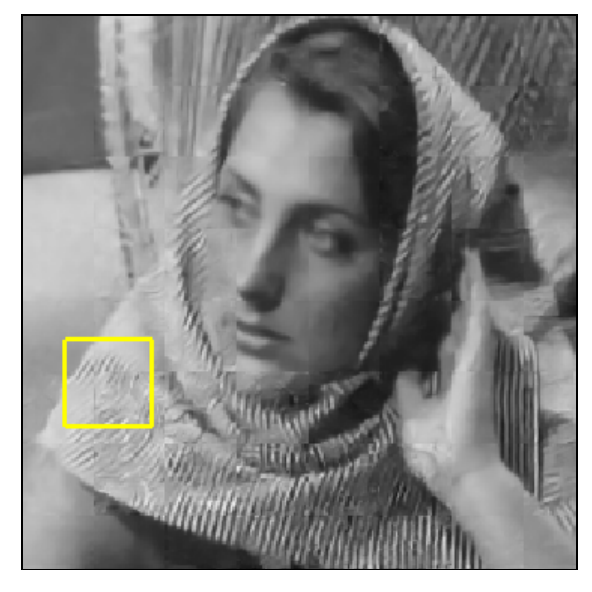} & 
            \includegraphics[width=1.0\linewidth]{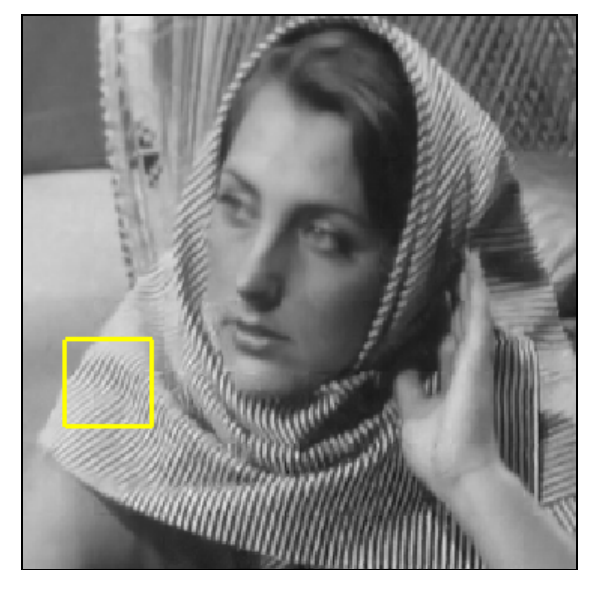} \\[-1pt]
            \includegraphics[width=1.0\linewidth]{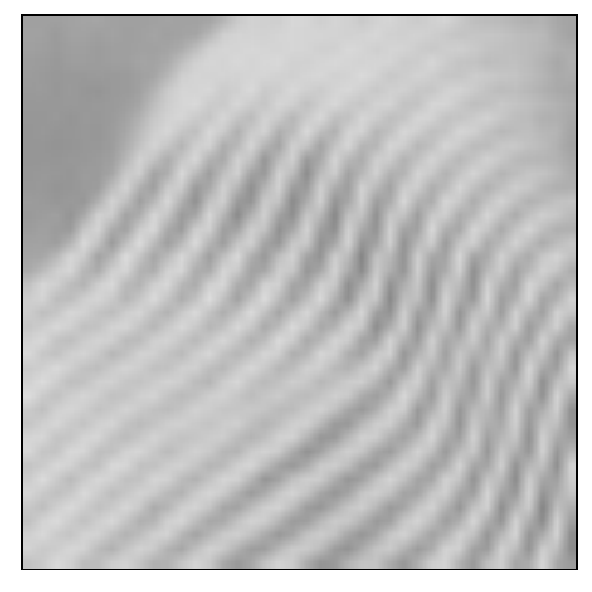}  &
            \includegraphics[width=1.0\linewidth]{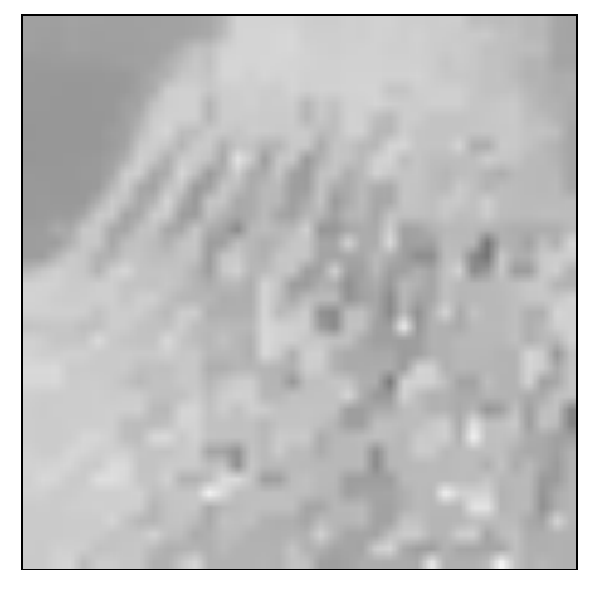}  &
            \includegraphics[width=1.0\linewidth]{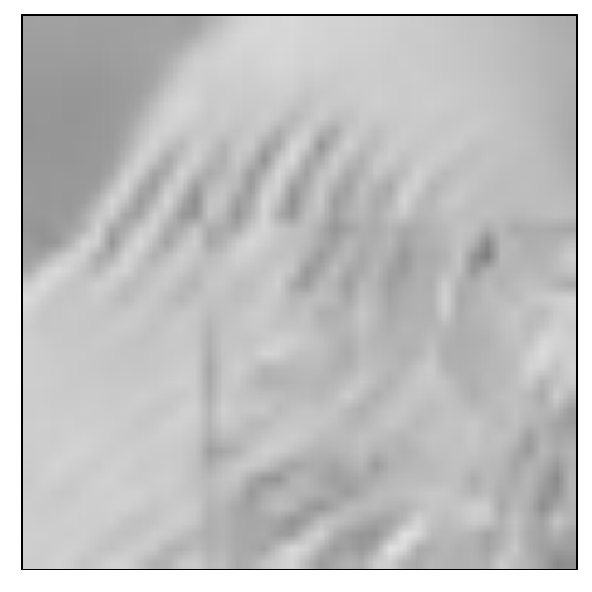} & 
            \includegraphics[width=1.0\linewidth]{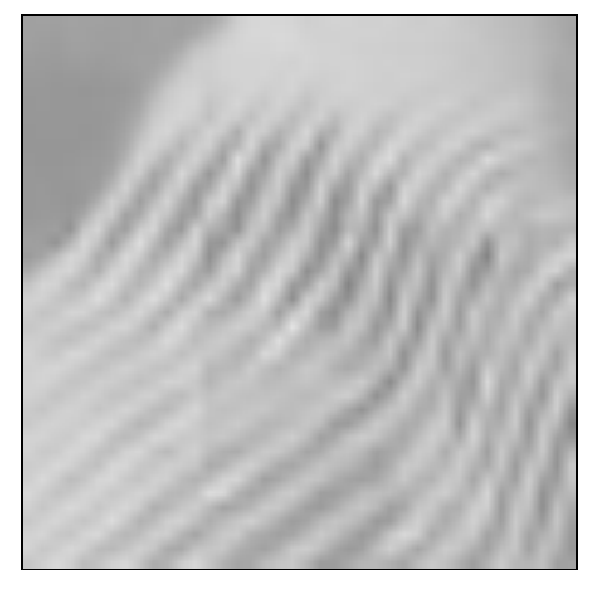} \\[-1pt]
            Original & TVAL3 & ReconNet & ISTA-Net \\ 
            & (26.74, 0.8438) & (26.62, 0.8483) & (32.86, 0.9557)\\ 
            \includegraphics[width=1.0\linewidth]{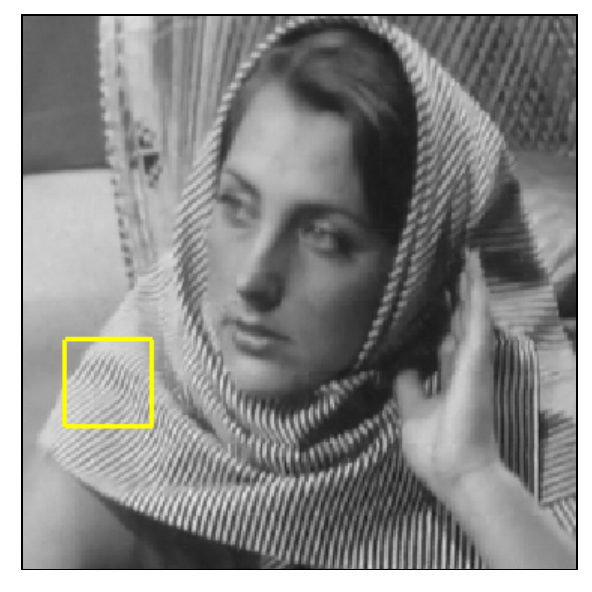}  &
            \includegraphics[width=1.0\linewidth]{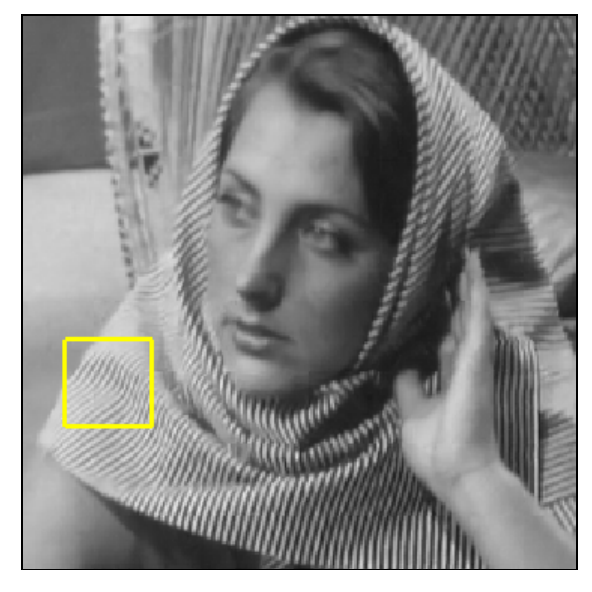}  &
            \includegraphics[width=1.0\linewidth]{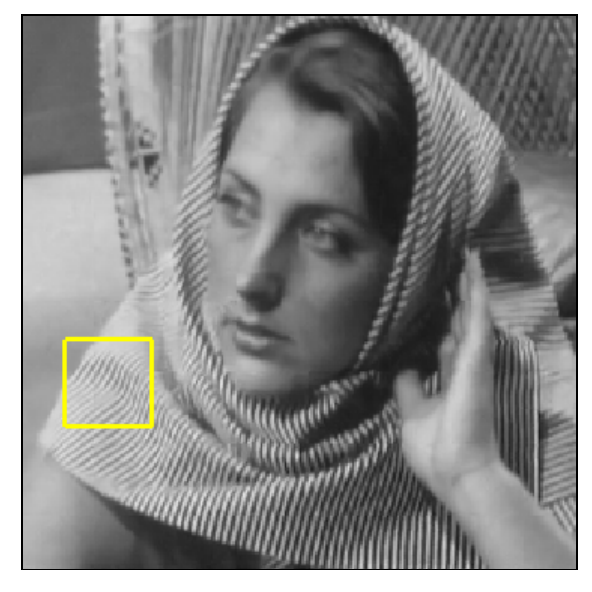} & 
            \includegraphics[width=1.0\linewidth]{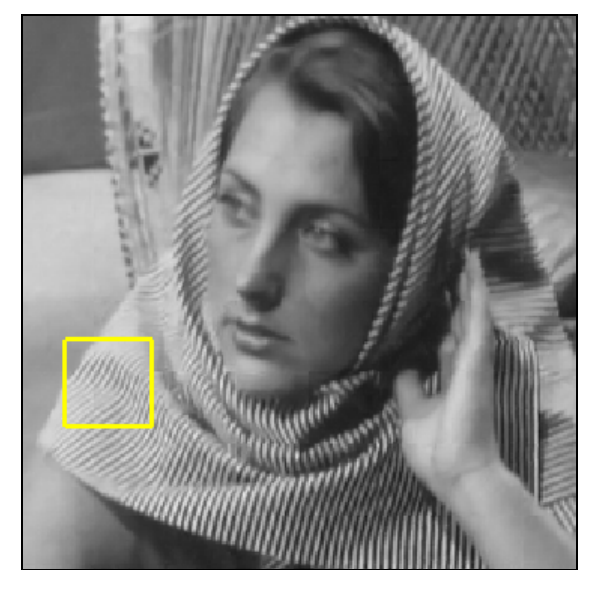} \\[-1pt]
            \includegraphics[width=1.0\linewidth]{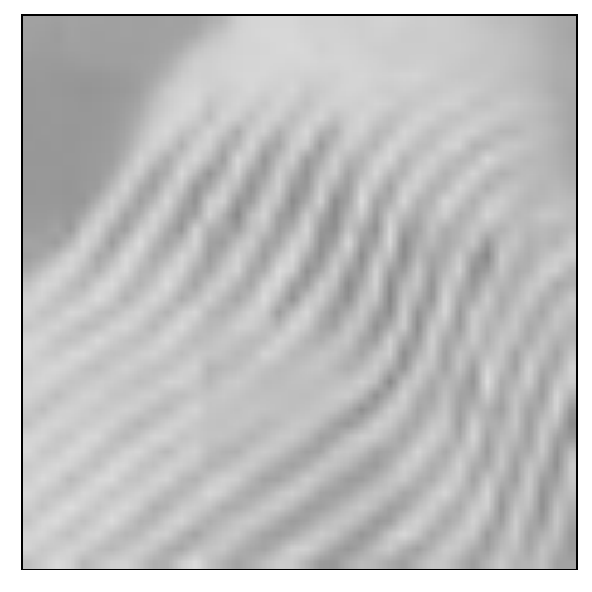}  &
            \includegraphics[width=1.0\linewidth]{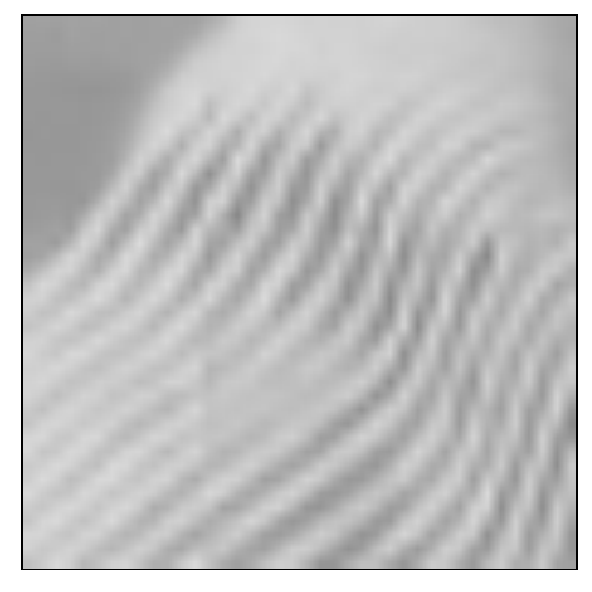}  &
            \includegraphics[width=1.0\linewidth]{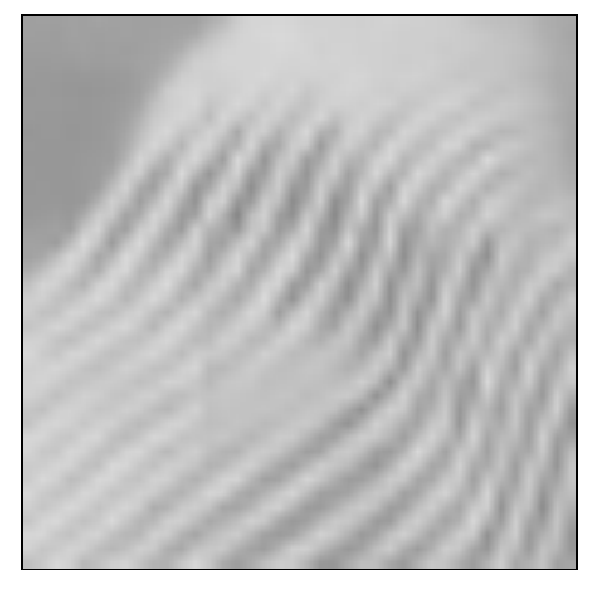} & 
            \includegraphics[width=1.0\linewidth]{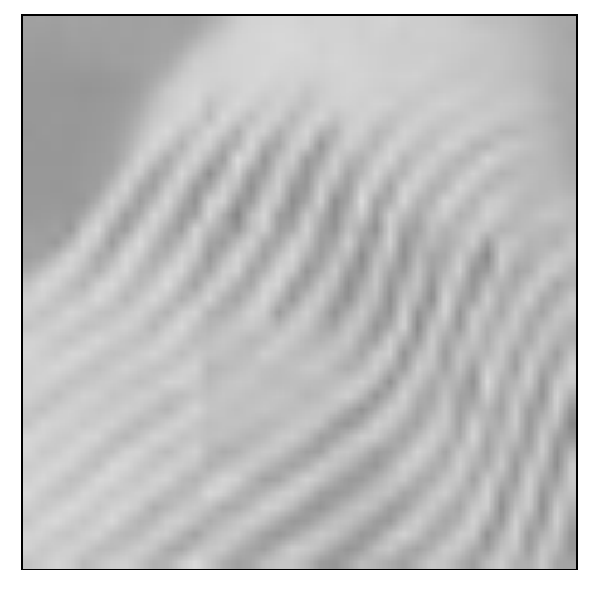} \\[-1pt]
            TF-ISTA-Net & \bf{RTF-ISTA-Net} & TF-FISTA-Net & RTF-FISTA-Net \\
            (34.07, 0.9637) & \textbf{(34.34, 0.9647)} & (34.12, 0.9641) & (33.92, 0.9615)
        \end{tabular} 
    \caption[]{ Examples of reconstructions achieved using various methods under comparison. The numbers indicate (PSNR in dB, SSIM). Unlike the benchmark methods (first row), the proposed methods (second row) do not have prominent artifacts. The texture pattern is also better preserved by the proposed techniques. RTF-ISTA-Net gave the best performance in this example.}
    \label{fig:example_recon_Set11}  
    \end{center}
    \end{figure*}

%------------------------------------------------------------------------

% ----------------------------------------
% ----------------------------------------
% 	      REFERENCES
% ----------------------------------------
% ----------------------------------------

\cleardoublepage
\bibliographystyle{ieeetr}
\bibliography{references}

\end{document}